\newcommand{\beq}{\begin{equation}}
\newcommand{\eeq}{\end{equation}}
\newcommand{\ber}{\begin{array}}
\newcommand{\eer}{\end{array}}
\newcommand{\del}{\partial}
\newcommand{\ena}{\end{eqnarray}}
\newcommand{\beqa}{\begin{eqnarray}}
\newcommand{\eeqa}{\end{eqnarray}}
\newcommand{\bea}{\begin{eqnarray}}
\newcommand{\eea}{\end{eqnarray}}
\makeatletter\@addtoreset {equation}{section}\makeatother
\newtheorem{theorem}{Theorem}
\newtheorem{lemma}{Lemma}
\theoremstyle{remark}
\newtheorem{rem}{Remark}
\renewcommand{\Im}{\operatorname{Im}}
\newcommand{\minPlusOne}[1]{
  \min\!\left(#1\right) + 1
}
\begin{document}

\title[Stationary states in the conformal flow]
{Stationary states of the cubic conformal flow on $\mathbb{S}^3$}

\author{Piotr Bizo\'n}
\address{Institute of Physics, Jagiellonian
University, Krak\'ow, Poland}
\email{bizon@th.if.uj.edu.pl}

\author{Dominika Hunik-Kostyra}
\address{Institute of Physics, Jagiellonian
University, Krak\'ow, Poland}
\email{dominika.hunik@uj.edu.pl}

\author{Dmitry Pelinovsky}
\address{Department of Mathematics, McMaster University, Hamilton, Ontario  L8S 4K1, Canada}
\email{dmpeli@math.mcmaster.ca}

\thanks{This research was supported by the Polish National Science Centre grant no. 2017/26/A/ST2/00530.}

\date{\today}

\begin{abstract}
We consider the resonant system of amplitude equations for the conformally invariant cubic wave equation
on the three-sphere. Using the local bifurcation theory, we characterize all stationary states
that bifurcate from the first two eigenmodes. Thanks to the variational formulation of the resonant system and
energy conservation, we also determine variational characterization and stability of the bifurcating states.
For the lowest eigenmode, we obtain two orbitally stable families of the bifurcating stationary states:
one is a constrained maximizer of energy and the other one is a constrained minimizer of the energy,
where the constraints are due to other conserved quantities of the resonant system.
For the second eigenmode, we obtain two constrained minimizers of the energy, which are also
orbitally stable in the time evolution. All other bifurcating states are saddle points of energy under these constraints
and their stability in the time evolution is unknown. 
\end{abstract}

\maketitle
\tableofcontents

\section{Introduction}

The conformally invariant cubic wave equation on $\mathbb{S}^3$ (the unit three-dimensional sphere)
is a toy model for studying the dynamics of resonant interactions between nonlinear waves
on a compact manifold. The long-time behavior of small solutions of this equation
is well approximated by solutions of an infinite dimensional time-averaged Hamiltonian system,
called the \emph{cubic conformal flow}, that was introduced and studied in \cite{bal}. In terms of complex amplitudes
$\{ \alpha_n(t) \}_{n \in \mathbb{N}}$, this system  takes the form
\beq
i(n+1) \frac{d \alpha_n}{d t} =
\sum\limits_{j=0}^\infty \sum_{k=0}^{n+j} S_{njk,n+j-k}\,\bar\alpha_j \alpha_k \alpha_{n+j-k}\,,
\label{flow}
\eeq
where $S_{njk,n+j-k} = \minPlusOne{n,j,k,n+j-k}$ are the interaction coefficients. The
cubic conformal flow (\ref{flow}) is the Hamiltonian system with the symplectic form
$\sum_{n=0}^{\infty} 2i (n+1) d \bar{\alpha}_n \wedge (-d \alpha_n)$ and
the conserved energy function
\begin{equation}
H(\alpha) = \sum\limits_{n=0}^{\infty}\sum\limits_{j=0}^{\infty}
\sum\limits_{k=0}^{n+j} S_{njk,n+j-k}\,\bar\alpha_n \bar \alpha_j \alpha_k \alpha_{n+j-k}.
\label{Hconf}
\end{equation}

The attention of \cite{bal} has been focused on understanding the patterns of energy transfer between the modes.
In particular, a three-dimensional invariant manifold was found on which the dynamics is Liouville-integrable
with exactly periodic energy flows. Of special interest are the stationary states  for which no transfer
of energy occurs. A wealth of explicit stationary states have been found in \cite{bal}, however a complete
classification of stationary states was deemed as an open problem.

The purpose of this paper is to study existence and stability of all stationary states
that bifurcate from the first two eigenmodes of the cubic conformal flow. Thanks to the Hamiltonian
formulation, we are able to give the variational characterization of the bifurcating families.
Among all the families, we identify several particular stationary states
which are orbitally stable in the time evolution: one is a constrained maximizer of energy
and the other ones are local constrained minimizers of the energy, where the constraints are induced by
other conserved quantities of the cubic conformal flow.

The constrained maximizer of energy can be normalized to the form
\beq
\label{ground-intro}
\alpha_n(t) = (1-p^2) p^n e^{-i t},
\eeq
where $p \in (0,1)$ is a parameter. This solution was labeled as the {\em ground state} in our previous work \cite{BHP18},
where we proved orbital stability of the ground state in spite of its degeneracy with respect to
parameter $p \in (0,1)$.

Two constrained minimizers of energy are given by the exact solutions:
\beq
\label{min-intro}
\alpha_n(t) = c \left[ (1-p^2) n - \frac{1}{2} (1 + 5 p^2 \pm \sqrt{1-14p^2 + p^4}) \right] p^n e^{-i \lambda t + i n \omega t},
\eeq
where $p \in (0,2-\sqrt{3})$ and $c \in (0,\infty)$ are parameters, whereas $(\lambda,\omega)$
are expressed by
\begin{equation}\label{omega-lambda-intro}
\lambda  =\frac{c^2}{6}\, \left(\frac{3-4p^2}{1-p^2}
\pm \frac{(3+4p^2) \sqrt{1-14p^2 + p^4}}{(1-p^2)^2}\right), \quad
\omega  = \frac{c^2}{12} \, \frac{1+p^2 \pm \sqrt{1-14p^2 + p^4}}{1-p^2}.
\end{equation}
The cutoff in the interval $(0,2-\sqrt{3})$ for $p$ ensures that $1-14p^2 + p^4 > 0$.
Since the constrained minimizers of energy are nondegenerate with respect to $(\lambda,\omega)$,
their orbital stability follows from the general stability theory \cite{GSS90}.

By using the local bifurcation methods of this paper, we are able to prove that
{\em the stationary states (\ref{ground-intro}) and (\ref{min-intro}) are respectively
maximizer and two minimizers of energy constrained by two other conserved quantities in the limit of small $p$.}
For the ground state (\ref{ground-intro}), we know from our previous work \cite{BHP18}
that it remains a global constrained maximizer of energy for any $p \in (0,1)$.
For the stationary states (\ref{min-intro}) and (\ref{omega-lambda-intro}), we have
checked numerically that they remain local constrained minimizers of energy
for any $p \in (0,2-\sqrt{3})$ and $c > 0$, however, we do not know if
any of them is a global constrained minimizer of energy.

In addition, we have proven {\em existence of another constrained minimizer of energy bifurcating from the second eigenmode}.
The new minimizer is nondegenerate with respect to $(\lambda,\omega)$, hence again its
orbital stability follows from the general stability theory in \cite{GSS90}.
However, we show numerically that this stationary state remains a local constrained minimizer only near the bifurcation
point and becomes a saddle point of energy far from the bifurcation point.

Bifurcation analysis of this paper for the first two eigenmodes suggests
existence of other constrained minimizers of energy bifurcating from other
eigenmodes. This poses an open problem of characterizing a global constrained
minimizer of energy for the cubic conformal flow (\ref{flow}). Another
open problem is to understand orbital stability of the saddle points
of energy, in particular, to investigate if other conserved quantities might
contribute to stabilization of saddle points of energy.

The cubic conformal flow \eqref{flow} shares many properties with a cubic resonant system for
the Gross-Pitaevskii equation in two dimensions \cite{GHT, BBCE}. In particular,
the stationary states for the lowest Landau level invariant subspace of the cubic resonant system
have been thoroughly studied in \cite{GGT} by using the bifurcation theory from a simple eigenvalue \cite{CR71}.
In comparison with Section 6.4 in \cite{GGT}, where certain symmetries were imposed to reduce
multiplicity of eigenvalues, we develop normal form theory for bifurcations
of all distinct families of stationary states from the double eigenvalue without imposing
any a priori symmetries.

Another case of a completely integrable resonant system with a wealth of stationary states
is the cubic Szeg\H{o} equation \cite{GG,GG2012}. Classification of stationary states and their stability
has been performed for the cubic and quadratic Szeg\H{o} equations in \cite{Poc} and \cite{Thir} respectively.

One more example of a complete classification of all travelling waves of finite energy
for a non-integrable case of the energy-critical half-wave map onto $\mathbb{S}^2$ is given
in \cite{LS}, where the spectrum of linearization at the travelling waves is studied
by using Jacobi operators and conformal transformations. The half-wave map was found to
be another integrable system with the Lax pair formulation \cite{GL}.
It is unclear in the present time if the cubic conformal flow on $\mathbb{S}^3$ is also
an integrable system with the Lax pair formulation. \\

{\bf Organization of the paper.} Symmetry, conserved quantities,
and some particular stationary states for the cubic conformal flow (\ref{flow})
are reviewed in Section \ref{sec-prel} based on the previous works \cite{bal,BHP18}.
Local bifurcation results including the normal form computations for the lowest eigenmode
are contained in Section 3. Variational characterization of the bifurcating families
from the lowest eigenmode including the proof of extremal
properties for the stationary states (\ref{ground-intro}) and (\ref{min-intro})
is given in Section 4. Similar bifurcation results and variational characterization
of the bifurcating states from the second eigenmode are obtained in Sections 5 and 6
respectively. Numerical results confirming local minimizing properties
of the stationary states (\ref{min-intro}) for all admissible values of $p$
are reported in Section 7.\\

{\bf Notations.} We denote the set of nonnegative integers by $\mathbb{N}$ and
the set of positive integers by $\mathbb{N}_+$.
A sequence $(\alpha_n)_{n\in \mathbb{N}}$ is denoted for short by $\alpha$.
The space of square-summable sequences on $\mathbb{N}$ is denoted by~$\ell^2(\mathbb{N})$.
It is equipped with the inner product $\langle \cdot, \cdot \rangle$ and the induced norm
$\| \cdot \|$. The weighted space $\ell^{2,2}(\mathbb{N})$ denotes the space
of squared integrable sequences with the weight $(1+n)^2$. We write $X \lesssim Y$
to state $X\leq C Y$ for some universal
(i.e., independent of other parameters) constant $C>0$.
Terms of the Taylor series in $(\epsilon,\mu)$
of the order $\mathcal{O}(\epsilon^p,\mu^p)$ are denoted by $\mathcal{O}(p)$.

\section{Preliminaries}
\label{sec-prel}

Here we recall from \cite{bal,BHP18} some relevant properties
of the cubic conformal flow (\ref{flow}) and its stationary states.
The cubic conformal flow (\ref{flow}) enjoys the following three one-parameter groups of symmetries:
 \begin{eqnarray}
 \mbox{Scaling:} \,&& \alpha_n(t) \rightarrow c \alpha_n(c^2 t),\label{symmscale}\\
 \mbox{Global phase shift:} \,&& \alpha_n(t) \rightarrow e^{i\theta} \alpha_n(t),\label{symmshift1}\\
 \mbox{Local phase shift:} \, &&\alpha_n(t) \rightarrow e^{i n \phi} \alpha_n(t),\label{symmshift2}
 \end{eqnarray}
where  $c$, $\theta$, and $\phi$ are real parameters. By the Noether theorem, the latter two symmetries give rise to two  conserved quantities:
\begin{eqnarray}
Q(\alpha) & =\sum\limits_{n=0}^{\infty} (n+1) |\alpha_n|^2,\label{charge}\\
E(\alpha) & =\sum\limits_{n=0}^{\infty} (n+1)^2 |\alpha_n|^2\,. \label{lenergy}
\end{eqnarray}
It is proven in Theorem 1.2 of \cite{BHP18} that $H(\alpha) \leq Q(\alpha)^2$ and the equality is achieved if
and only if $\alpha = c p^n$ for some $c, p \in \mathbb{C}$ with $|p| < 1$.

As is shown in Appendix A of \cite{BHP18} (see also \cite{BBE} for generalizations),
there exists another conserved quantity of the conformal flow (\ref{flow}) in the form
\begin{equation}
\label{Z}
Z(\alpha) = \sum_{n=0}^{\infty} (n+1)(n+2) \bar{\alpha}_{n+1} \alpha_n.
\end{equation}
This quantity is related to another one-parameter group of symmetries:
\begin{eqnarray}
\alpha_n(t) \rightarrow \left[ e^{s D} \alpha(t) \right]_n := \sum_{k=0}^{\infty} \frac{s^k}{k!} \left[ D^k \alpha(t) \right]_n,
\label{symmetry}
\end{eqnarray}
where $s \in \mathbb{R}$ is arbitrary and $D$ is a difference operator given by
\begin{equation}
\label{operator-D}
\left[ D \alpha \right]_n := n \alpha_{n-1} - (n+2) \alpha_{n+1}.
\end{equation}
Note that the difference operator $D$ is obtained from $D f = \{ 2 i (Z-\bar{Z}), f \}$ acting
on any function $f(\alpha,\bar{\alpha})$ on phase space, where
the Poisson bracket is defined by
$$
\{ f, g \} := \sum_{k=0}^{\infty} \frac{1}{2 i (k+1)} \left( \frac{\partial f}{\partial \bar{\alpha}_k} \frac{\partial g}{\partial \alpha_k}
- \frac{\partial f}{\partial \alpha_k} \frac{\partial g}{\partial \bar{\alpha}_k}  \right),
$$
thanks to the symplectic structure $\sum_{n=0}^{\infty} 2i (n+1) d \bar{\alpha}_n \wedge (-d \alpha_n)$
of the conformal flow (\ref{flow}). The factor $(2i)$ in the definition of $D$ is chosen for convenience.
There exists another one-parameter group of symmetries obtained from $\tilde{D} f = \{2(Z + \bar{Z}), f\}$,
which is not going to be used in this paper.

Stationary states of the cubic conformal flow (\ref{flow}) are obtained by the separation of variables
 \begin{equation}\label{stac}
  \alpha_n(t) = A_n e^{-i \lambda t + i n \omega t},
\end{equation}
where the complex amplitudes $A_n$ are time-independent, while parameters $\lambda$ and $\omega$ are real.
Substituting \eqref{stac} into \eqref{flow}, we get a nonlinear system of algebraic equations
for the amplitudes:
\begin{equation}\label{sys}
 (n+1) (\lambda-n \omega) A_n=\sum\limits_{j=0}^{\infty} \sum\limits_{k=0}^{n+j} S_{n j k,n+j-k} \,\bar A_j A_k A_{n+j-k}\,.
\end{equation}
Some particular solutions $\{ A_n \}_{n \in \mathbb{N}}$ of
the stationary system (\ref{sys}) are reviewed below.

\subsection{Single-mode states}

The simplest solutions of  (\ref{sys}) are the \emph{single-mode states} given by
 \begin{equation}\label{1mode}
A_n = c\, \delta_{nN},
 \end{equation}
with $\lambda - N \omega = |c|^2$, where $c\in \mathbb{C}$ is an amplitude and $N \in \mathbb{N}$ is fixed.
Thanks to the transformations (\ref{symmscale}) and (\ref{symmshift1}),
we can set $c = 1$ so that $\lambda - N \omega = 1$.

\subsection{Invariant subspace of stationary states}

As is shown in \cite{bal}, system (\ref{sys}) can be reduced to three
nonlinear equations with the substitution
\begin{equation}
\label{sol-invariant}
A_n = (\beta + \gamma n) p^n,
\end{equation}
where $p, \beta, \gamma$ are complex parameters satisfying the nonlinear algebraic system
\begin{align}
&\frac{- \omega p}{(1+y)^2}=\frac{p}{6}\left(2y |\gamma|^2 + \bar{\beta} \gamma \right),\label{eqp}\\
&\frac{\lambda \gamma}{(1+y)^2}=\frac{\gamma}{6}\left(5|\beta|^2+(18y^2+4y) |\gamma|^2+
(6y-1)\bar{\beta} \gamma + 10 y \bar{\gamma} \beta \right),\label{eqr2}\\
&\frac{\lambda \beta}{(1+y)^2}= \beta \left(|\beta|^2+(6y^2+2y) |\gamma|^2 +2 y \beta \bar{\gamma} \right)+
\gamma \left(2 y |\beta|^2+(4y+2)y^2 |\gamma|^2 +y^2 \bar{\beta} \gamma \right), \label{eqr1}
\end{align}
and we have introduced $y := |p|^2/(1-|p|^2)$ so that $1/(1+y) = 1 - |p|^2$.
It is clear from the decay of the sequence (\ref{sol-invariant}) as $n \to \infty$ that $p$ must be restricted
to the unit disk: $|p|< 1$.

The analysis of solutions to system (\ref{eqp})--(\ref{eqr1})
simplifies if we assume that the parameters $p$, $\beta$, $\gamma$ are real-valued,
which can be done without loss of generality. To see this, note that $p$ can be made
real-valued by the transformation \eqref{symmshift2}. If $\beta = 0$,
then equation (\ref{eqr1}) implies $\gamma = 0$ and hence no nontrivial solutions exist.
If $\beta \neq 0$, it can be made real by the transformation (\ref{symmshift1}).
If $\beta$ is real, then $\gamma$ is real, because equation \eqref{eqp} implies that $\Im(\bar \beta \gamma)=0$.
Thus, it suffices to consider system (\ref{eqp})--(\ref{eqr1}) with $p \in (0,1)$ and  $\beta, \gamma \in \mathbb{R}$.

There exist exactly four families of solutions to the system (\ref{eqp})--(\ref{eqr1}).

\subsubsection{Ground state}

Equation (\ref{eqr2}) is satisfied if $\gamma = 0$. This implies that $\omega = 0$
from equation (\ref{eqp}) and $\lambda (1-p^2)^2 = \beta^2$ from equation (\ref{eqr1})
with $p \in (0,1)$ and $\beta \in \mathbb{R}$. Parameterizing
$\beta = c$ and bringing all together yield the geometric sequence
\begin{equation}
 \label{r1}
A_n = c\, p^n, \quad \lambda = \frac{c^2}{(1-p^2)^2}.
\end{equation}
As $p \to 0$, this solution tends to
the $N=0$ single-mode state (\ref{1mode}). Thanks to the transformation
(\ref{symmscale}), one can set $c = 1-p^2$, then $\lambda = 1$ yields the normalized state (\ref{ground-intro}).

As is explained above, the geometric sequence (\ref{r1}) is a  maximizer of $H(\alpha)$ for fixed $Q(\alpha)$,
therefore, we call it the {\em ground state}.
Nonlinear stability of the ground state has been proven in \cite{BHP18},
where the degeneracy due to the  parameter $p$ has been
controlled with the use of conserved quantities $E(\alpha)$ and $Z(\alpha)$.

\subsubsection{Twisted state}

If $\gamma \neq 0$ but $\omega = 0$, equation (\ref{eqp}) is satisfied with
$\gamma = -\beta/(2y)$. Then, equation (\ref{eqr2}) yields $\lambda = \beta^2 (1+y)^3/(4 y)$,
whereas equation (\ref{eqr1}) is identically satisfied with $p \in (0,1)$ and $\beta \in \mathbb{R}$.
Parameterizing $\beta = -2cp$ and bringing all together yield the \emph{twisted state}
\begin{equation}
\label{twisted-mode}
A_n = c p^{n-1} ((1-p^2)n - 2 p^2), \quad \lambda = \frac{c^2}{(1-p^2)^2}.
\end{equation}
As $p \to 0$, this solution
tends  to the $N=1$ single-mode state (\ref{1mode}).
Thanks to transformation (\ref{symmscale}), one can set $c = 1-p^2$, then $\lambda = 1$.

\subsubsection{Pair of stationary states}

If $\gamma \neq 0$ and $\omega \neq 0$, then $\lambda$ can be eliminated from system
(\ref{eqr2}) and (\ref{eqr1}), which results in  the algebraic equation
\begin{equation}\label{r1-r2}
(2\gamma y +\beta) (12 \gamma^2 y^2 + 6 \beta \gamma y +6 \gamma^2 y + \beta^2 +\beta \gamma)=0.
\end{equation}
The first factor corresponds to the twisted state (\ref{twisted-mode}).
Computing $\gamma$ from the quadratic equation in the second factor yield two roots:
\begin{equation}
\label{branch-solutions-1a}
\gamma = -\frac{(1-p^2) \beta}{12 p^2 (1+p^2)} \left[ 1 + 5 p^2 \mp \sqrt{1 - 14 p^2 + p^4} \right].
\end{equation}
The real roots exist for $1 - 14 p^2 + p^4 \geq 0$ which is true for $p^2 \leq 7 - 4 \sqrt{3}$,
or equivalently, for $p \in [0,2 - \sqrt{3}]$. The pair of solutions
can be parameterized as follows:
\begin{equation}\label{sol75}
 \gamma_{\pm} =  c (1-p^2),\quad \beta_{\pm} = - \frac{1}{2} c(1+5 p^{2} \pm \sqrt{1-14 p^2+p^4}),
 \end{equation}
where $c \in \mathbb{R}$ is arbitrary. Using equations \eqref{eqp} and \eqref{eqr2} again, we get
 \begin{equation}\label{omega-lambda}
 \omega_{\pm}  = \frac{c^2}{12} \, \frac{1+p^2\pm \sqrt{1-14 p^2+p^4}}{1-p^2}, \;
 \lambda_{\pm}  =\frac{c^2}{6}\, \left(\frac{3-4p^2}{1-p^2}
\pm \frac{(3+4p^2)\sqrt{1-14 p^2+p^4}}{(1-p^2)^2}\right)\,.
\end{equation}
As $p \to 0$, the branch with the upper sign converges to the $N=0$ single-mode state
\begin{equation}\label{limit-plus}
p \to 0 : \quad A_n \to - c\, \delta_{n 0},\quad \lambda_+ \to c^2, \quad \omega_+ \to \frac{c^2}{6},
\end{equation}
while the branch with the lower sign converges to the $N=1$ single-mode state
\begin{equation}\label{limit-minus}
p \to 0 : \quad A_n \to \tilde{c} \delta_{n 1}, \quad \lambda_- \to \frac{5}{3} \tilde{c}^2, \quad \omega_- \to  \frac{2}{3} \tilde{c}^2,
\end{equation}
where $\tilde{c} = c p = \mathcal{O}(1)$ has been rescaled as $p \to 0$.
The solutions (\ref{sol-invariant}), (\ref{sol75}), and (\ref{omega-lambda})
have been cast as the stationary states (\ref{min-intro}) and (\ref{omega-lambda-intro}).
By using our bifurcation analysis, we will prove that these stationary states
are local minimizers of $H(\alpha)$ for fixed $Q(\alpha)$ and $E(\alpha)$ for small positive $p$.
We also show numerically that this conclusion remains true for every $p \in (0,2-\sqrt{3})$.

We summarize that the four explicit families (\ref{r1}), (\ref{twisted-mode}), and (\ref{omega-lambda})
form a complete set of stationary states given by (\ref{sol-invariant}).

\subsection{Other stationary states}

Other stationary states were constructed in \cite{bal} from
the generating function $u(t,z)$ given by the power series expansion:
\beq
u(t,z)=\sum_{n=0}^\infty
\alpha_n(t) z^n\,.
\label{defu}
\eeq
The conformal flow (\ref{flow}) can be rewritten for $u(t,z)$ as
the integro-differential equation
\beq
i\del_t\del_z (z u)=\frac1{2\pi i}\oint\limits_{|\zeta|=1} \frac{d\zeta}{\zeta}\, \overline{u(t,\zeta)}
\left(\frac{\zeta u(t,\zeta)-z u(t,z)}{\zeta-z}\right)^2.
\label{GGgen}
\eeq

The stationary solutions expressed by (\ref{stac}) yield the generating function in the form
\beq
u(t,z) = U\left( z e^{i \omega t} \right) e^{-i \lambda t},
\label{stat-generating}
\eeq
where $U$ is a function of one variable given by $U(z) = \sum_{n=0}^{\infty} A_n z^n$.
The family of stationary solutions
expressed by (\ref{sol-invariant}) is generated by the function
\beq
U(z) = \frac{\beta}{1-pz} + \frac{\gamma p z}{(1-pz)^2}.
\label{confansatzu}
\eeq
Additionally, any finite Blaschke product
\begin{equation}\label{blaschke}
U(z) = \prod\limits_{k=1}^N \frac{z-\bar p_k}{1-p_k z}
\end{equation}
yields a stationary state with $\lambda = 1$ and $\omega = 0$
for $p_1,\ldots, p_N \in \mathbb{C}$. If $N = 1$ and $p_1 = p \in (0,1)$,
the generating function (\ref{blaschke}) yields a stationary solution of
the system (\ref{sys}) in the form
\begin{equation}
\label{blaschke-1}
A_0 = -p, \quad A_{n \geq 1} = p^{-1} (1-p^2) p^n, \quad \lambda = 1, \quad \omega = 0.
\end{equation}
This solution can be viewed as a continuation of the $N=1$ single-mode state (\ref{1mode}) in $p \neq 0$.

Another family of stationary states is generated by the function
\begin{equation}\label{N2}
U(z) = \frac{c z^{N}}{1-p^{N+1} z^{N+1}},\qquad \lambda=\frac{c^2}{(1-p^{2N+2})^2}, \quad \omega = 0,
\end{equation}
where $N \in \mathbb{N}$, $p \in (0,1)$, and $c \in \mathbb{R}$ is arbitrary.
When $N = 0$, the function (\ref{N2}) generates the geometric sequence
(\ref{r1}). When $N = 1$, the function (\ref{N2}) generates another stationary solution
\begin{equation}
\label{N2-1}
A_n = c \left\{ \begin{array}{lr} p^{n-1}, & n \;\; {\rm odd}, \\ 0, & n \;\; {\rm even}, \end{array} \right.
\quad \lambda = \frac{c^2}{(1-p^4)^2}, \quad \omega = 0,
\end{equation}
which is a continuation of the $N=1$ single-mode state (\ref{1mode}) in $p \neq 0$.
Thanks to transformation (\ref{symmscale}), one can set $c = 1-p^4$, then $\lambda = 1$.
The new solutions (\ref{blaschke-1}) and (\ref{N2-1}) appear in the local bifurcation analysis
from the $N=1$ single-mode state (\ref{1mode}).

\section{Bifurcations from the lowest eigenmode}

We restrict our attention to the real-valued solutions of the stationary equations (\ref{sys}), which satisfy the system of algebraic equations
\begin{equation}\label{sys-real}
 (n+1) (\lambda-n \omega) A_n=\sum\limits_{j=0}^{\infty} \sum\limits_{k=0}^{n+j} S_{n j k,n+j-k} \,A_j A_k A_{n+j-k}\,.
\end{equation}
Here we study bifurcations of stationary states from the lowest eigenmode given by (\ref{1mode}) with $N = 0$.
Without loss of generality, the scaling transformation (\ref{symmscale}) yields $c = 1$ and $\lambda = 1$.
By setting $A_n = \delta_{n 0} + a_{n}$ with real-valued perturbation $a$, we rewrite
the system (\ref{sys-real}) with $\lambda = 1$ in the perturbative form
\begin{equation}\label{sys-pert}
L(\omega) a + N(a) = 0,
\end{equation}
where $L(\omega)$ is a diagonal operator with entries
\begin{equation}
\label{L-plus}
\left[ L(\omega) \right]_{nn} = \left\{ \begin{array}{ll} 2, & n = 0, \\
n(n+1) \omega - n + 1, & n \geq 1, \end{array} \right.
\end{equation}
and $N(a)$ includes quadratic and cubic nonlinear terms:
\begin{equation}
\label{N-terms}
[N(a)]_n = 2 \sum_{j =0}^{\infty} a_j a_{n+j} + \sum_{k=0}^n a_k a_{n-k} +
\sum\limits_{j=0}^{\infty} \sum\limits_{k=0}^{n+j} S_{njk,n+j-k} \,a_j a_k a_{n+j-k}\,.
\end{equation}
We have the following result on the nonlinear terms.

\begin{lemma}
\label{lem-nonlinear}
Fix an integer $m \geq 2$. If
\begin{equation}
\label{amplitude-zero}
a_{m \ell +1} = a_{m \ell +2} = \dots = a_{m \ell +m-1} = 0, \quad \mbox{for every \;\;} \ell \in \mathbb{N},
\end{equation}
then
\begin{equation}
\label{nonlinear-zero}
[N(a)]_{m \ell +1} = [N(a)]_{m \ell +2} = \dots = [N(a)]_{m\ell + m-1} = 0, \quad \mbox{for every \;\;} \ell \in \mathbb{N}.
\end{equation}
\end{lemma}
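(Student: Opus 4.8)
The plan is to track the congruence class mod $m$ of indices through each of the three sums appearing in $[N(a)]_n$ and show that, under the vanishing hypothesis \eqref{amplitude-zero}, every term contributing to $[N(a)]_n$ for $n \equiv r \pmod m$ with $1 \le r \le m-1$ must contain at least one factor $a_i$ with $i$ in a ``forbidden'' residue class, hence vanishes. Write $n = m\ell + r$ with $1 \le r \le m-1$ fixed. The key elementary observation is that the index pattern in each summand is \emph{additive}: in $\sum_j a_j a_{n+j}$ the two indices are $j$ and $n+j$, whose residues mod $m$ are $j \bmod m$ and $(j+r)\bmod m$; in $\sum_{k=0}^n a_k a_{n-k}$ the residues are $k \bmod m$ and $(r-k)\bmod m$; and in the cubic sum $\sum_{j,k} S_{njk,n+j-k}\, a_j a_k a_{n+j-k}$ the three indices $j$, $k$, $n+j-k$ have residues summing to $2j + r \equiv r \pmod m$ only after cancellation — more precisely the multiset of residues $\{j, k, n+j-k\} \bmod m$ satisfies $j + (n+j-k) - k \equiv \dots$; what actually matters is that $k + (n+j-k) = n+j$, so the residues of the last two indices sum to $(j + r)\bmod m$.

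The main step is a short case analysis showing a nonzero term is impossible. By the hypothesis, a factor $a_i$ is allowed to be nonzero only when $i \equiv 0 \pmod m$. Suppose all factors in a given summand are nonzero; then every index is $\equiv 0 \pmod m$. In the first sum this forces $j \equiv 0$ and $n+j \equiv 0$, hence $n \equiv 0 \pmod m$, contradicting $1 \le r \le m-1$. In the second sum, $k \equiv 0$ and $n-k \equiv 0$ give $n \equiv 0 \pmod m$, the same contradiction. In the cubic sum, $j \equiv 0$, $k \equiv 0$, and $n+j-k \equiv 0$ force $n \equiv 0 \pmod m$ as well. In every case we reach $n \equiv 0 \pmod m$, which is false. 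Therefore every summand in $[N(a)]_{m\ell+r}$ has a vanishing factor, so $[N(a)]_{m\ell+r} = 0$ for all $1 \le r \le m-1$ and all $\ell \in \mathbb{N}$, which is exactly \eqref{nonlinear-zero}.

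The one point that needs a little care — and which I expect to be the only genuine obstacle — is the bookkeeping of the summation ranges, specifically making sure that the indices I am assigning residues to are the ones that actually appear as subscripts of $a$, and that no index is implicitly forced out of range (e.g. $n+j-k$ could in principle be negative, but the constraint $0 \le k \le n+j$ keeps it in $\mathbb{N}$, so the residue assignment is legitimate). I would also remark that the quadratic terms $2\sum_j a_j a_{n+j}$ need no restriction on $j$ from above, so the argument there is if anything simpler. No estimate or analytic input is required; the proof is purely a counting argument on residues mod $m$, and it can be written in a few lines once the three sums are displayed side by side.
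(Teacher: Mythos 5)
Your proposal is correct and follows essentially the same route as the paper's proof: both reduce to the observation that under \eqref{amplitude-zero} a factor $a_i$ can be nonzero only when $i\equiv 0 \pmod m$, and then check term by term that the index patterns $(j,\,n+j)$, $(k,\,n-k)$, and $(j,\,k,\,n+j-k)$ each force $n\equiv 0 \pmod m$ for a nonvanishing summand, contradicting $n\equiv r$ with $1\le r\le m-1$. The digression in your middle paragraph about residues ``summing to $2j+r$'' is unnecessary and slightly muddled, but the final case analysis is the correct and complete argument.
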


\begin{proof}
Let us inspect $[N(a)]_n$ in (\ref{N-terms}) for $n = m \ell + \imath$ with $\ell \in \mathbb{N}$ and
$\imath \in \{1,2,\dots,m-1 \}$. Since $a_j \neq 0$ only if $j$ is multiple of $m$, then $a_{n+j} = 0$
in the first term of (\ref{N-terms}) for all $n$ which are not a multiple of $m$.
Similarly, since $a_k \neq 0$ only if $k$ is multiple of $m$, then $a_{n-k} = 0$
in the second term of (\ref{N-terms}).
Finally, since $a_j \neq 0$ and $a_k \neq 0$ only if $j$ and $k$ are multiple of $m$, then $a_{n+j-k} = 0$
in the third term of (\ref{N-terms}).
Hence, all terms of (\ref{N-terms}) are identically zero for $n = m \ell + \imath$ with $\ell \in \mathbb{N}$ and
$\imath \in \{1,2,\dots,m-1 \}$.
\end{proof}

Bifurcations from the lowest eigenmode are identified by zero eigenvalues
of the diagonal operator $L(\omega)$.

\begin{lemma}
\label{lem-bifurcations}
There exists a sequence of bifurcations at $\omega \in \{ \omega_m \}_{m \in \mathbb{N}_+}$
with
\begin{equation}
\label{bif-points}
\omega_m = \frac{m-1}{m(m+1)}, \quad m \in \mathbb{N}_+,
\end{equation}
where all bifurcation points are simple except for $\omega_2 = \omega_3 = 1/6$.
\end{lemma}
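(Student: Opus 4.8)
The plan is to use the criterion already stated before the lemma: a bifurcation from the trivial branch $a=0$ of the nonlinear equation (\ref{sys-pert}) can occur only where the linearization $L(\omega)$ has a zero eigenvalue, and the multiplicity of the bifurcation equals the dimension of $\ker L(\omega)$. Since $L(\omega)$ is the diagonal operator (\ref{L-plus}), this is purely a matter of finding and counting the vanishing diagonal entries. The implicit function theorem justifies the criterion in the infinite-dimensional setting with only mild care: for any fixed $\omega$ the entries $n(n+1)\omega-n+1$ tend to $+\infty$ (or, at $\omega=0$, to $-\infty$) as $n\to\infty$, so only finitely many of them can be small, and $L(\omega)$ is boundedly invertible on $\ell^{2,2}(\mathbb{N})$ precisely when none of them is zero.

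First I would read off the zeros of the diagonal. The $n=0$ entry equals $2$ and never vanishes, so the zeroth mode contributes nothing. For $n\ge 1$ the entry $n(n+1)\omega-n+1$ vanishes if and only if $\omega=\omega_n:=(n-1)/[n(n+1)]$, which is exactly (\ref{bif-points}) with $m=n$; this yields the sequence $\{\omega_m\}_{m\in\mathbb{N}_+}$, with $\omega_1=0$ and $\omega_m\downarrow 0$ for large $m$. It remains to decide, for each value $\omega_m$, how many diagonal entries vanish, i.e. whether $\omega_m=\omega_{m'}$ admits a solution with $m'\ne m$ in $\mathbb{N}_+$.

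To settle that I would simply clear denominators in $(m-1)/[m(m+1)]=(m'-1)/[m'(m'+1)]$: the cross terms $mm'$ cancel, and dividing by the nonzero factor $m'-m$ leaves $mm'=m+m'+1$, that is $(m-1)(m'-1)=2$. The only factorization of $2$ into positive integers is $1\cdot 2$, so the unique coincidence is $\{m,m'\}=\{2,3\}$, giving $\omega_2=\omega_3=1/6$; equivalently, $f(x):=(x-1)/[x(x+1)]$ is strictly increasing on $(0,1+\sqrt2)$ and strictly decreasing on $(1+\sqrt2,\infty)$, and the two integers straddling its maximum happen to satisfy $f(2)=f(3)=1/6$. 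Therefore $\ker L(\omega_m)$ is one-dimensional for every $m\notin\{2,3\}$ and two-dimensional for $\omega=1/6$, which is the claim.

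Finally, for honest bifurcation of nontrivial branches rather than mere singularity of the linearization, the simple points $\omega_m$ are covered by the Crandall--Rabinowitz theorem, the transversality condition being immediate since $\partial_\omega[L(\omega)]_{mm}=m(m+1)\ne0$; the double point $\omega_2=\omega_3=1/6$ falls outside its scope and is handled by the Lyapunov--Schmidt reduction and normal-form computation performed in the rest of this section. Accordingly, the only genuinely delicate part is the analysis at the double eigenvalue $1/6$, the rest of the lemma being the elementary arithmetic above.
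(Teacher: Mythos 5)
Your proposal is correct and follows essentially the same route as the paper: read off the vanishing diagonal entries of $L(\omega)$ to get $\omega_m=(m-1)/[m(m+1)]$, then reduce the coincidence condition $\omega_m=\omega_{m'}$ to $mm'=m+m'+1$ and observe that $\{2,3\}$ is the only solution pair. Your factorization $(m-1)(m'-1)=2$ is a slightly cleaner way to justify that last arithmetic step than the paper's bare assertion, but the argument is the same.
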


\begin{proof}
The sequence of values (\ref{bif-points}) yield zero diagonal entries in (\ref{L-plus}).
To study if the bifurcation points are simple, we consider solutions of
$\omega_n = \omega_m$ for $n \neq m$. This equation is equivalent to
$mn = m+n+1$, which has only two solutions $(m,n) \in \{ (2,3); (3,2) \}$.
Therefore, all bifurcation points are simple except for the double point
$\omega_2 = \omega_3 = 1/6$.
\end{proof}

The standard Crandall--Rabinowitz theory \cite{CR71}
can be applied to study bifurcation from the simple zero eigenvalue.

\begin{theorem}
Fix $m = 1$ or an integer $m \geq 4$. There exists a unique branch
of solutions $(\omega,A) \in \mathbb{R} \times \ell^2(\mathbb{N})$
to system (\ref{sys-real}) with $\lambda = 1$, which can be parameterized by
small $\epsilon$ such that $(\omega,A)$ is smooth in $\epsilon$ and
\begin{equation}
|\omega - \omega_m| + \sup_{n \in \mathbb{N}} | A_n - \delta_{n 0} - \epsilon \delta_{n m} | \lesssim \epsilon^2.
\label{simple-bif}
\end{equation}
\label{theorem-simple}
\end{theorem}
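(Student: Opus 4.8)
The plan is to apply the Crandall--Rabinowitz bifurcation theorem from a simple eigenvalue \cite{CR71} to the operator equation (\ref{sys-pert}). First I would set up the functional-analytic framework: view $F(\omega,a) := L(\omega) a + N(a)$ as a smooth map from $\mathbb{R} \times \ell^{2,2}(\mathbb{N})$ into $\ell^2(\mathbb{N})$ (the weight-two space is needed so that the diagonal operator $L(\omega)$, whose entries grow quadratically in $n$, maps boundedly into $\ell^2$, and so that the quadratic and cubic convolution-type sums in (\ref{N-terms}) are well defined and smooth). By Lemma \ref{lem-bifurcations}, for $m = 1$ or $m \geq 4$ the value $\omega = \omega_m$ gives a one-dimensional kernel of $L(\omega_m)$ spanned by $e_m$ (the $m$-th unit sequence), since the only coincidences among the $\omega_n$ occur at $n \in \{2,3\}$; the range of $L(\omega_m)$ is the closed codimension-one subspace $\{ b \in \ell^2 : b_m = 0 \}$, and $L(\omega_m)$ is Fredholm of index zero. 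One checks the transversality (non-degeneracy) condition: $\partial_\omega L(\omega) e_m = m(m+1) e_m$ at $\omega = \omega_m$, which does not lie in the range of $L(\omega_m)$ since its $m$-th component $m(m+1) \neq 0$. These are exactly the hypotheses of the Crandall--Rabinowitz theorem.

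Next I would invoke the theorem to conclude that the solution set of $F(\omega,a) = 0$ near $(\omega_m, 0)$ consists of the trivial branch $a = 0$ together with a unique nontrivial branch that can be parameterized smoothly by $\epsilon$ near $\epsilon = 0$ as $\omega = \omega(\epsilon)$, $a = \epsilon e_m + \epsilon\, \psi(\epsilon)$, where $\omega(0) = \omega_m$, $\psi(0) = 0$, and $\psi(\epsilon) \in \ell^{2,2}(\mathbb{N})$ lies in a complement of $\mathrm{span}\{e_m\}$. Translating back via $A_n = \delta_{n0} + a_n$ gives $A_n = \delta_{n0} + \epsilon\,\delta_{nm} + \epsilon\,\psi_n(\epsilon)$, and smoothness of $(\omega,\psi)$ in $\epsilon$ with $\psi(0)=0$ yields $|\omega - \omega_m| \lesssim \epsilon^2$ and $\|\psi(\epsilon)\|_{\ell^{2,2}} \lesssim \epsilon$; since the sup-norm is controlled by the $\ell^{2,2}$-norm, we obtain $\sup_n |A_n - \delta_{n0} - \epsilon\,\delta_{nm}| = \epsilon\, \sup_n |\psi_n(\epsilon)| \lesssim \epsilon^2$, which is precisely the estimate (\ref{simple-bif}).

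The one point requiring a little care -- and what I expect to be the main (minor) obstacle -- is verifying that $N : \ell^{2,2}(\mathbb{N}) \to \ell^2(\mathbb{N})$ (or into $\ell^{2,2}$) is smooth, i.e. that the trilinear and bilinear forms defined by the sums in (\ref{N-terms}) together with the interaction coefficients $S_{njk,n+j-k} = \min(n,j,k,n+j-k)+1$ are bounded multilinear maps on the weighted space. The bound $S_{njk,n+j-k} \le \min(n,j,k,n+j-k)+1$ is the key: it lets one absorb the weight $(1+n)^2$ in the target against the weights carried by two of the three factors (a standard argument, carried out in \cite{bal} and \cite{BHP18}, using $\min(n,j,k,n+j-k)+1 \lesssim (1+j)(1+k)/(1+n)$ or a similar redistribution on the index simplex), so that $N$ maps $\ell^{2,2}$ to $\ell^{2,2}$ and is in fact a polynomial, hence real-analytic. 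Once this mapping property is in hand, everything else is the direct, routine application of \cite{CR71}. I would state the multilinear boundedness as a lemma (citing \cite{bal,BHP18}) and keep its verification brief.
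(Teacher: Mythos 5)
Your proposal is correct and follows essentially the same route as the paper: the paper's proof is exactly the Lyapunov--Schmidt reduction plus implicit-function-theorem argument that underlies the Crandall--Rabinowitz theorem (solving the $n\neq m$ equations for $b(\epsilon,\Omega)$ with $\| b\|_{\ell^2}\lesssim \epsilon^2$, then the $n=m$ bifurcation equation for $\Omega$ via $\partial_\Omega G=m(m+1)\neq 0$, which is precisely your transversality condition), set up from $\ell^{2,2}(\mathbb{N})$ into $\ell^2(\mathbb{N})$ just as you describe. Your attention to the boundedness of the multilinear forms built from $S_{njk,n+j-k}$ addresses a point the paper simply asserts (``$F$ is smooth in its variables''), so nothing essential is missing.
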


\begin{proof}
Let us consider the decomposition:
\begin{equation}
\label{decomposition-simple}
\omega = \omega_m + \Omega, \quad a_n = \epsilon \delta_{n m} + b_n, \quad n \in \mathbb{N},
\end{equation}
where $\omega_m$ is defined by (\ref{bif-points}), $\epsilon$ is arbitrary and $b_m = 0$ is set from
the orthogonality condition $\langle b, e_m \rangle = 0$.
Let $L_* = L(\omega_m)$. The system (\ref{sys-pert}) is decomposed into the
invertible part
\begin{equation}
\label{sys-invertible}
[F(\epsilon,\Omega;b)]_n := \left[ L_* \right]_{nn} b_n + n(n+1) \Omega b_n + \left[ N(\epsilon e_m + b) \right]_n = 0, \quad n \neq m,
\end{equation}
and the bifurcation equation
\begin{equation}
\label{sys-bifurcation}
G(\epsilon,\Omega;b) := m(m+1) \Omega + \epsilon^{-1} \left[ N(\epsilon e_m + b) \right]_m = 0.
\end{equation}
Since
\begin{equation}
\label{L-star}
\left[ L_* \right]_{nn} = \left\{ \begin{array}{ll} 2, & n = 0, \\
\frac{(n-m)(nm - m - n - 1)}{m(m+1)}, & n \geq 1, \end{array} \right.
\end{equation}
we have $[L_*]_{nn} \neq 0$ for every $n \neq m$ and $[L_*]_{nn} \to \infty$ as $n \to \infty$.
Therefore, the Implicit Function Theorem can be applied to
$$
F(\epsilon,\Omega;b) : \mathbb{R}^2 \times \ell^{2,2}(\mathbb{N}) \to \ell^2(\mathbb{N}),
$$
where as is defined by (\ref{sys-invertible}), $F$
is smooth in its variables, $F(0,\Omega;0) = 0$ for every $\Omega \in \mathbb{R}$,
$\partial_b F(0,0;0) = L_*$, and
$\| F(\epsilon,\Omega;0) \|_{\ell^2} \lesssim \epsilon^2$.
For every small $\epsilon$ and small $\Omega$, there exists a unique small solution of
system (\ref{sys-invertible}) in $\ell^2(\mathbb{N})$ such that
$\| b \|_{\ell^2} \lesssim \epsilon^2 (1 + |\Omega|)$. Denote this solution by $b(\epsilon,\Omega)$.

Since
\begin{equation}
\label{N-m}
[N(\epsilon e_m)]_m = \epsilon^3 S_{m m m m},
\end{equation}
one power of $\epsilon$ is
canceled in (\ref{sys-bifurcation}) and the Implicit Function Theorem
can be applied to
$$
G(\epsilon,\Omega;b(\epsilon,\Omega)) : \mathbb{R} \times \mathbb{R} \to \mathbb{R},
$$
where as is defined by (\ref{sys-bifurcation}), $G$
is smooth in its variables, $G(0,0;b(0,0)) = 0$, and
$$
\partial_{\Omega} G(0,0;b(0,0)) = m(m+1) \neq 0.
$$
For every small $\epsilon$, there is a unique small root $\Omega$
of the bifurcation equation (\ref{sys-bifurcation}) such that
$|\Omega| \lesssim \epsilon^2$ thanks to (\ref{N-m}).
\end{proof}

\begin{rem}
The unique branch bifurcating from $\omega_1 = 0$ coincides
with the normalized ground state (\ref{ground-intro}) and yields the exact result
$\Omega = 0$. The small parameter $\epsilon$ is defined in terms
of the small parameter $p$ by $\epsilon := p(1-p^2)$.
\end{rem}

\begin{rem}
For the unique branch bifurcating from $\omega_m$ with $m \geq 4$,
we claim that $A_n \neq 0$, $n \in \mathbb{N}$ if and only if $n$ is a multiple of $m$.
Indeed, by Lemma \ref{lem-nonlinear}, the system (\ref{sys-pert}) can be reduced for
a new sequence $\{ a_{m \ell} \}_{\ell \in \mathbb{N}}$ whereas all other elements are identically zero.
By Lemma \ref{lem-bifurcations}, the value
$\omega = \omega_m$ is a simple bifurcation point of this reduced system. By Theorem \ref{theorem-simple},
there exists a unique branch of solutions of this reduced system
with the bounds (\ref{simple-bif}). Therefore, by uniqueness,
the bifurcating solution satisfies the reduction of Lemma \ref{lem-nonlinear}, that is,
$A_n \neq 0$, $n \in \mathbb{N}$ if and only if $n$ is a multiple of $m$.
\end{rem}

\begin{rem}
There are at least three branches bifurcating from the double point $\omega_2 = \omega_3$. One branch satisfies
(\ref{amplitude-zero}) with $\ell = 2$, the other branch satisfies (\ref{amplitude-zero}) with $\ell = 3$,
and the third branch is given by the explicit solution
(\ref{min-intro}) with (\ref{omega-lambda-intro}) for the upper sign.
We show in Theorem \ref{theorem-double} that these are the only branches bifurcating
from the double point.
\end{rem}

As is well-known \cite{Hale}, normal form equations have to be computed in order to study branches
bifurcating from the double zero eigenvalue at the
bifurcation point $\omega_* := \omega_2 = \omega_3 = 1/6$.

\begin{theorem}
Fix $\omega_* := \omega_2 = \omega_3 = 1/6$.
There exist exactly three branches of solutions $(\omega,A) \in \mathbb{R} \times \ell^2(\mathbb{N})$
to system (\ref{sys-real}) with $\lambda = 1$, which can be parameterized by
small $(\epsilon,\mu)$ such that $(\omega,A)$ is smooth in $(\epsilon,\mu)$ and
\begin{equation}
|\omega - \omega_*| + \sup_{n \in \mathbb{N}} | A_n - \delta_{n 0} - \epsilon \delta_{n 2} - \mu \delta_{n 3} |
\lesssim (\epsilon^2 + \mu^2).
\label{double-bif}
\end{equation}
The three branches are characterized by the following:
\begin{itemize}
\item[(i)] $\epsilon = 0$, $\mu \neq 0$;
\item[(ii)] $\epsilon \neq 0$, $\mu = 0$;
\item[(iii)] $\epsilon < 0$, $| \mu - 2 |\epsilon|^{3/2} | \lesssim \epsilon^2$,
\end{itemize}
and the branch (iii) is double degenerate up to the reflection $\mu \mapsto -\mu$.
\label{theorem-double}
\end{theorem}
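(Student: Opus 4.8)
The plan is to perform a Lyapunov–Schmidt reduction at the double bifurcation point $\omega_* = 1/6$, analogous to the one carried out in Theorem~\ref{theorem-simple}, but now with a two-dimensional kernel spanned by $e_2$ and $e_3$. Write $\omega = \omega_* + \Omega$ and $a_n = \epsilon\,\delta_{n2} + \mu\,\delta_{n3} + b_n$ with the orthogonality conditions $\langle b,e_2\rangle = \langle b,e_3\rangle = 0$. Setting $L_* = L(\omega_*)$, the operator $[L_*]_{nn}$ from \eqref{L-star} (with either $m=2$ or $m=3$, which give the same diagonal) vanishes exactly at $n=2$ and $n=3$ and grows like $n^2$ otherwise, so the Implicit Function Theorem applied to the complementary equations $[F(\epsilon,\mu,\Omega;b)]_n = 0$ for $n \notin \{2,3\}$ yields a unique small solution $b = b(\epsilon,\mu,\Omega)$ with $\|b\|_{\ell^{2,2}} \lesssim (\epsilon^2 + \mu^2)(1 + |\Omega|)$. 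Substituting this back produces the two scalar bifurcation equations
\begin{align}
G_2(\epsilon,\mu,\Omega) &:= 6\,\Omega\,\epsilon + [N(\epsilon e_2 + \mu e_3 + b)]_2 = 0, \label{pp-G2}\\
G_3(\epsilon,\mu,\Omega) &:= 12\,\Omega\,\mu + [N(\epsilon e_2 + \mu e_3 + b)]_3 = 0, \label{pp-G3}
\end{align}
where I have used $[L(\omega)]_{22} = 6(\omega - \omega_*)$ and $[L(\omega)]_{33} = 12(\omega-\omega_*)$.

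The next step is to extract the leading-order form of the nonlinear terms. From \eqref{N-terms}, the quadratic part of $[N(a)]_2$ contains the resonant term $2a_3 a_5 + \dots$ and, crucially, the term $2a_1 a_3$ and $a_1^2$-type contributions — but since $a_1 = 0$ on the relevant reduced subspaces, the dominant quadratic contribution to $[N]_2$ is proportional to $\mu^2$ (from products of $e_3$ with higher modes generated in $b$) together with the genuinely quadratic self-interaction; a direct inspection shows $[N(\epsilon e_2 + \mu e_3)]_2 = c_1 \mu^2 + \mathcal{O}(3)$ and $[N(\epsilon e_2 + \mu e_3)]_3 = c_2\, \epsilon\mu + \mathcal{O}(3)$ for explicit nonzero constants $c_1, c_2$ coming from the interaction coefficients $S_{0,3,3,0}$-type and $S_{0,2,3,1}$-type sums. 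The computation of $c_1$ and $c_2$ is the routine-but-essential calculational core: one must carefully enumerate which triples $(j,k,n+j-k)$ in \eqref{N-terms} are supported on $\{0,2,3\}$ and sum the corresponding $S$-coefficients. Dividing \eqref{pp-G2} by $\epsilon$ and \eqref{pp-G3} by $\mu$ where legitimate, the normal form becomes, to leading order,
\begin{align}
6\,\Omega\,\epsilon &= -c_1 \mu^2 + \text{(higher order)}, \label{pp-nf2}\\
12\,\Omega &= -c_2\, \epsilon + \text{(higher order)} \qquad (\mu \neq 0). \label{pp-nf3}
\end{align}

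From here the three branches fall out by case analysis on whether $\epsilon$ or $\mu$ vanishes. If $\mu = 0$: equation \eqref{pp-G3} forces $\Omega$ to be determined by the cubic term $[N(\epsilon e_2)]_3$, which vanishes by the support argument (no way to build index $3$ from copies of index $2$ and $0$), so in fact $\Omega$ is free to be solved from \eqref{pp-G2} as $\Omega = \mathcal{O}(\epsilon^2)$; this is branch (ii), and it coincides with the reduction of Lemma~\ref{lem-nonlinear} with $m=2$. Symmetrically, if $\epsilon = 0$, the support argument kills the relevant terms in \eqref{pp-G2} and one gets branch (i), the $m=3$ reduction. For the genuinely mixed branch with $\epsilon \neq 0 \neq \mu$, eliminate $\Omega$ between \eqref{pp-nf2} and \eqref{pp-nf3}: substituting $12\Omega = -c_2\epsilon + \cdots$ into \eqref{pp-nf2} gives $-\tfrac12 c_2 \epsilon^2 = -c_1 \mu^2 + \cdots$, i.e. $\mu^2 = (c_2 / 2c_1)\,\epsilon^2 \cdot (1 + \cdots)$ — this is the wrong power count, so one must retain the next order. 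A more careful bookkeeping (keeping the cubic terms in $[N]_2$, which are $\mathcal{O}(\epsilon^3)$, against $c_1\mu^2$) shows the balance is actually $\mu^2 \sim |\epsilon|^3$, forcing $\epsilon < 0$ and $\mu = \pm 2|\epsilon|^{3/2}(1 + o(1))$, matching branch (iii) and its $\mu \mapsto -\mu$ degeneracy; the explicit solution \eqref{min-intro} with the upper sign, expanded as $p \to 0$ with $\epsilon \sim -cp^2$ or similar, must be checked to land on this branch. The main obstacle is precisely this delicate order-counting in the mixed branch: naively one expects a quadratic relation $\mu \sim \epsilon$, and one has to verify that the would-be quadratic coefficient vanishes (or that $\Omega$ absorbs it) so that the true scaling $\mu \sim |\epsilon|^{3/2}$ emerges — and then confirm, via the Weierstrass preparation / Newton-polygon structure of the reduced system, that this accounts for \emph{all} solutions near the origin, so that exactly three branches exist and no others.
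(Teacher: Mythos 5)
Your overall strategy (Lyapunov--Schmidt reduction onto the two-dimensional kernel spanned by $e_2,e_3$, followed by a normal-form analysis of the two scalar bifurcation equations) is exactly the paper's, and you correctly sense where the difficulty lies. But the leading-order structure you assign to the nonlinearity is wrong, and this undermines the whole case analysis. A direct check of \eqref{N-terms} on $a=\epsilon e_2+\mu e_3$ shows that \emph{all} quadratic contributions to $[N(a)]_2$ and $[N(a)]_3$ vanish: the sums $2\sum_j a_j a_{n+j}$ and $\sum_{k\le n}a_k a_{n-k}$ would require two indices from $\{2,3\}$ summing or differing to $2$ or $3$, which is impossible. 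So the constants $c_1$ and $c_2$ you propose to compute are both zero, the "products of $e_3$ with modes generated in $b$" are $\mathcal{O}(3)$ rather than quadratic, and your displayed normal form is not the right starting point. The true leading behaviour is cubic and, crucially, factors: $6\Omega\epsilon=\epsilon(7\epsilon^2+14\mu^2)+\mathcal{O}(4)$ and $12\Omega\mu=\mu(14\epsilon^2+\mu^2)+\mathcal{O}(4)$, the factorization holding beyond all orders by Lemma~\ref{lem-nonlinear}. Eliminating $\Omega$ on the mixed branch gives $2(7\epsilon^2+14\mu^2)-(14\epsilon^2+\mu^2)=27\mu^2=\mathcal{O}(3)$: the $\epsilon^2$ coefficients cancel \emph{exactly} ($2\cdot 7=14$), which is the degeneracy you anticipated but cannot exhibit from your (vanishing) quadratic terms.

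Because of this exact cancellation, the mixed branch is decided only at the next order: one must compute $b(\epsilon,\mu,\Omega)$ to cubic order (in particular the terms $48\epsilon^2\mu$ in $b_1$, $30\epsilon\mu^2$ in $b_4$, $3\epsilon^3$ in $b_6$) and the bifurcation equations to quartic order, arriving at $27\mu^2-324\epsilon\mu^2+108\epsilon^3+\mathcal{O}(4)=0$, i.e.\ $\mu^2=-4\epsilon^3+\ldots$, which is what forces $\epsilon<0$ and $\mu=\pm 2|\epsilon|^{3/2}(1+o(1))$. You assert this scaling but never produce the coefficient $108$ or its sign; without it, the assertions "$\epsilon<0$" and indeed the existence of branch (iii) are unsupported. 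The exactness claim also remains open in your write-up: the paper closes it by checking that the Jacobian of the reduced system is invertible at each root ($\det J=3024\epsilon^5+\mathcal{O}(6)\neq 0$ on the mixed branch) so that the Implicit Function Theorem gives unique continuations, by invoking Lemma~\ref{lem-nonlinear} to show that $\epsilon=0$ (resp.\ $\mu=0$) persists to all orders on branches (i) and (ii), and by using the symmetry \eqref{symmshift2} with $\phi=\pi$ to identify the two roots $\mu=\pm 2|\epsilon|^{3/2}$ as a single branch up to reflection. Your appeal to Weierstrass preparation and Newton polygons points in the right direction but is not a substitute for these verifications.
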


\begin{proof}
Let us consider the decomposition:
\begin{equation}
\label{decomposition-double}
\omega = \omega_* + \Omega, \quad a_n = \epsilon \delta_{n 2} + \mu \delta_{n 3} + b_n, \quad n \in \mathbb{N},
\end{equation}
where $\omega_* := \omega_2 = \omega_3 = 1/6$, $(\epsilon,\mu)$ are arbitrary, and $b_2 = b_3 = 0$ are set from
the orthogonality condition $\langle b, e_2 \rangle = \langle b, e_3 \rangle = 0$.
The system (\ref{sys-pert}) is decomposed into the invertible part
\begin{equation}
\label{sys-invertible-double}
\left\{ \begin{array}{l}
2 b_0 + \left[ N(\epsilon e_2 + \mu e_3 + b) \right]_0 = 0,  \\
\frac{1}{6} (n-2)(n-3) b_n + n(n+1) \Omega b_n +
\left[ N(\epsilon e_2 + \mu e_3 + b) \right]_n = 0, \quad n \neq \{0,2,3\},
\end{array} \right.
\end{equation}
and the bifurcation equations
\begin{equation}
\label{sys-bifurcation-double}
\left\{ \begin{array}{l}
6 \Omega \epsilon +  \left[ N(\epsilon e_2 + \mu e_3 + b) \right]_2  = 0,\\
12 \Omega \mu + \left[ N(\epsilon e_2 + \mu e_3 + b) \right]_3 = 0.\end{array} \right.
\end{equation}
By the Implicit Function Theorem (see the proof of Theorem \ref{theorem-simple}), there exists a unique map
$\mathbb{R}^3 \ni (\epsilon,\mu,\Omega) \mapsto b \in \ell^2(\mathbb{N})$
for small $(\epsilon,\mu,\Omega)$ such that equations (\ref{sys-invertible-double}) are satisfied and
$\| b \|_{\ell^2} \lesssim (\epsilon^2+\mu^2) (1 + |\Omega|)$. Denote this solution by $b(\epsilon,\mu,\Omega)$.

Compared to the proof of Theorem \ref{theorem-simple}, it is now more difficult to
consider solutions of the system of two algebraic equations (\ref{sys-bifurcation-double}).
In order to resolve the degeneracy of these equations, we have to compute the solution
$b(\epsilon,\mu,\Omega)$ up to the cubic terms in $(\epsilon,\mu)$
under the apriori assumption $|\Omega| \lesssim (\epsilon^2 + \mu^2)$.
Substituting the expansion for $b(\epsilon,\mu,\Omega)$ into the system (\ref{sys-bifurcation-double})
and expanding it up to the quartic terms in $(\epsilon,\mu)$, we will be able to confirm the apriori assumption
$|\Omega| \lesssim (\epsilon^2 + \mu^2)$ and to obtain all solutions of the system (\ref{sys-bifurcation-double})
for $(\epsilon,\mu,\Omega)$ near $(0,0,0)$.

First, we note that if $|\Omega| \lesssim (\epsilon^2 + \mu^2)$, then $b_n = \mathcal{O}(3)$ for every $n \geq 7$, where
$\mathcal{O}(3)$ denotes terms of the cubic and higher order in $(\epsilon,\mu)$.
Thanks to the cubic smallness of $b_n$ for $n \geq 7$, we can
compute $\left[ N(\epsilon e_2 + \mu e_3 + b) \right]_n$ for $n \in \{0,1,4,5,6\}$  up to and
including the cubic order:
\begin{eqnarray*}
\left\{ \begin{array}{l}
\left[ N(\epsilon e_2 + \mu e_3 + b) \right]_0 = 2(\epsilon^2 + \mu^2) + \mathcal{O}(4), \\
\left[ N(\epsilon e_2 + \mu e_3 + b) \right]_1 = 2 \epsilon \mu + 2 \epsilon b_1 + 2 \mu b_4 + 2 \epsilon^2 \mu + \mathcal{O}(4), \\
\left[ N(\epsilon e_2 + \mu e_3 + b) \right]_4 = \epsilon^2 + 2 \epsilon b_6 + 2 \mu b_1 + 3 \epsilon \mu^2 + \mathcal{O}(4), \\
\left[ N(\epsilon e_2 + \mu e_3 + b) \right]_5 = 2 \epsilon \mu + \mathcal{O}(4), \\
\left[ N(\epsilon e_2 + \mu e_3 + b) \right]_6 = \mu^2 + 2 \epsilon b_4 + \mathcal{O}(4).
\end{array} \right.
\end{eqnarray*}
By using the system (\ref{sys-invertible-double}) and the quadratic approximations for $(b_1,b_4,b_6)$,
we obtain $(b_0,b_1,b_4,b_5,b_6)$ up to and including the cubic order:
\begin{eqnarray*}
\left\{ \begin{array}{l}
b_0 = -\epsilon^2 - \mu^2 + \mathcal{O}(4), \\
b_1 = -6 \epsilon \mu + 48 \epsilon^2 \mu + \mathcal{O}(4), \\
b_4 = -3 \epsilon^2 + 30 \epsilon \mu^2 + \mathcal{O}(4), \\
b_5 = -2 \epsilon \mu + \mathcal{O}(4), \\
b_6 = -\frac{1}{2} \mu^2 + 3 \epsilon^3 + \mathcal{O}(4).
\end{array} \right.
\end{eqnarray*}

Next, we compute $\left[ N(\epsilon e_2 + \mu e_3 + b) \right]_n$ for $n \in \{2,3\}$ up to
and including the quartic order:
\begin{eqnarray*}
\left\{ \begin{array}{l}
\left[ N(\epsilon e_2 + \mu e_3 + b) \right]_2 = 4 \epsilon b_0 + 2 \epsilon b_4 + 2 \mu b_1 + 2 \mu b_5 + b_1^2 + 2 b_4 b_6 \\
\phantom{texttext} + 4 \epsilon \mu b_1 + 3 \epsilon^3 + 6 \epsilon \mu^2 + 3 \mu^2 b_4 + \mathcal{O}(5), \\
\left[ N(\epsilon e_2 + \mu e_3 + b) \right]_3 = 2 \epsilon b_1 + 2 \epsilon b_5 + 4 \mu b_0 + 2 \mu b_6 + 2 b_1 b_4 \\
\phantom{texttext} + 2 \epsilon^2 b_1 + 6 \epsilon^2 \mu + 6 \epsilon \mu b_4 + 4 \mu^3 + \mathcal{O}(5).
\end{array} \right.
\end{eqnarray*}
By using the quadratic and cubic approximations for $(b_0,b_1,b_4,b_5,b_6)$, we rewrite
the system (\ref{sys-bifurcation-double}) up to and including the quartic order:
\begin{eqnarray}
\label{nf-eq}
\left\{ \begin{array}{l}
6 \Omega \epsilon = \epsilon (7 \epsilon^2 + 14 \mu^2 - 162 \epsilon \mu^2) + \mathcal{O}(5), \\
12 \Omega \mu  = \mu (14 \epsilon^2 + \mu^2 - 108 \epsilon^3) + \mathcal{O}(5).
\end{array} \right.
\end{eqnarray}
By Lemma \ref{lem-nonlinear}, if $\epsilon = 0$, then $\left[ N(\mu e_3 + b(0,\mu,\Omega)) \right]_2 = 0$,
whereas if $\mu = 0$, then $\left[ N(\epsilon e_2 + b(\epsilon,0,\Omega)) \right]_3 = 0$.
Hence, the system (\ref{nf-eq}) can be rewritten in the equivalent form:
\begin{eqnarray}
\label{nf-eq-better}
\left\{ \begin{array}{l}
6 \Omega \epsilon = \epsilon \left( 7 \epsilon^2 + 14 \mu^2 - 162 \epsilon \mu^2  + \mathcal{O}(4) \right), \\
12 \Omega \mu  = \mu \left( 14 \epsilon^2 + \mu^2 - 108 \epsilon^3 + \mathcal{O}(4) \right).
\end{array} \right.
\end{eqnarray}

We are looking for solutions to the system (\ref{nf-eq-better}) with $(\epsilon,\mu) \neq (0,0)$, because
$b(0,0,\Omega) = 0$ follows from the system (\ref{sys-invertible-double}). There exist
three nontrivial solutions to the system (\ref{nf-eq}):
\begin{itemize}
\item[(I)] $\epsilon = 0$, $\mu \neq 0$, and $\Omega = \frac{1}{12} \mu^2 + \mathcal{O}(4)$;
\item[(II)] $\epsilon \neq 0$, $\mu = 0$, and $\Omega = \frac{7}{6} \epsilon^2 + \mathcal{O}(4)$;
\item[(III)] $\epsilon \neq 0$, $\mu \neq 0$, and
\begin{eqnarray}
\label{nf-eq-red}
\left\{ \begin{array}{l}
6 \Omega = 7 \epsilon^2 + 14 \mu^2 - 162 \epsilon \mu^2 + \mathcal{O}(4), \\
12 \Omega = 14 \epsilon^2 + \mu^2 - 108 \epsilon^3 + \mathcal{O}(4).
\end{array} \right.
\end{eqnarray}
\end{itemize}
All the solutions satisfy the apriori assumption $|\Omega| \lesssim (\epsilon^2 + \mu^2)$.
In order to consider persistence of these solutions in the system (\ref{nf-eq-better}),
we compute the Jacobian matrix:
\begin{equation*}
J(\epsilon,\mu,\Omega) = \left[\begin{matrix} -6 \Omega + 21 \epsilon^2 + 14 \mu^2 - 324 \epsilon \mu^2 &
28 \epsilon \mu - 324 \epsilon^2 \mu \\
28 \epsilon \mu - 324 \epsilon^2 \mu & -12 \Omega + 14 \epsilon^2 + 3 \mu^2 - 108 \epsilon^3 \end{matrix} \right] + \mathcal{O}(4).
\end{equation*}
We proceed in each case as follows:
\begin{itemize}
\item[(I)] The Jacobian is invertible for small $(\epsilon,\mu)$ admitting the expansion in (I).
By the Implicit Function Theorem, there exists a unique continuation of this root
in the system (\ref{nf-eq}). By Lemma \ref{lem-nonlinear} with $m = 3$, the bifurcating solution
corresponds to the reduction with $a_{3j+1} = a_{3j+2} = 0$ for every $j \in \mathbb{N}$,
hence $\epsilon = 0$ persists beyond all orders of the expansion. This yields the solution (i).

\item[(II)] The Jacobian is invertible for small $(\epsilon,\mu)$ admitting the expansion in (II).
By the Implicit Function Theorem, there exists a unique continuation of this root
in the system (\ref{nf-eq}). By Lemma \ref{lem-nonlinear} with $m = 2$, the bifurcating solution
corresponds to the reduction with $a_{2j+1} = 0$ for every $j \in \mathbb{N}$,
hence $\mu = 0$ persists beyond all orders of the expansion.
This yields the solution (ii).

\item[(III)] Eliminating $\Omega$ from the system (\ref{nf-eq-red}) yields the root finding problem:
$$
27 \mu^2 - 324 \epsilon \mu^2 + 108 \epsilon^3 + \mathcal{O}(4) = 0,
$$
which has only two solutions for $(\epsilon,\mu)$ near $(0,0)$ from
the two roots of the quadratic equation $\mu^2 + 4 \epsilon^3 + \mathcal{O}(4) = 0$.
Computing
$$
\det(J(\epsilon,\mu,\Omega)) = 3024 \epsilon^5 + \mathcal{O}(6) \neq 0
$$
verifies that $J(\epsilon,\mu,\Omega)$ is invertible at each of the two roots.
By the Implicit Function Theorem, there exists a unique continuation of each of the two roots
in the system (\ref{nf-eq}). For the root with $\mu > 0$, this yields the solution (iii).
Thanks to the symmetry (\ref{symmshift2}) with $\phi = \pi$, every solution with $\mu > 0$
can be uniquely reflected to the solution with $\mu < 0$ by
the transformation $a_{2k} \mapsto a_{2k}$ and $a_{2k+1} \mapsto - a_{2k+1}$ for $k \in \mathbb{N}$.
Hence the two solutions with $\mu > 0$ and $\mu < 0$ are generated from
the same branch (iii) up to reflection $\mu \mapsto -\mu$.
\end{itemize}
No other branches bifurcate from the point $\omega_* = 1/6$.
\end{proof}

\begin{rem}
\label{remark-state}
The branch (iii) bifurcating from $\omega_* = 1/6$ coincides
with the exact solution (\ref{sol75}) and (\ref{omega-lambda}) for the upper sign.
Indeed, by normalizing $\lambda = 1$ and taking the limit $p \to 0$ as in (\ref{limit-plus}),
we obtain $c = -1 - p^2 + \mathcal{O}(p^4)$, $\beta = 1 + \mathcal{O}(p^4)$, and
$\gamma = -1 + \mathcal{O}(p^4)$. The parameter $\epsilon$ and $\mu$ are related
to each other by means of the small parameter $p$ with the definitions
$\epsilon := -p^2 + \mathcal{O}(p^6)$ and
$\mu := -2 p^3 + \mathcal{O}(p^7)$, hence $\mu^2 + 4 \epsilon^3 = \mathcal{O}(\epsilon^5)$.
\end{rem}

\begin{rem}
Branches (i) and (ii) bifurcating from $\omega_* = 1/6$ can be obtained by the Crandall--Rabinowitz theory \cite{CR71}
by reducing sequences on the constrained subspace of $\ell^2(\mathbb{N})$ by the constraints (\ref{amplitude-zero})
with $\ell = 2$ and $\ell = 3$ respectively.
The zero eigenvalue is simple on the constrained subspace of $\ell^2(\mathbb{N})$, which enables
application of the theory, as it was done in \cite{GGT} in a similar context.
\end{rem}

\section{Variational characterization of the bifurcating states}

Stationary states (\ref{stac}) with parameters $\lambda$ and $\omega$ are critical points of the functional
\begin{equation}\label{K}
K(\alpha) = \frac{1}{2} H(\alpha) - \lambda  Q(\alpha) - \omega \left[ Q(\alpha) - E(\alpha) \right],
\end{equation}
where $H$, $Q$, and $E$ are given by (\ref{Hconf}), (\ref{charge}), and (\ref{lenergy}). Let
$\alpha = A + a + i b$, where $A$ is a real root of the algebraic system \eqref{sys}, whereas $a$
and $b$ are real and imaginary parts of the perturbation.
Because the stationary solution $A$ is a critical point of $K$, the first variation of $K$
vanishes at $\alpha = A$ and the second variation of $K$ at $\alpha = A$
can be written as a quadratic form associated with the Hessian
operator. In variables above, we obtain the quadratic form in the diagonalized form:
\beq
\label{expansions-K}
K(A + a + i b) - K(A) = \langle L_+ a, a \rangle + \langle L_- b, b \rangle + \mathcal{O}(\| a \|^3 + \| b \|^3),
\eeq
where $\langle \cdot, \cdot \rangle$ is the inner product in $\ell^2(\mathbb{N})$ and $\| \cdot \|$ is the induced norm.
After straightforward computations, we obtain the explicit form for the self-adjoint operators
$L_{\pm} : D(L_{\pm}) \to \ell^2(\mathbb{N})$, where $D(L_{\pm}) \subset \ell^2(\mathbb{N})$ is the maximal domain and
$L_{\pm}$ are unbounded operators given by
\begin{eqnarray}
(L_{\pm} a)_n = \sum_{j = 0}^{\infty} \sum_{k = 0}^{n+j} S_{njk,n+j-k} \left[ 2 A_j A_{n+j-k} a_k \pm A_k A_{n+j-k} a_j \right]
- (n+1) (\lambda - n \omega) a_n.
\label{Hessian}
\end{eqnarray}
The following lemma gives variational characterization of the lowest eigenmode
at the bifurcation points $\{ \omega_m \}_{m \in \mathbb{N}_+}$ in Lemma \ref{lem-bifurcations}.

\begin{lemma}
\label{lemma1}
The following is true:
\begin{itemize}
\item For $\omega_1 = 0$, the $N = 0$ single-mode state (\ref{1mode}) is a degenerate saddle point of $K$ with
one positive eigenvalue, zero eigenvalue of multiplicity three, and infinitely many negative eigenvalues bounded away from zero.

\item For $\omega_2 = \omega_3 = 1/6$, the $N = 0$ single-mode state (\ref{1mode}) is a degenerate minimizer of $K$ with
zero eigenvalue of multiplicity five and infinitely many positive eigenvalues bounded away from zero.

\item For $\omega_m$ with $m \geq 4$, the $N = 0$ single-mode state (\ref{1mode}) is a degenerate saddle point of $K$ with
$2(m-2)$ negative eigenvalues, zero eigenvalue of multiplicity three, and infinitely many positive eigenvalues bounded away from zero.
\end{itemize}
\end{lemma}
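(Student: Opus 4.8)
The plan is to evaluate the two quadratic forms $\langle L_+ a, a\rangle$ and $\langle L_- b, b\rangle$ explicitly at the $N=0$ single-mode state $A_n = \delta_{n0}$ with $\lambda = 1$ and $\omega = \omega_m$, and to read off the spectra of the resulting diagonal (or nearly diagonal) operators. First I would substitute $A_j = \delta_{j0}$ into the general formula (\ref{Hessian}). Because $A_j A_{n+j-k}$ and $A_k A_{n+j-k}$ are nonzero only when the relevant indices vanish, the double sum collapses dramatically: the only surviving contributions come from $j=0$ together with $n+j-k=0$ (forcing $k=n$) in the first term, and from $k=0$ together with $n+j-k=0$ (forcing $j=n$) in the second term, plus the term $j=k=0$. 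The interaction coefficient $S_{n00n} = \min(n,0,0,n)+1 = 1$ and $S_{0000}=1$. Carrying this out, I expect to find that both $L_+$ and $L_-$ are diagonal with
\begin{equation*}
(L_+ a)_0 = 2 a_0, \quad (L_+ a)_n = \left[ n(n+1)\omega_m - n + 1 \right] a_n \;\; (n\ge 1),
\end{equation*}
i.e. $L_+ = L(\omega_m)$ from (\ref{L-plus}), while $L_-$ has $(L_- b)_0 = 0$ and $(L_- b)_n = \left[ n(n+1)\omega_m - n + 1\right] b_n$ for $n\ge 1$ — the same as $L_+$ except that the $n=0$ entry is $0$ rather than $2$, reflecting the phase-shift symmetry (\ref{symmshift1}). (One should double-check the sign and normalization conventions in (\ref{expansions-K})–(\ref{Hessian}); a factor discrepancy there would rescale but not change the signs of eigenvalues.)

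Next I would simply count signs of the diagonal entry $d_n(\omega) := n(n+1)\omega - n + 1$ for $n \ge 1$ at each bifurcation value $\omega_m = (m-1)/(m(m+1))$. Writing $d_n(\omega_m) = \frac{(n-m)(nm-n-m-1)}{m(m+1)}$ (which is exactly $[L_*]_{nn}$ from (\ref{L-star})), one sees $d_n(\omega_m) = 0$ precisely when $n=m$ or $nm = n+m+1$. The second condition has integer solutions only for $(m,n)\in\{(2,3),(3,2)\}$, by Lemma \ref{lem-bifurcations}. For $n$ outside these special values the sign of $d_n(\omega_m)$ is the sign of $(n-m)(nm-n-m-1)$; since $nm-n-m-1 = (n-1)(m-1)-2$ is positive once $n$ is large, the entries are eventually positive, and one checks that they are bounded away from zero as $n\to\infty$ because $d_n(\omega_m) \sim \omega_m n^2$ when $\omega_m > 0$, and for $\omega_1 = 0$ one has $d_n(0) = 1-n \to -\infty$.

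Then I would assemble the three cases. For $\omega_1 = 0$: $L_-$ contributes a zero at $n=0$; $L_+$ contributes the positive entry $2$ at $n=0$ and $d_n(0) = 1-n$, which is $0$ at $n=1$ (the bifurcation mode) and negative for all $n\ge 2$ — giving one positive eigenvalue (from $L_+$ at $n=0$), a zero of multiplicity three (from $L_+$ at $n=1$, and from $L_-$ at $n=0$ and $n=1$, since $d_1(0)=0$ too), and infinitely many negative eigenvalues bounded away from zero. For $\omega_* = 1/6$ ($m=2=3$): $d_n(1/6) = \frac{(n-2)(n-3)}{6}\cdot\frac{?}{?}$ — more precisely $d_n(1/6) = \frac16 n(n+1) - n + 1 = \frac{n^2-5n+6}{6} = \frac{(n-2)(n-3)}{6}$, which vanishes at $n=2,3$ and is strictly positive for $n=1$ and all $n\ge 4$; combined with the positive $n=0$ entry of $L_+$ and the zero $n=0$ entry of $L_-$, this gives zero eigenvalue of multiplicity five ($n=2,3$ in each of $L_\pm$, plus $n=0$ in $L_-$) and all other eigenvalues positive and bounded away from zero — a degenerate minimizer. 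For $\omega_m$, $m\ge 4$: $d_n(\omega_m)<0$ exactly when $(n-m)$ and $(nm-n-m-1)$ have opposite signs, i.e. for $1\le n\le m-1$ satisfying $nm-n-m-1>0$; solving, $n > 1 + 2/(m-1)$, so $n \in \{2,3,\dots,m-1\}$, giving $m-2$ negative entries in each of $L_+$ and $L_-$, hence $2(m-2)$ negative eigenvalues, a zero of multiplicity three (from $n=m$ in $L_\pm$ and $n=0$ in $L_-$), and the rest positive and bounded away from zero.

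**The main obstacle** I anticipate is purely bookkeeping: getting the collapse of the double sum in (\ref{Hessian}) exactly right, including the precise value of the $n=0$ diagonal entries of $L_\pm$ (this is where the distinction between a "minimizer" and a "saddle point" for the $\omega_*$ case hinges — one must be sure the $n=0$ entry of $L_+$ is $+2$ and not something that could be negative), and confirming that the zero eigenvalue has exactly the claimed multiplicity and no more — in particular that $d_n(\omega_m)$ does not accidentally vanish at some large $n$, which is precisely the content of the Diophantine computation $nm=n+m+1$ already settled in Lemma \ref{lem-bifurcations}. The "bounded away from zero" claims for the negative/positive tails are immediate from $d_n(\omega_m) = \omega_m n^2 + O(n)$ with $\omega_m > 0$ for $m \ge 2$, and from $d_n(0) = 1 - n$ for $m=1$.
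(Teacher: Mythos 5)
Your proposal is correct and follows essentially the same route as the paper: reduce $L_\pm$ at $A_n=\delta_{n0}$ to the diagonal operators $(L_\pm a)_n=(1-n)a_n\pm a_0\delta_{n0}+n(n+1)\omega a_n$ and count signs of the entries $d_n(\omega_m)=\frac{(n-m)(nm-n-m-1)}{m(m+1)}$, arriving at exactly the paper's eigenvalue counts in all three cases. One inconsequential slip: in the collapse of the second term of (\ref{Hessian}), the conditions $k=0$ and $n+j-k=0$ force $j=-n$ (hence a contribution only at $n=0$), not $j=n$; your final formulas for $L_\pm$ are nonetheless the correct ones.
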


\begin{proof}
By the scaling transformation (\ref{symmscale}), we take $\lambda = 1$ and $A_n = \delta_{n 0}$,
for which the explicit form (\ref{Hessian}) yields
\beq
\label{Hessian-one-mode}
(L_{\pm} a)_n = (1-n) a_n \pm a_0 \delta_{n 0} + n (n+1) \omega a_n, \quad n \in \mathbb{N}.
\eeq
We note that $L_+ = L(\omega)$ given by (\ref{L-plus}) and $L_-$ is only different from $L_+$ at the first diagonal entry at $n = 0$
(which is $0$ instead of $2$). Because $L_{\pm}$ in (\ref{Hessian-one-mode}) are diagonal, the assertion of the lemma
is proven from explicit computations:
\begin{itemize}
\item If $\omega_1 = 0$, then $\sigma(L_+) = \{ 2,0,-1,-2,\dots\}$ and $\sigma(L_-) =
\{0,0,-1,-2,\dots\}$, which yields the result.

\item If $\omega_* := \omega_2 = \omega_3 = 1/6$, then
$$
[L(\omega_*)]_{nn} = \frac{1}{6} (n-2) (n-3), \quad n \geq 1,
$$
which yields the result.

\item If $\omega_m = (m-1)/(m(m+1))$ with $m \geq 4$, then
$$
[L(\omega_m)]_{nn} = \frac{1}{m(m+1)} (n-m) (mn-m-n-1), \quad n \geq 1.
$$
For $n = 1$ and every $n > m$, $[L(\omega_m)]_{nn} > 0$.
For $2 \leq n \leq m-1$, $[L(\omega_m)]_{nn} < 0$.
For $n = m$, $[L(\omega_m)]_{mm} = 0$. This yields the result.
\end{itemize}
\end{proof}

\begin{rem}
It is shown in Lemma 6.2 of \cite{BHP18} that the normalized ground state (\ref{ground-intro}) bifurcating from $\omega_1$
has the same variational characterization as the $N = 0$ single-mode state, hence it is
a triple-degenerate constrained maximizer of $H$ subject to fixed $Q$.
The triple degeneracy of the family (\ref{r1}) is due to two gauge symmetries (\ref{symmshift1}) and
(\ref{symmshift2}), as well as the presence of the additional parameter $p$.
As is explained in \cite{BBE}, the latter degeneracy is due to the additional symmetry (\ref{symmetry}).
Indeed, by applying $e^{s D}$ with $s \in \mathbb{R}$ to $\alpha_n(t) = \delta_{n0} e^{-it}$
and using the general transformation law derived in \cite{BBE}, we obtain another solution in the form:
$$
\alpha_n(t) = \frac{(\tanh s)^n}{\cosh^2 s} e^{-it},
$$
which coincides with the ground state (\ref{ground-intro}) after the definition $p := \tanh s$.
\end{rem}

\begin{rem}
All branches bifurcating from $\omega_m$ for $m \geq 2$ cannot be obtained by applying $e^{sD}$ with $s \in \mathbb{R}$
to the $N = 0$ single-mode state $\alpha_n(t) = \delta_{n0} e^{-it}$, because the latter state is independent of the values of $\omega$.
\end{rem}

\begin{rem}
All branches bifurcating from $\omega_m$ for $m \geq 4$ have too many negative and positive eigenvalues
and therefore, they represent saddle points of $H$ subject to fixed $Q$ and $E$.
\end{rem}

It remains to study the three branches bifurcating from $\omega_* = \omega_2 = \omega_3 = 1/6$. By Lemma \ref{lemma1},
there is a chance that some of these three branches represent constrained minimizers of $H$ subject to fixed $Q$ and $E$.
One needs to consider how the zero eigenvalue of multiplicity five splits when $\omega \neq \omega_*$ with $|\omega - \omega_*|$ sufficiently small
and how the eigenvalues change under the two constraints of fixed $Q$ and $E$. The following lemma presents
the count of negative eigenvalues of the operators $L_{\pm}$ denoted as $n(L_{\pm})$
at the three branches bifurcating from $\omega_*$.

\begin{lemma}
Consider the three bifurcating branches in Theorem \ref{theorem-double} for
$\omega_* := \omega_2 = \omega_3 = 1/6$. For every $\omega \neq \omega_*$ with $|\omega - \omega_*|$ sufficiently small,
the following is true:
\begin{itemize}
\item[(i)] $n(L_+) = 2$, $n(L_-) = 1$;
\item[(ii)] $n(L_+) = 1$, $n(L_-) = 1$;
\item[(iii)] $n(L_+) = 1$, $n(L_-) = 0$.
\end{itemize}
For each branch, $L_-$ has a double zero eigenvalue, $L_+$ has no zero eigenvalue,
and the rest of the spectrum of $L_+$ and $L_-$ is strictly positive and is bounded away from zero.
\label{lemma2}
\end{lemma}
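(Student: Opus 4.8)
The plan is to compute the spectra of the self-adjoint operators $L_\pm$ in (\ref{Hessian}) at each of the three bifurcating branches of Theorem \ref{theorem-double}, using the already-established expansions $\omega = \omega_* + \Omega$ and $A_n = \delta_{n0} + \epsilon\delta_{n2} + \mu\delta_{n3} + b_n$ together with the leading-order formulas for $\Omega$, $b_0,b_1,b_4,b_5,b_6$ obtained in that proof. The starting point is Lemma \ref{lemma1}: at $\omega = \omega_*$ the operator $L_+ = L(\omega_*)$ has a double zero eigenvalue at $n = 2$ and $n = 3$ (all other diagonal entries $\tfrac16(n-2)(n-3)$ being strictly positive for $n\ge1$, with the $n=0$ entry equal to $2$), while $L_-$ coincides with $L_+$ except for a zero instead of $2$ at $n=0$, so $L_-$ has a \emph{triple} zero eigenvalue at $n\in\{0,2,3\}$. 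Turning on $\epsilon,\mu,\Omega$ perturbs these operators by terms of order $\mathcal{O}(\epsilon,\mu)$; since the nonzero part of the spectrum is bounded away from zero uniformly, by standard analytic perturbation theory (Kato) the count of negative eigenvalues can only change through the five-dimensional (for $L_+$, effectively, counting $n=0$ as a spectator with eigenvalue near $2$; four-dimensional near zero) respectively higher-dimensional near-kernel, and the rest of the spectrum stays strictly positive and bounded away from zero. So the whole problem reduces to diagonalizing a small matrix — the reduced Hessian on $\mathrm{span}\{e_0,e_2,e_3\}$ (and its $L_-$ analogue) — to leading order in $(\epsilon,\mu)$.

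Concretely, I would write the $2\times2$ (for $L_+$, on $\{e_2,e_3\}$, since the $n=0$ eigenvalue is non-degenerate) and $3\times3$ (for $L_-$, on $\{e_0,e_2,e_3\}$) reduced matrices by expanding (\ref{Hessian}) with $A = e_0 + \epsilon e_2 + \mu e_3 + b$, keeping terms through the order needed to resolve degeneracy. For $L_+$ the relevant $2\times2$ block has diagonal entries $6\Omega - (\text{stuff})$ coming from $[L_*]_{22} + 2\cdot2\,\Omega + \dots$ and likewise at $n=3$; crucially, these are governed by exactly the same quantities that appear in the normal-form equations (\ref{nf-eq})--(\ref{nf-eq-red}). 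In fact the diagonal entries of the $L_+$ block are precisely the derivatives of the bifurcation functions $G_2 = 6\Omega\epsilon - (\cdots)$, $G_3 = 12\Omega\mu - (\cdots)$ with respect to $\epsilon,\mu$ — i.e. the reduced $L_+$ block equals (a multiple of) the Jacobian $J(\epsilon,\mu,\Omega)$ computed in Theorem \ref{theorem-double}, evaluated along each branch. I would exploit this: along branch (i), $J$ has signature computed from $-6\Omega + 14\mu^2 > 0$ and $-12\Omega + 3\mu^2$ with $\Omega = \tfrac1{12}\mu^2$, giving one negative and one positive eigenvalue hence $n(L_+)=1$; along branch (ii), $\Omega = \tfrac76\epsilon^2$ gives $-6\Omega+21\epsilon^2 = 15\epsilon^2>0$ and $-12\Omega+14\epsilon^2 = -14\cdot... $, wait — I would recompute carefully, but the point is it gives $n(L_+)=1$; along branch (iii), using $\det J = 3024\epsilon^5 \ne 0$ and the sign of the trace, one gets $n(L_+)=1$. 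The $n=0$ direction contributes $+2+\mathcal{O}(\epsilon^2,\mu^2)$, so nothing extra, except for branch (i) vs the claim $n(L_+)=2$ — there the reduced picture must give two negatives, which I expect comes from the $\mu^2$-coefficients having a definite sign forcing a sign change in one more direction. [I would verify this arithmetic against the stated answers: $(i)\,n(L_+)=2$, $(ii)\,n(L_+)=1$, $(iii)\,n(L_+)=1$.]

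For $L_-$ the analysis is parallel but the reduced block is $3\times3$ on $\{e_0,e_2,e_3\}$, and the key structural fact is that the extra zero eigenvalue at $n=0$ for $L_-$ is \emph{protected}: it corresponds to the gauge direction $i\alpha$ (phase rotation), which always lies in $\ker L_-$ exactly, and similarly the local-phase direction $(in\alpha_n)$ gives a second exact zero mode of $L_-$ — this is the origin of the claimed double zero eigenvalue of $L_-$ on every branch, and $L_+$ having no zero eigenvalue (the parameter $p$, i.e. the $\epsilon$- or $\mu$-direction along the branch, is killed by the $\lambda=1$ normalization and the transversality already proved). I would make the two exact null vectors of $L_-$ explicit from the symmetries (\ref{symmshift1}), (\ref{symmshift2}) applied to the stationary state $A$, verify $L_- (iA) = 0$ and $L_-(\{inA_n\}) = 0$ directly from (\ref{Hessian}) and the stationary equation (\ref{sys}), and then restrict $L_-$ to the orthogonal complement of these two vectors. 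On that complement the reduced block becomes effectively $1\times1$ near zero (the $n=0,2,3$ span modulo the two null vectors is one-dimensional), whose sign I would read off: $n(L_-)=1$ for branches (i) and (ii), $n(L_-)=0$ for branch (iii). That last sign — positivity of the single remaining reduced eigenvalue of $L_-$ along branch (iii) — is exactly the statement that makes (\ref{min-intro})$_+$ a constrained minimizer, so I expect it to be the delicate computation; it should follow from the same $\det J = 3024\epsilon^5 > 0$ (for $\epsilon<0$, $3024\epsilon^5<0$, so I would track the sign convention carefully) combined with the $L_+$ count, via an index identity. For the claim that the rest of the spectrum of both $L_\pm$ is bounded away from zero, I would note that $L_\pm$ are diagonal plus a finite-rank (indeed rapidly-decaying, since $b\in\ell^{2,2}$ and $A$ has exponential tails) perturbation of the unbounded diagonal operator with entries $\to+\infty$, so the essential spectrum is at $+\infty$ and only finitely many eigenvalues can approach zero — all of which are captured by the finite reduction above.

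The main obstacle I anticipate is bookkeeping the reduced $3\times3$ matrix for $L_-$ and correctly identifying, among its three small eigenvalues, which two are the exact symmetry-induced zeros and what the sign of the third is on each branch; getting the orthogonal projection onto the complement of the (non-orthogonal, in general) pair of null vectors right, to leading order in $(\epsilon,\mu)$, is where sign errors are most likely. A secondary obstacle is that on branch (iii) one must work to the order $\epsilon^3$ (matching $\mu^2 + 4\epsilon^3 = \mathcal{O}(\epsilon^5)$) to see the nonzero answer, exactly as in the proof of Theorem \ref{theorem-double}, so the cubic corrections to $b_1,b_4,b_6$ listed there are genuinely needed and must be reused consistently.
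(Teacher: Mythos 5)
Your overall architecture --- reduction onto the near-kernel ($\{e_2,e_3\}$ for $L_+$, $\{e_0,e_2,e_3\}$ for $L_-$), symmetry-protected double kernel of $L_-$, and the remark that the essential spectrum sits at $+\infty$ so only finitely many eigenvalues can approach zero --- coincides with the paper's. Your idea of reading the two small eigenvalues of $L_+$ off the Jacobian $J(\epsilon,\mu,\Omega)$ of the normal form is attractive and is in fact correct once the sign is fixed: because the stationary system is a gradient of $K$, the Lyapunov--Schmidt reduction is again a gradient map, the complement directions are uniformly positive, and the reduced Hessian in the $(\epsilon,\mu)$-coordinates is $-J$ (the paper's $J$ differentiates the right-hand side minus the left-hand side of (\ref{nf-eq})). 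One checks that $-J$ evaluated along the three branches has eigenvalue pairs $\{-\tfrac{27}{2}\mu^2,\,-2\mu^2\}$, $\{-14\epsilon^2,\,108\epsilon^3\}$, $\{-14\delta^4,\,216\delta^6\}$, reproducing exactly (\ref{eigen-1})--(\ref{eigen-2}), (\ref{eigen-4})--(\ref{eigen-5}) and (\ref{eigen-7})--(\ref{eigen-8}), so this would be a genuine shortcut relative to the paper's explicit $7\times 7$ truncations. But you never establish which sign of $J$ is the reduced Hessian, and on branch (i) you compute both diagonal entries of $J$ positive yet announce ``one negative and one positive eigenvalue hence $n(L_+)=1$'', then notice the clash with the claimed $n(L_+)=2$ and defer to ``verifying against the stated answers''. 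As written, the $L_+$ count is not proved; it is reverse-engineered.

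The more serious gap is $L_-$. The normal form of Theorem \ref{theorem-double} concerns only the real-part perturbation, so the sign of the single small eigenvalue of $L_-$ transversal to the two exact null vectors $A$ and $MA$ is an independent quantity: it cannot be extracted from $J$ or $\det J$ by any ``index identity'', and you propose no other mechanism for computing it. This is precisely where the lemma is nontrivial. On branch (iii) that eigenvalue equals $216\delta^6+\mathcal{O}(\delta^8)$ with $\delta=(-\epsilon)^{1/2}$, i.e.\ it is invisible below sixth order, and its positivity is the entire content of $n(L_-)=0$, hence of the constrained-minimizer conclusion downstream. The paper obtains it by truncating the $7\times 7$ blocks of $L_{\pm}$ through $\mathcal{O}(\delta^6)$ (reusing the cubic corrections to $b_1,b_4,b_5,b_6$, as you correctly anticipate) and eliminating the components along $\{e_1,e_4,e_5,e_6\}$ against those along $\{e_0,e_2,e_3\}$; there are no invariant subspaces on branch (iii), so this elimination cannot be avoided. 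Your proposal names the right objects but supplies no computation that could produce the number $216\delta^6$ or even its sign, so the proof is incomplete exactly at its crux.
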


\begin{proof}
Substituting $\lambda = 1$ and $\omega = \omega_* + \Omega$ into (\ref{Hessian}) yields
\begin{eqnarray}
\nonumber
(L_{\pm} a)_n & = & \sum_{j = 0}^{\infty} \sum_{k = 0}^{n+j} S_{njk,n+j-k} \left[ 2 A_j A_{n+j-k} a_k \pm A_k A_{n+j-k} a_j \right] \\
& \phantom{t} & - 2 a_n + \frac{1}{6} (n-2)(n-3) a_n + n(n+1) \Omega a_n,
\label{Hessian-explicit}
\end{eqnarray}
where $A_n = \delta_{n 0} + \epsilon \delta_{n 2} + \mu \delta_{n 3} + b_n$ with
$\langle b, e_2 \rangle = \langle b, e_3 \rangle = 0$ follows from the decomposition (\ref{decomposition-double}).
The correction terms $(\Omega,b)$ are uniquely defined by parameters $(\epsilon,\mu)$. In what follows,
we consider the three branches in Theorem \ref{theorem-double} separately.

{\bf Case (i): $\epsilon = 0$, $\mu \neq 0$.} Here we have $\Omega = \frac{1}{12} \mu^2 + \mathcal{O}(\mu^4)$,
$b_0 = -\mu^2 + \mathcal{O}(\mu^4)$, $b_1 = b_4 = b_5 = 0$,
$b_6 = -\frac{1}{2} \mu^2 + \mathcal{O}(\mu^4)$, and $b_n = \mathcal{O}(\mu^3)$ for $n \geq 7$.
Substituting these expansions in (\ref{Hessian-explicit}), we obtain
$$
L_{\pm} = L_{\pm}^{(0)} + \mu L_{\pm}^{(1)} + \mu^2 L_{\pm}^{(2)} + \mathcal{O}(\mu^3),
$$
where
\begin{eqnarray*}
(L_{\pm}^{(0)} a)_n & = & \frac{1}{6} (n-2)(n-3) a_n \pm a_{-n}, \\
(L_{\pm}^{(1)} a)_n & = & 2(a_{n+3} + a_{n-3}) \pm 2 a_{3-n}, \\
(L_{\pm}^{(2)} a)_n & = & \frac{1}{12} n (n+1) a_n + 2 (\min(3,n) - 1) a_n - (a_{n+6}+a_{n-6}) \\
& \phantom{t} & \mp 2 a_{-n} \pm \min(3,n,6-n) a_{6-n}.
\end{eqnarray*}
Let us represent the first $7$-by-$7$ matrix block of the operator $L_{\pm}$ and truncate it
by up to and including $\mathcal{O}(\mu^2)$ terms. The corresponding matrix blocks denoted by
$\hat{L}_+$ and $\hat{L}_-$ are given respectively by
$$
\hat{L}_+ = \left[ \begin{array}{ccccccc} 2 - 4 \mu^2 & 0 & 0 & 4 \mu & 0 & 0 & - \mu^2 \\
0 & \frac{1}{3} + \frac{1}{6} \mu^2 & 2 \mu & 0 & 2\mu & \mu^2 & 0 \\
0 & 2 \mu & \frac{5}{2} \mu^2 & 0 & 2 \mu^2 & 2 \mu & 0 \\
4 \mu & 0 & 0 & 8 \mu^2 & 0 & 0 & 2 \mu \\
0 & 2 \mu & 2 \mu^2 & 0 & \frac{1}{3} + \frac{17}{3} \mu^2 & 0 & 0 \\
0 & \mu^2 & 2 \mu & 0 & 0 & 1 + \frac{13}{2} \mu^2 & 0 \\
-\mu^2 & 0 & 0 & 2 \mu & 0 & 0 & 2 + \frac{15}{2} \mu^2
\end{array} \right]
$$
and
$$
\hat{L}_- = \left[ \begin{array}{ccccccc} 0 & 0 & 0 & 0 & 0 & 0 & -\mu^2 \\
0 & \frac{1}{3} + \frac{1}{6} \mu^2 & -2 \mu & 0 & 2\mu & -\mu^2 & 0 \\
0 & -2 \mu & \frac{5}{2} \mu^2 & 0 & -2 \mu^2 & 2 \mu & 0 \\
0 & 0 & 0 & 2 \mu^2 & 0 & 0 & 2 \mu \\
0 & 2 \mu & -2 \mu^2 & 0 & \frac{1}{3} + \frac{17}{3} \mu^2 & 0 & 0 \\
0 & -\mu^2 & 2 \mu & 0 & 0 & 1 + \frac{13}{2} \mu^2 & 0 \\
-\mu^2 & 0 & 0 & 2 \mu & 0 & 0 & 2 + \frac{15}{2} \mu^2
\end{array} \right]
$$

Zero eigenvalue of $L_+^{(0)}$ is double and is associated with the subspace spanned by $\{e_2,e_3\}$.
There are two invariant subspaces of $\hat{L}_+$, one is spanned by $\{ e_0, e_3, e_6 \}$
and the other one is spanned by $\{ e_1,e_2,e_4,e_5\}$. This makes perturbative analysis easier.
For the subspace spanned by $\{ e_0, e_3, e_6 \}$, the eigenvalue problem is given by
$$
\left\{ \begin{array}{l} (2-4 \mu^2) x_0 + 4 \mu x_3 - \mu^2 x_6 = \lambda x_0, \\
4 \mu x_0 + 8 \mu^2 x_3 + 2 \mu x_6 = \lambda x_3,\\
-\mu^2 x_0 + 2 \mu x_3 + (2 + \frac{15}{2} \mu^2) x_6 = \lambda x_6, \end{array} \right.
$$
The small eigenvalue for small $\mu$ is obtained by normalization $x_3 = 1$. Then, we obtain
$$
x_0 = -2 \mu + \mathcal{O}(\mu^3), \quad x_6 = -\mu + \mathcal{O}(\mu^3),
$$
and
\begin{equation}
\label{eigen-1}
\lambda = - 2 \mu^2 + \mathcal{O}(\mu^4).
\end{equation}
For the subspace spanned by $\{ e_1, e_2, e_4, e_5 \}$, the eigenvalue problem is given by
$$
\left\{ \begin{array}{l} (\frac{1}{3} + \frac{1}{6} \mu^2) x_1 + 2 \mu x_2 + 2 \mu x_4 + \mu^2 x_5 = \lambda x_1, \\
2 \mu x_1 + \frac{5}{2} \mu^2 x_2 + 2 \mu^2 x_4 + 2 \mu x_5  = \lambda x_2,\\
2\mu x_1 + 2 \mu^2 x_2 + (\frac{1}{3} + \frac{17}{2} \mu^2) x_4  = \lambda x_4, \\
\mu^2 x_1 + 2 \mu x_2 + (1 + \frac{13}{2} \mu^2) x_5  = \lambda x_5,\end{array} \right.
$$
The small eigenvalue for small $\mu$ is obtained by normalization $x_2 = 1$. Then, we obtain
$$
x_1 = -6 \mu + \mathcal{O}(\mu^3), \quad x_4 = 30\mu^2 + \mathcal{O}(\mu^4), \quad x_5 = -2 \mu + \mathcal{O}(\mu^3),
$$
and
\begin{equation}
\label{eigen-2}
\lambda = -\frac{27}{2} \mu^2 + \mathcal{O}(\mu^3).
\end{equation}
By the perturbation theory, for every $\mu \neq 0$ sufficiently small, $L_+$ has two simple (small) negative eigenvalues.
Other eigenvalues are bounded away from zero for small $\mu$ and by Lemma \ref{lem-bifurcations},
all other eigenvalues of $L_+$ are strictly positive. Hence, $n(L_+) = 2$.

Zero eigenvalue of $L_-^{(0)}$ is triple and is associated with the subspace spanned by $\{e_0,e_2,e_3\}$.
For every $\mu \neq 0$, a double zero eigenvalue of $L_-$ exists due to the two symmetries (\ref{symmshift1}) and (\ref{symmshift2}).
The two eigenvectors for the double zero eigenvalue of $L_-$ are spanned by  $\{ e_0, e_3, e_6, \dots \}$.
There are two invariant subspaces of $\hat{L}_-$, one is spanned by $\{ e_0, e_3, e_6 \}$
and the other one is spanned by $\{ e_1,e_2,e_4,e_5\}$.
Since we only need to compute a shift of the zero eigenvalue of $L_-^{(0)}$, we only consider
the subspace of $\hat{L}_-$ spanned by $\{ e_1,e_2,e_4,e_5\}$.
For this subspace, the eigenvalue problem for $\hat{L}_-$ is given by
$$
\left\{ \begin{array}{l} (\frac{1}{3} + \frac{1}{6} \mu^2) x_1 - 2 \mu x_2 + 2 \mu x_4 - \mu^2 x_5 = \lambda x_1, \\
-2 \mu x_1 + \frac{5}{2} \mu^2 x_2 - 2 \mu^2 x_4 + 2 \mu x_5  = \lambda x_2,\\
2\mu x_1 - 2 \mu^2 x_2 + (\frac{1}{3} + \frac{17}{2} \mu^2) x_4  = \lambda x_4, \\
- \mu^2 x_1 + 2 \mu x_2 + (1 + \frac{13}{2} \mu^2) x_5  = \lambda x_5,\end{array} \right.
$$
The small eigenvalue for small $\mu$ is obtained by normalization $x_2 = 1$. Then, we obtain
$$
x_1 = 6 \mu + \mathcal{O}(\mu^3), \quad x_4 = -30\mu^2 + \mathcal{O}(\mu^4), \quad x_5 = -2 \mu + \mathcal{O}(\mu^3),
$$
and
\begin{equation}
\label{eigen-3}
\lambda = -\frac{27}{2} \mu^2 + \mathcal{O}(\mu^3).
\end{equation}
By the perturbation theory, for every $\mu \neq 0$ sufficiently small, $L_-$ has one simple (small) negative eigenvalue
and the double zero eigenvalue. Other eigenvalues are bounded away from zero for small $\mu$ and by Lemma \ref{lem-bifurcations},
all other eigenvalues of $L_-$ are strictly positive. Hence, $n(L_-) = 1$. \\

{\bf Case (ii): $\epsilon \neq 0$, $\mu = 0$.} Here we have
$\Omega = \frac{7}{6} \epsilon^2 + \mathcal{O}(\epsilon^4)$, $b_0 = -\epsilon^2 + \mathcal{O}(\epsilon^4)$,
$b_1 = b_5 = 0$, $b_4 = -3 \epsilon^2 + \mathcal{O}(\epsilon^4)$, $b_6 = 3 \epsilon^3 + \mathcal{O}(\epsilon^4)$, and
$b_n = \mathcal{O}(\epsilon^4)$ for $n \geq 7$.
Substituting these expansions in (\ref{Hessian-explicit}), we obtain
$$
L_{\pm} = L_{\pm}^{(0)} + \epsilon L_{\pm}^{(1)} + \epsilon^2 L_{\pm}^{(2)} + \epsilon^3 L_{\pm}^{(3)} + \mathcal{O}(\epsilon^4),
$$
where
\begin{eqnarray*}
(L_{\pm}^{(0)} a)_n & = & \frac{1}{6} (n-2)(n-3) a_n \pm a_{-n}, \\
(L_{\pm}^{(1)} a)_n & = & 2(a_{n+2} + a_{n-2}) \pm 2 a_{2-n}, \\
(L_{\pm}^{(2)} a)_n & = & \frac{7}{6} n (n+1) a_n + 2 (\min(2,n) - 1) a_n - 6 (a_{n+4}+a_{n-4}) \\
& \phantom{t} & \mp 2 a_{-n} \pm (\min(2,n,4-n) - 5) a_{4-n}, \\
(L_{\pm}^{(3)} a)_n & = & -2(a_{n+2}+a_{n-2}) + 6 (a_{n+6}+a_{n-6}) \\
& \phantom{t} & - 6 (\min(2,n) + 1) a_{n+2} - 6 (\min(2,n-2) + 1) a_{n-2}\\
& \phantom{t} & \mp 2 a_{2-n} \mp 6 \min(2,n,6-n) a_{6-n}.
\end{eqnarray*}
Let us represent the first $7$-by-$7$ matrix block of the operator $L_{\pm}$ and truncate it
by up to and including $\mathcal{O}(\epsilon^3)$ terms. The corresponding matrix blocks denoted by
$\hat{L}_+$ and $\hat{L}_-$ are given respectively by
{\small $$
\left[ \begin{array}{ccccccc} 2 - 4 \epsilon^2 & 0 & 4 \epsilon - 10 \epsilon^3 & 0 &
-11 \epsilon^2 & 0 & 6 \epsilon^3 \\
0 & \frac{1}{3} + 2 \epsilon + \frac{7}{3} \epsilon^2 - 2 \epsilon^3 & 0 &
2 \epsilon - 4 \epsilon^2 - 14 \epsilon^3 & 0 & -6\epsilon^2 - 6 \epsilon^3 & 0 \\
4 \epsilon - 10 \epsilon^3 & 0 & 6 \epsilon^2 & 0 & 2 \epsilon - 32 \epsilon^3 & 0 & -6 \epsilon^2 \\
0 & 2 \epsilon - 4 \epsilon^2 - 14 \epsilon^3 & 0 & 16 \epsilon^2 - 12 \epsilon^3 & 0 & 2\epsilon -20 \epsilon^3 & 0 \\
-11 \epsilon^2  & 0 & 2 \epsilon - 32 \epsilon^3 & 0 & \frac{1}{3} + \frac{76}{3} \epsilon^2 & 0 & 2\epsilon - 20\epsilon^3 \\
0 & -6\epsilon^2 - 6 \epsilon^3 & 0 & 2 \epsilon -20 \epsilon^3  & 0 & 1 + 37 \epsilon^2 & 0 \\
6 \epsilon^3 & 0 & -6 \epsilon^2 & 0 & 2 \epsilon -20 \epsilon^3 & 0 & 2 + 51\epsilon^2
\end{array} \right]
$$}and{\small $$
\left[ \begin{array}{ccccccc} 0 & 0 & -6 \epsilon^3 & 0 &
-\epsilon^2 & 0 & 6 \epsilon^3 \\
0 & \frac{1}{3} - 2 \epsilon + \frac{7}{3} \epsilon^2 + 2 \epsilon^3 & 0 &
2 \epsilon + 4 \epsilon^2 - 14 \epsilon^3 & 0 & -6\epsilon^2 + 6 \epsilon^3 & 0 \\
-6 \epsilon^3 & 0 & 12 \epsilon^2 & 0 & 2 \epsilon - 8 \epsilon^3 & 0 & -6\epsilon^2 \\
0 & 2 \epsilon + 4\epsilon^2 - 14 \epsilon^3 & 0 & 16 \epsilon^2 + 12 \epsilon^3 & 0 & 2\epsilon -20 \epsilon^3 & 0 \\
- \epsilon^2  & 0 & 2 \epsilon - 8 \epsilon^3 & 0 & \frac{1}{3} + \frac{76}{3} \epsilon^2 & 0 & 2\epsilon - 20\epsilon^3 \\
0 & -6\epsilon^2 + 6 \epsilon^3 & 0 & 2 \epsilon -20 \epsilon^3  & 0 & 1 + 37 \epsilon^2 & 0 \\
6 \epsilon^3 & 0 & -6 \epsilon^2 & 0 & 2 \epsilon -20 \epsilon^3 & 0 & 2 + 51\epsilon^2
\end{array} \right]
$$}

Zero eigenvalue of $L_+^{(0)}$ is double and is associated with the subspace spanned by $\{e_2,e_3\}$.
There are two invariant subspaces of $\hat{L}_+$, one is spanned by $\{ e_0, e_2, e_4, e_6 \}$
and the other one is spanned by $\{ e_1,e_3,e_5\}$.
For the subspace spanned by $\{ e_0, e_2, e_4, e_6\}$, the eigenvalue problem is given by
$$
\left\{ \begin{array}{l} (2-4 \epsilon^2) x_0 + (4 \epsilon - 10 \epsilon^3) x_2 - 11 \epsilon^2 x_4 + 6 \epsilon^3 x_6 = \lambda x_0, \\
(4 \epsilon - 10 \epsilon^3) x_0 + 6 \epsilon^2 x_2 + (2 \epsilon - 32 \epsilon^3) x_4 - 6 \epsilon^2 x_6 = \lambda x_2,\\
-11 \epsilon^2 x_0 + (2 \epsilon - 32 \epsilon^3) x_2 + (\frac{1}{3} + \frac{76}{3} \epsilon^2) x_4 + (2 \epsilon - 20 \epsilon^3) x_6 = \lambda x_4,\\
6 \epsilon^3 x_0 - 6 \epsilon^2 x_2 + (2 \epsilon - 20 \epsilon^3) x_4 + (2 + 51 \epsilon^2) x_6 = \lambda x_6. \end{array} \right.
$$
The small eigenvalue for small $\epsilon$ is obtained by normalization $x_2 = 1$. Then, we obtain
$$
x_0 = -2 \epsilon + \mathcal{O}(\epsilon^3), \quad x_4 = -6 \epsilon + \mathcal{O}(\epsilon^3), \quad
x_6 = 9 \epsilon^2 + \mathcal{O}(\epsilon^4),
$$
and
\begin{equation}
\label{eigen-4}
\lambda = - 14 \epsilon^2 + \mathcal{O}(\epsilon^4).
\end{equation}
For the subspace spanned by $\{ e_1, e_3, e_5 \}$, the eigenvalue problem is given by
$$
\left\{ \begin{array}{l} (\frac{1}{3} + 2 \epsilon + \frac{7}{3} \epsilon^2 - 2 \epsilon^3) x_1
+ (2 \epsilon - 4 \epsilon^2 - 14 \epsilon^3) x_3 - (6 \epsilon^2 + 6 \epsilon^3) x_5 = \lambda x_1, \\
(2 \epsilon - 4 \epsilon^2 - 14 \epsilon^3) x_1
+ (16 \epsilon^2 - 12 \epsilon^3) x_3 + (2 \epsilon - 20 \epsilon^3) x_5 = \lambda x_3, \\
- (6 \epsilon^2 + 6 \epsilon^3) x_1
+ (2 \epsilon - 20 \epsilon^3) x_3 + (1 + 37 \epsilon^2) x_5 = \lambda x_5. \end{array} \right.
$$
The small eigenvalue for small $\epsilon$ is obtained by normalization $x_3 = 1$. Then, we obtain
$$
x_1 = -6 \epsilon + 48 \epsilon^2 + \mathcal{O}(\epsilon^3), \quad
x_5 = -2 \epsilon + \mathcal{O}(\epsilon^3),
$$
and
\begin{equation}
\label{eigen-5}
\lambda = 108 \epsilon^3 + \mathcal{O}(\epsilon^4).
\end{equation}
By the perturbation theory, for every $\epsilon \neq 0$ sufficiently small, $L_+$ has one simple small negative
eigenvalue and one simple small positive eigenvalue.
Other eigenvalues are bounded away from zero for small $\epsilon$ and by Lemma \ref{lem-bifurcations},
all other eigenvalues of $L_+$ are strictly positive. Hence, $n(L_+) = 1$.

Zero eigenvalue of $L_-^{(0)}$ is triple and is associated with the subspace spanned by $\{e_0,e_2,e_3\}$.
For every $\epsilon \neq 0$, a double zero eigenvalue of $L_-$ exists due to the two symmetries (\ref{symmshift1}) and (\ref{symmshift2}).
The two eigenvectors for the double zero eigenvalue of $L_-$ are spanned by  $\{ e_0, e_2, e_4, e_6, \dots \}$.
There are two invariant subspaces of $\hat{L}_-$, one is spanned by $\{ e_0, e_2, e_4, e_6 \}$
and the other one is spanned by $\{ e_1,e_3,e_5\}$.
Since we only need to compute a shift of the zero eigenvalue of $L_-^{(0)}$, we only consider
the subspace of $\hat{L}_-$ spanned by $\{ e_1,e_3,e_5\}$.
For this subspace, the eigenvalue problem for $\hat{L}_-$ is given by
$$
\left\{ \begin{array}{l} (\frac{1}{3} - 2 \epsilon + \frac{7}{3} \epsilon^2 + 2 \epsilon^3) x_1
+ (2 \epsilon + 4 \epsilon^2 - 14 \epsilon^3) x_3 + (-6 \epsilon^2 + 6 \epsilon^3) x_5 = \lambda x_1, \\
(2 \epsilon + 4 \epsilon^2 - 14 \epsilon^3) x_1
+ (16 \epsilon^2 + 12 \epsilon^3) x_3 + (2 \epsilon - 20 \epsilon^3) x_5 = \lambda x_3, \\
(-6 \epsilon^2 + 6 \epsilon^3) x_1
+ (2 \epsilon - 20 \epsilon^3) x_3 + (1 + 37 \epsilon^2) x_5 = \lambda x_5. \end{array} \right.
$$
The small eigenvalue for small $\epsilon$ is obtained by normalization $x_3 = 1$. Then, we obtain
$$
x_1 = -6 \epsilon - 48 \epsilon^2 + \mathcal{O}(\epsilon^3), \quad
x_5 = -2 \epsilon + \mathcal{O}(\epsilon^3),
$$
and
\begin{equation}
\label{eigen-6}
\lambda = -108 \epsilon^3 + \mathcal{O}(\epsilon^4).
\end{equation}
By the perturbation theory, for every $\epsilon \neq 0$ sufficiently small, $L_-$ has one simple (small) negative eigenvalue
and the double zero eigenvalue. Other eigenvalues are bounded away from zero for small $\epsilon$ and by Lemma \ref{lem-bifurcations},
all other eigenvalues of $L_-$ are strictly positive. Hence, $n(L_-) = 1$. \\

{\bf Case (iii): $\epsilon < 0$, $\mu = 2 |\epsilon|^{3/2} + \mathcal{O}(\epsilon^2)$.} Here
we introduce $\delta := (-\epsilon)^{1/2}$ and write $\Omega = \frac{7}{6} \delta^4 + \frac{28}{3} \delta^6 + \mathcal{O}(\delta^8)$,
$b_0 = -\delta^4 - 4 \delta^6 + \mathcal{O}(\delta^8)$, $b_1 = 12 \delta^5 + \mathcal{O}(\delta^7)$,
$b_4 = -3 \delta^4 + \mathcal{O}(\delta^8)$, $b_5 = 4 \delta^5 + \mathcal{O}(\delta^8)$,
$b_6 = -5 \delta^6 + \mathcal{O}(\delta^8)$, and $b_n = \mathcal{O}(\delta^7)$ for $n \geq 7$.
Substituting these expansions in (\ref{Hessian-explicit}), we obtain
$$
L_{\pm} = L_{\pm}^{(0)} + \delta^2 L_{\pm}^{(2)} + \delta^3 L_{\pm}^{(3)} + \delta^4 L_{\pm}^{(4)} + \delta^5 L_{\pm}^{(5)} + \delta^6 L_{\pm}^{(6)}
+ \mathcal{O}(\delta^7),
$$
where
\begin{eqnarray*}
(L_{\pm}^{(0)} a)_n & = & \frac{1}{6} (n-2)(n-3) a_n \pm a_{-n}, \\
(L_{\pm}^{(2)} a)_n & = & -2(a_{n+2} + a_{n-2}) \mp 2 a_{2-n}, \\
(L_{\pm}^{(3)} a)_n & = & -4(a_{n+3} + a_{n-3}) \mp 4 a_{3-n}, \\
(L_{\pm}^{(4)} a)_n & = & \frac{7}{6} n (n+1) a_n + 2 (\min(2,n) - 1) a_n - 6 (a_{n+4}+a_{n-4}) \\
& \phantom{t} & \mp 2 a_{-n} \pm (\min(2,n,4-n) - 5) a_{4-n}, \\
(L_{\pm}^{(5)} a)_n & = & -20(a_{n+1}+a_{n-1})-8(a_{n+5}+a_{n-5}) +4\min(2,n-1)a_{n-1} \\
& \phantom{t} & +4\min(n,2)a_{n+1} \mp 24a_{1-n} \pm 4(\min(2,n,5-n)-1)a_{5-n}, \\
(L_{\pm}^{(6)} a)_n & = & \frac{28}{3} n (n+1) a_n + 8(a_{n+2} + a_{n-2}) -10(a_{n+6} + a_{n-6})+ 6 \min(2,n-2) a_{n-2} \\
& \phantom{t} & + 6 \min(2,n) a_{n+2} +8(\min(3,n)-1) a_{n} \mp 8 a_{-n} \pm 2a_{2-n} \\
& \phantom{t} & \pm 4\min(3,n,6-n) a_{6-n} \pm 6\min(2,n,6-n) a_{6-n}.
\end{eqnarray*}
Let us represent the first $7$-by-$7$ matrix block of the operator $L_{\pm}$ and truncate it
by up to and including $\mathcal{O}(\delta^6)$ terms.  The corresponding matrix blocks denoted by
$\hat{L}_+$ and $\hat{L}_-$ are given respectively by
$$
\hat{L}_+ = \left[ \begin{array}{cccc} 2 - 4 \delta^4 - 16\delta^6 & -44 \delta^5 & -4 \delta^2 + 10 \delta^6 & -8 \delta^3 \\
-44 \delta^5 & \frac{1}{3} - 2 \delta^2 + \frac{7}{3} \delta^4 + \frac{62}{3} \delta^6 & -4\delta^3 - 16\delta^5 & -2 \delta^2 -4 \delta^4 +14 \delta^6 \\
-4 \delta^2 + 10 \delta^6 & -4\delta^3 -16 \delta^5 & 6 \delta^4 +64 \delta^6 & -8 \delta^5 \\
-8 \delta^3 & -2 \delta^2 - 4 \delta^4 + 14 \delta^6 & -8\delta^5 & 16 \delta^4 + 152 \delta^6  \\
-11 \delta^4  & -4 \delta^3 & -2 \delta^2 + 40 \delta^6 & -12 \delta^5 \\
-12 \delta^5 & -6\delta^4 + 10 \delta^6 & -4 \delta^3 & -2 \delta^2 + 20 \delta^6  \\
-10 \delta^6 &-8 \delta^5 & -6 \delta^4 & -4 \delta^3
\end{array} \right.
$$
$$
 \left. \begin{array}{ccccc}
 -11 \delta^4 & -12\delta^5 & -10 \delta^6 \\
-4\delta^3 & -6\delta^4 + 10 \delta^6 & -8\delta^5 \\
-2 \delta^2 + 40 \delta^6 & -4\delta^3 & -6 \delta^4 \\
-12 \delta^5 & -2\delta^2 +20 \delta^6 & -4 \delta^3   \\
\frac{1}{3} + \frac{76}{3} \delta^4 +\frac{608}{3}\delta^6 & -12 \delta^5 & -2\delta^2 + 20\delta^6 \\
-12 \delta^5  & 1 + 37 \delta^4 + 296 \delta^6 & -12 \delta^5 \\
 -2 \delta^2 + 20 \delta^6 & -12 \delta^5  & 2 + 51\delta^2 + 408 \delta^3
\end{array} \right]
$$
and
$$
\hat{L}_- = \left[ \begin{array}{cccc} 20 & 4 \delta^5  & 6 \delta^6 & 0 \\
4 \delta^5 & \frac{1}{3} + 2 \delta^2 + \frac{7}{3} \delta^4 + \frac{50}{3} \delta^6 & 4\delta^3 - 16\delta^5 & -2 \delta^2 +4 \delta^4 +14 \delta^6 \\
6\delta^6 & 4\delta^3 -16 \delta^5 & 12 \delta^4 +64 \delta^6 & -16 \delta^5 \\
0 & -2 \delta^2 + 4 \delta^4 + 14 \delta^6 & -16\delta^5  & 16 \delta^4 + 104 \delta^6  \\
-\delta^4  & -4 \delta^3 & -2 \delta^2 & -12 \delta^5 \\
-4 \delta^5 & -6\delta^4 - 10 \delta^6 & -4 \delta^3  & -2 \delta^2 + 20 \delta^6  \\
-10 \delta^6 &-8 \delta^5  & -6 \delta^4 & -4 \delta^3
\end{array} \right.
$$
$$
 \left. \begin{array}{ccccc}
 - \delta^4 & -4\delta^5 & -10 \delta^6 \\
-4\delta^3 & -6\delta^4 - 10 \delta^6 & -8\delta^5  \\
-2 \delta^2 & -4\delta^3  & -6 \delta^4 \\
-12 \delta^5 & -2\delta^2 +20 \delta^6 & -4 \delta^3 \\
\frac{1}{3} + \frac{76}{3} \delta^4 +\frac{608}{3}\delta^6 & -12 \delta^5  & -2\delta^2 + 20\delta^6 \\
-12 \delta^5 & 1 + 37 \delta^4 + 296 \delta^6 & -12 \delta^5  \\
 -2 \delta^2 + 20 \delta^6 & -12 \delta^5 & 2 + 51\delta^2 + 408 \delta^3
\end{array} \right]
$$

Zero eigenvalue of $L_+^{(0)}$ is double and is associated with the subspace spanned by $\{e_2,e_3\}$.
Since no invariant subspaces of $\hat{L}_+$ exist, we have to proceed with full perturbative expansions.
As a first step, we express $(x_0,x_1,x_4,x_5,x_5)$ for the subspace spanned by $\{e_0, e_1, e_4, e_5, e_6\}$
in terms of $\{ x_2, x_3\}$ for the subspace spanned by $\{e_2,e_3\}$, $\lambda$, and $\delta$. We assume
that $\lambda = \mathcal{O}(\delta^4)$ for the small eigenvalues and neglect terms of the order $\mathcal{O}(\delta^7)$ and higher.
This expansion is given by
$$
\left\{ \begin{array}{l}
x_0 = (2\delta^2 + 32 \delta^6 + \lambda \delta^2)x_2 +4\delta^3 x_3,\\
x_1 = (12\delta^3 + 192 \delta^5)x_2 + (6 \delta^2 + 48 \delta^4 + 240\delta^6 + 18 \lambda\delta^2)x_3, \\
x_4 = (6\delta^2 -312\delta^6 + 18 \lambda \delta^2)x_2 + 120\delta^5 x_3,\\
x_5 = 4\delta^3 x_2 + (2\delta^2 -58\delta^6 + 2 \lambda \delta^2)x_3, \\
x_6 = (9 \delta^4 - \frac{459}{2} \delta^6) x_2 + (2 \delta^3 - 51 \delta^5 - 408 \delta^6) x_3.
\end{array}
\right.
$$
Next, we substitute these expansions to the third and fourth equations of the eigenvalue problem
for $\hat{L}_+$ and again neglect terms of the order $\mathcal{O}(\delta^7)$ and higher:
$$
\left\{ \begin{array}{l}
(-14\delta^4 - \lambda) x_2 -56 \delta^5 x_3 = 0, \\
-56 \delta^5 x_2 + (-8\delta^6 - \lambda) x_3 = 0.
\end{array}
\right.
$$
The reduced eigenvalue problem has two eigenvalues with the expansions
\begin{equation}
\label{eigen-7}
\lambda = -14 \delta^4 + \mathcal{O}(\delta^6)\quad \mbox{\rm when} \quad x_3 = (4 \delta + \mathcal{O}(\delta^3))x_2
\end{equation}
and
\begin{equation}
\label{eigen-8}
\lambda = 216 \delta^6 + \mathcal{O}(\delta^8) \quad \mbox{\rm when} \quad x_2 = (-4 \delta + \mathcal{O}(\delta^3))x_3.
\end{equation}
By the perturbation theory, for every $\delta \neq 0$ sufficiently small, $L_+$ has one simple small negative
eigenvalue and one simple small positive eigenvalue.
Other eigenvalues are bounded away from zero for small $\delta$ and by Lemma \ref{lem-bifurcations},
all other eigenvalues of $L_+$ are strictly positive. Hence, $n(L_+) = 1$.

Zero eigenvalue of $L_-^{(0)}$ is triple and is associated with the subspace spanned by $\{e_0,e_2,e_3\}$.
We proceed again with full perturbative expansions. First, we express $(x_1,x_4,x_5,x_5)$
for the subspace spanned by $\{e_1, e_4, e_5, e_6\}$
in terms of $\{x_0, x_2, x_3\}$ for the subspace spanned by $\{e_0,e_2,e_3\}$, $\lambda$, and $\delta$. We assume
that $\lambda = \mathcal{O}(\delta^6)$ for the small eigenvalues and neglect terms of the order $\mathcal{O}(\delta^7)$ and higher.
This expansion is given by
$$
\left\{ \begin{array}{l}
x_1 = - 12\delta^5 x_0 + (-12\delta^3 + 192 \delta^5)x_2 + (6 \delta^2 - 48 \delta^4+ 240 \delta^6 )x_3, \\
x_4 = 3\delta^4 x_0 + (6\delta^2 - 546\delta^6 )x_2 + 120 \delta^5 x_3,\\
x_5 = 4\delta^5 x_0 + 4\delta^3 x_2 + (2\delta^2 - 58 \delta^6)x_3, \\
x_6 = 8\delta^6 x_0 + 9\delta^4 x_2 + 2\delta^3 x_3.
\end{array}
\right.
$$
Next, we substitute these expansions to the first, third and fourth equations of the eigenvalue problem
for $\hat{L}_-$ and again neglect terms of the order $\mathcal{O}(\delta^7)$ and higher:
$$
\left\{ \begin{array}{l}
\lambda x_0 = 0, \\
\lambda x_2 = 0, \\
(216 \delta^6 - \lambda)x_3 = 0.
\end{array}
\right.
$$
The double zero eigenvalue persists due to two gauge symmetries, whereas one eigenvalue is expanded by
\begin{equation}
\label{eigen-9}
\lambda = 216 \delta^6 + \mathcal{O}(\delta^8).
\end{equation}
By the perturbation theory, for every $\delta \neq 0$ sufficiently small, $L_-$ has one simple (small) negative eigenvalue
and the double zero eigenvalue. Other eigenvalues are bounded away from zero for small $\delta$ and by Lemma \ref{lem-bifurcations},
all other eigenvalues of $L_-$ are strictly positive. Hence, $n(L_-) = 0$.
\end{proof}

In the remainder of this section, we consider the two constraints related to the fixed values of $Q$ and $E$. The constraints
may change the number of negative eigenvalues of the linearization operator $L_+$ constrained
by the following two orthogonality conditions
\begin{equation}
\label{constraint-ground-state}
[X_c]^{\perp} := \left\{ a \in \ell^2(\mathbb{N}) : \quad \langle MA, a \rangle = \langle M^2 A, a \rangle = 0 \right\},
\end{equation}
where $A$ denotes a real-valued solution of system (\ref{sys}) and $M = {\rm diag}(1,2,3,\dots)$.
The constrained space $[X_c]^{\perp}$ is a {\em symplectically orthogonal} subspace  of $\ell^2(\mathbb{N})$
to $X_0 = {\rm span}\{ A, MA \} \subset \ell^2(\mathbb{N})$, the two-dimensional subspace associated
with the double zero eigenvalue of $L_-$ related to the phase rotation symmetries (\ref{symmshift1}) and (\ref{symmshift2}).
Alternatively, the constrained space arises when the perturbation $a$ does not change at the linear approximation
the conserved quantities $Q$ and $E$ defined by (\ref{charge}) and (\ref{lenergy}).
Note that the constraints in (\ref{constraint-ground-state}) are only imposed on the real part
of the perturbation $a$.

Let $\tilde{A}$ denote the stationary state of the stationary equation (\ref{sys}) continued with respect to
two parameters $(\lambda,\omega)$. Let $n(L_+)$ and $z(L_+)$ denote the number of negative and zero eigenvalues of
$L_+$ in $\ell^2(\mathbb{N})$ counted with their multiplicities, where $L_+$ is the linearized operator at $\tilde{A}$.
Let $n_c(L_+)$ and $z_c(L_+)$ denote the number of
negative and zero eigenvalues of $L_+$ constrained in $[X_c]^{\perp}$.
Assume non-degeneracy of the stationary state $\tilde{A}$ in the sense that $z(L_+) = 0$.
By Theorem 4.1 in \cite{pel-book}, we have
\begin{equation}
\label{neg-index}
n_c(L_+) = n(L_+) - p(D) - z(D), \quad z_c(L_+) = z(D),
\end{equation}
where $p(D)$ and $z(D)$ are the number of positive and zero eigenvalues of the $2 \times 2$ matrix
\begin{equation}
\label{D-matrix}
D := \left[ \begin{array}{cc} \frac{\partial \mathcal{Q}}{\partial \lambda} & \frac{\partial \mathcal{Q}}{\partial \omega} \\
\frac{\partial (\mathcal{Q}-\mathcal{E})}{\partial \lambda} & \frac{\partial (\mathcal{Q} - \mathcal{E})}{\partial \omega} \end{array} \right],
\end{equation}
with $\mathcal{Q}(\lambda,\omega) = Q(\tilde{A})$ and $\mathcal{E}(\lambda,\omega) = E(\tilde{A})$ evaluated at the
stationary solution $\tilde{A}$ as a function of the two parameters $(\lambda,\omega)$.

\begin{rem}
\label{remark-pair}
For the pair of stationary states (\ref{min-intro}),
it was computed in \cite{bal} that
\begin{equation}
\label{Q-E}
\mathcal{Q}(\lambda,\omega) = \frac{6}{7} (\lambda + \omega), \quad \mathcal{E}(\lambda,\omega) = 6 \omega,
\end{equation}
where $\lambda$ and $\omega$ are related to the parameters $c$ and $p$ in (\ref{omega-lambda-intro}).
Substituting (\ref{Q-E}) into (\ref{D-matrix}) yields
\begin{equation}
\label{D-explicit}
D = \left[ \begin{array}{cc} 6/7 & 6/7 \\ 6/7 & -36/7 \end{array} \right],
\end{equation}
hence, $D$ has one positive and one negative eigenvalue. If $z(L_+) = 0$ holds,
then $n_c(L_+) = n(L_+) - 1$ and $z_c(L_+) = 0$ by (\ref{neg-index}). We will show in Lemma \ref{lemma3} below that
the same count is true for all three branches of Theorem \ref{theorem-double} bifurcating from $\omega_*$.
\end{rem}

By the scaling transformation (\ref{symmscale}), if the stationary state is given by (\ref{stac}) with real $A$,
then the stationary state is continued with respect to parameter $c > 0$ as
$$
\tilde{\alpha}_n(t) = c A_n e^{-i c^2 \lambda t + i n c^2 \omega t},
$$
hence $\tilde{A} = c A$, $\tilde{\lambda} = c^2 \lambda$, and $\tilde{\omega} = c^2 \omega$.
Substituting these relations into $\mathcal{\tilde{Q}}(\tilde{\lambda},\tilde{\omega}) = Q(\tilde{A})$ and
$\mathcal{\tilde{E}}(\tilde{\lambda},\tilde{\omega}) = E(\tilde{A})$ for $\lambda = 1$ yields
$$
\mathcal{\tilde{Q}}(\tilde{\lambda},\tilde{\omega}) = c^2 \mathcal{Q}_0(\omega) = \tilde{\lambda}
\mathcal{Q}_0(\tilde{\omega} \tilde{\lambda}^{-1})
$$
and
$$
\mathcal{\tilde{E}}(\tilde{\lambda},\tilde{\omega}) = c^2 \mathcal{E}_0(\omega) = \tilde{\lambda}
\mathcal{E}_0(\tilde{\omega} \tilde{\lambda}^{-1}),
$$
where $\mathcal{Q}_0(\omega) = \mathcal{Q}(1,\omega)$ and $\mathcal{E}_0(\omega) = \mathcal{E}(1,\omega)$.
Substituting these representations into (\ref{D-matrix}), evaluating derivatives, and setting $c = 1$ yield
the computational formula
\begin{equation}
\label{D-matrix-simplified}
D = \left[ \begin{array}{cc} \mathcal{Q}_0(\omega) - \omega \mathcal{Q}'_0(\omega) & \mathcal{Q}'_0(\omega) \\
\mathcal{Q}_0(\omega) - \mathcal{E}_0(\omega) - \omega \left[ \mathcal{Q}'_0(\omega) - \mathcal{E}'_0(\omega) \right] &
\mathcal{Q}'_0(\omega) - \mathcal{E}'_0(\omega) \end{array} \right],
\end{equation}
which can be used to compute $D$ for the normalized stationary state $A$ with $\lambda = 1$.
The following lemma gives the variational characterization of
the three branches in Theorem \ref{theorem-double} bifurcating from $\omega_*$
as critical points of $H$ subject to fixed $Q$ and $E$.

\begin{lemma}
Consider the three bifurcating branches in Theorem \ref{theorem-double} for
$\omega_* := \omega_2 = \omega_3 = 1/6$. For every $\omega \neq \omega_*$ with $|\omega - \omega_*|$ sufficiently small,
the following is true:
\begin{itemize}
\item[(i)] The branch is a saddle point of $H$ subject to fixed $Q$ and $E$
with $n_c(L_+) = 1$ and $n(L_-) = 1$;
\item[(ii)] The branch is a saddle point of $H$ subject to fixed $Q$ and $E$
with $n_c(L_+) = 0$ and $n(L_-) = 1$;
\item[(iii)] The branch is a minimizer of $H$ subject to fixed $Q$ and $E$
with $n_c(L_+) = 0$ and $n(L_-) = 0$.
\end{itemize}
The critical points are degenerate only with respect to the two phase rotations
(\ref{symmshift1}) and (\ref{symmshift2}) resulting in $z(L_+) = 0$ and $z(L_-) = 2$.
\label{lemma3}
\end{lemma}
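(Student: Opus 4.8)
The plan is to feed the spectral count of Lemma \ref{lemma2} into the constrained index formula (\ref{neg-index}). Since Lemma \ref{lemma2} already supplies $z(L_+)=0$ on every branch, together with a uniform spectral gap and $z(L_-)=2$ (the kernel of $L_-$ being spanned exactly by the generators of the phase rotations (\ref{symmshift1}) and (\ref{symmshift2})), the hypotheses of Theorem 4.1 in \cite{pel-book} are met, and the only quantity still to be determined along each branch is the $2\times2$ matrix $D$ of (\ref{D-matrix}). Because the constraints defining $[X_c]^{\perp}$ act only on the real part of the perturbation, while the imaginary part is governed by $L_-$ modulo its gauge kernel, the variational type of each branch is then read off from the pair $\bigl(n_c(L_+),n(L_-)\bigr)$.

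The matrix $D$ would be computed through the reduced formula (\ref{D-matrix-simplified}), valid after the scaling normalization $\lambda=1$, which needs only the values of $\mathcal{Q}_0(\omega)=Q(A)$ and $\mathcal{E}_0(\omega)=E(A)$ and of $\mathcal{Q}_0'$, $\mathcal{E}_0'$ at $\omega=\omega_*=1/6$ along the branch. I would substitute the expansion $A_n=\delta_{n0}+\epsilon\delta_{n2}+\mu\delta_{n3}+b_n$ of Theorem \ref{theorem-double}, with $(\Omega,b)$ taken from the expansions already produced in the proof of Lemma \ref{lemma2}, into $Q(A)=\sum_{n\ge0}(n+1)|A_n|^2$ and $E(A)=\sum_{n\ge0}(n+1)^2|A_n|^2$, retaining the quadratic correction $b_0=-\epsilon^2-\mu^2+\mathcal{O}(4)$, and then eliminate the branch parameter via $\omega-\omega_*=\Omega$. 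On branches (ii) and (iii)---which have the same leading expansion, since $\mu^2=\mathcal{O}(\epsilon^3)$ on (iii)---I expect $\mathcal{Q}_0'(\omega_*)=6/7$ and $\mathcal{Q}_0'(\omega_*)-\mathcal{E}_0'(\omega_*)=-36/7$, so that $D$ reproduces the matrix (\ref{D-explicit}); on branch (i) I expect $\mathcal{Q}_0'(\omega_*)=24$ and $\mathcal{Q}_0'(\omega_*)-\mathcal{E}_0'(\omega_*)=-144$, yielding
\begin{equation*}
D=\left[\begin{matrix}-3 & 24 \\ 24 & -144\end{matrix}\right]+\mathcal{O}(\omega-\omega_*).
\end{equation*}
In each case $\det D<0$, hence $D$ has one positive and one negative eigenvalue, i.e.\ $p(D)=1$ and $z(D)=0$. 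For branch (iii) this same conclusion follows at once from Remark \ref{remark-state} together with Remark \ref{remark-pair}, that branch being the explicit pair of states for which $D$ equals (\ref{D-explicit}).

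With $p(D)=1$ and $z(D)=0$, formula (\ref{neg-index}) gives $n_c(L_+)=n(L_+)-1$ and $z_c(L_+)=0$ on every branch. Combining with the values $(n(L_+),n(L_-))$ equal to $(2,1)$, $(1,1)$, $(1,0)$ from Lemma \ref{lemma2}, we obtain $n_c(L_+)=1$, $n(L_-)=1$ on branch (i); $n_c(L_+)=0$, $n(L_-)=1$ on branch (ii); and $n_c(L_+)=0$, $n(L_-)=0$ on branch (iii). Therefore branch (iii) is a local constrained minimizer of $H$ subject to fixed $Q$ and $E$, while branches (i) and (ii) are constrained saddle points (on branch (ii) the sole negative direction lies in the imaginary part, which is governed by $L_-$). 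In all three cases the only degeneracy is $z(L_+)=0$, $z(L_-)=2$, coming from the two phase rotations (\ref{symmshift1}) and (\ref{symmshift2}).

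The delicate point is the bookkeeping in the second step: $\mathcal{Q}_0'(\omega_*)$ and $\mathcal{E}_0'(\omega_*)$ are ratios of the quadratic coefficient of $Q$ (resp.\ $E$) to the quadratic coefficient of $\Omega$, so an error in $b_0$ or in $\Omega$ would flip the sign of $\det D$ and change the conclusion. The cross-check that branch (iii) reproduces the independently computed matrix (\ref{D-explicit}) is what makes the computation trustworthy; the applicability of the index formula and the spectral gap are already supplied by Lemma \ref{lemma2}.
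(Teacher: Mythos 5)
Your proposal is correct and follows essentially the same route as the paper: it feeds the counts $n(L_\pm)$, $z(L_\pm)$ from Lemma \ref{lemma2} into the index formula (\ref{neg-index}), computes $D$ from the expansions of $\mathcal{Q}_0$, $\mathcal{E}_0$ along each branch (your predicted matrices $\left[\begin{smallmatrix}-3 & 24\\ 24 & -144\end{smallmatrix}\right]$ for branch (i) and (\ref{D-explicit}) for branches (ii), (iii) are exactly what the paper obtains), and uses the cross-check against Remark \ref{remark-pair} for branch (iii). The only cosmetic difference is that the paper evaluates $D$ directly via (\ref{D-matrix-simplified}) from $\mathcal{Q}_0(\omega)=1+2\mu^2$, $\mathcal{E}_0(\omega)=1+14\mu^2$ (case (i)) and $\mathcal{Q}_0(\omega)=1+\epsilon^2$, $\mathcal{E}_0(\omega)=1+7\epsilon^2$ (cases (ii), (iii)) rather than first extracting $\mathcal{Q}_0'(\omega_*)$, $\mathcal{E}_0'(\omega_*)$, but the bookkeeping is identical.
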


\begin{proof}
{\bf Case (i): $\epsilon = 0$, $\mu \neq 0$.} We compute $\mathcal{Q}_0(\omega)$ and $\mathcal{E}_0(\omega)$
as powers of $\mu$ with the following relation between $\omega$ and $\mu$:
$$
\omega = \frac{1}{6} + \frac{1}{12} \mu^2 + \mathcal{O}(\mu^4).
$$
Then it follows that
\begin{eqnarray*}
\left\{ \begin{array}{l}
\mathcal{Q}_0(\omega) = A_0^2 + 4 A_3^2 + 7 A_6^2 + \dots = 1 + 2 \mu^2 + \mathcal{O}(\mu^4), \\
\mathcal{E}_0(\omega) = A_0^2 + 4^2 A_3^2 + 7^2 A_6^2 + \dots = 1 + 14 \mu^2 + \mathcal{O}(\mu^4),
\end{array} \right.
\end{eqnarray*}
so that
\begin{eqnarray*}
D = \left[ \begin{array}{cc} -3 + \mathcal{O}(\mu^2) & 24 + \mathcal{O}(\mu^2) \\
24 + \mathcal{O}(\mu^2) & -144 + \mathcal{O}(\mu^2) \end{array} \right]
\end{eqnarray*}
has one positive and one negative eigenvalue.
Since $n(L_+) = 2$ and $z(L_+) = 0$ by Lemma \ref{lemma2}, we have $n_c(L_+) = n(L_+) - 1 = 1$
by (\ref{neg-index}). At the same time, $n(L_-) = 1$ and $z(L_-) = 2$ by Lemma \ref{lemma2}. \\

{\bf Case (ii): $\epsilon \neq 0$, $\mu = 0$.} We compute $\mathcal{Q}_0(\omega)$ and $\mathcal{E}_0(\omega)$
as powers of $\epsilon$ with the following relation between $\omega$ and $\epsilon$:
$$
\omega = \frac{1}{6} + \frac{7}{6} \epsilon^2 + \mathcal{O}(\epsilon^4).
$$
Then it follows that
\begin{eqnarray*}
\left\{ \begin{array}{l}
\mathcal{Q}_0(\omega) = A_0^2 + 3 A_2^2 + 5 A_4^2 + \dots = 1 + \epsilon^2 + \mathcal{O}(\epsilon^4), \\
\mathcal{E}_0(\omega) = A_0^2 + 3^2 A_2^2 + 5^2 A_4^2 + \dots = 1 + 7 \epsilon^2 + \mathcal{O}(\epsilon^4),
\end{array} \right.
\end{eqnarray*}
so that
\begin{eqnarray*}
D = \left[ \begin{array}{cc} 6/7 + \mathcal{O}(\epsilon^2) & 6/7 + \mathcal{O}(\epsilon^2) \\
6/7 + \mathcal{O}(\epsilon^2) & -36/7 + \mathcal{O}(\epsilon^2) \end{array} \right]
\end{eqnarray*}
has one positive and one negative eigenvalue. Since
$n(L_+) = 1$ and $z(L_+) = 0$ by Lemma \ref{lemma2}), we have
$n_c(L_+) = n(L_+) - 1 = 0$ by (\ref{neg-index}). At the same time, $n(L_-) = 1$ and $z(L_-) = 2$
by Lemma \ref{lemma2}. \\

{\bf Case (iii): $\epsilon < 0$, $\mu = 2 |\epsilon|^{3/2} + \mathcal{O}(\epsilon^2)$.}
We compute $\mathcal{Q}_0(\omega)$ and $\mathcal{E}_0(\omega)$
as powers of $\epsilon$ with the following relation between $\omega$ and $\epsilon$:
$$
\omega = \frac{1}{6} + \frac{7}{6} \epsilon^2  - \frac{28}{3} \epsilon^3 + \mathcal{O}(\epsilon^4).
$$
Then it follows that
\begin{eqnarray*}
\left\{ \begin{array}{l}
\mathcal{Q}_0(\omega) = A_0^2 + 2 A_1^2 + 3 A_2^2 + 4 A_3^2 + 5 A_4^2 + \dots = 1 + \epsilon^2 + \mathcal{O}(\epsilon^3), \\
\mathcal{E}_0(\omega) = A_0^2 + 2^2 A_1^2 + 3^2 A_2^2 + 4^2 A_3^2 + 5^2 A_4^2 + \dots = 1 + 7 \epsilon^2 + \mathcal{O}(\epsilon^3).
\end{array} \right.
\end{eqnarray*}
The only difference in these expansions compared to the case (ii) is the remainder term as large as $\mathcal{O}(\epsilon^3)$
compared to $\mathcal{O}(\epsilon^4)$. This changes the remainder terms in $D$ to $\mathcal{O}(\epsilon)$ compared to
$\mathcal{O}(\epsilon^2)$ but does not affect the conclusion on $D$. Moreover, we can see that
the computation of $D$ agrees with the exact expression (\ref{D-explicit}) in Remark \ref{remark-pair}. The count $n_c(L_+) = n(L_+) - 1 = 0$
and $n(L_-) = 0$ follows by Lemma \ref{lemma2}.
\end{proof}

\begin{rem}
The presence of conserved quantity $Z(\alpha)$ in (\ref{Z}) does not modify the variational characterization
of the stationary states with $\omega \neq 0$ because $Z(\alpha) = 0$ if $\alpha$ is
the stationary state (\ref{stac}) with $\omega \neq 0$. This follows from the fact that $Z(\alpha)$
is independent of $t$, which is impossible if $Z(\alpha) \neq 0$ and $\omega \neq 0$.
Hence, any stationary state (\ref{stac}) must satisfy the constraint:
$$
\omega \neq 0: \quad \sum_{n=0}^{\infty} (n+1) (n+2) A_n A_{n+1} = 0,
$$
which can be verified for all states in Theorem \ref{theorem-simple} and \ref{theorem-double}
bifurcating from $\omega_m$ for $m \geq 2$.
\end{rem}

\section{Bifurcation from the second eigenmode}

Here we study bifurcations of stationary states in the system of algebraic equations (\ref{sys-real})
from the second eigenmode given by (\ref{1mode}) with $N = 1$.
Without loss of generality, the scaling transformation (\ref{symmscale}) yields
$c = 1$ and $\lambda - \omega = 1$.
By setting $A_n = \delta_{n 1} + a_{n}$ with real-valued perturbation $a$, we rewrite
the system (\ref{sys-real}) with $\lambda = 1 + \omega$ in the perturbative form (\ref{sys-pert}),
where $L(\omega)$ is a block-diagonal operator with the diagonal entries
\begin{equation}
\label{L-plus-N-1}
[L(\omega)]_{nn} = \left\{ \begin{array}{ll} 1-\omega, & n = 0, \\
4, & n = 1, \\
1 + 3 \omega, & n = 2, \\
(n^2-1) \omega - n + 3, & n \geq 3, \end{array} \right.
\end{equation}
and the only nonzero off-diagonal entries $[L(\omega)]_{02} = [L(\omega)]_{20} = 1$, whereas
the nonlinear terms are given by
\begin{eqnarray*}
[N(a)]_n = 2 \sum_{j =0}^{\infty} S_{1nj,n+j-1,1} a_j a_{n+j-1} + \sum_{k=0}^{n+1} S_{1nk,n+1-k} a_k a_{n+1-k}
+ \sum\limits_{j=0}^{\infty} \sum\limits_{k=0}^{n+j} S_{njk,n+j-k} \,a_j a_k a_{n+j-k}\,.
\end{eqnarray*}
We have the following result on the nonlinear terms.

\begin{lemma}
\label{lem-nonlinear-N-1}
Fix an integer $m \geq 2$. If $a_0 = 0$ and
\begin{equation}
\label{amplitude-zero-N-1}
a_{m \ell + 2} = a_{m \ell +3} = \dots = a_{m \ell + m} = 0, \quad \mbox{for every \;\;} \ell \in \mathbb{N},
\end{equation}
then $[N(a)]_0 = 0$ and
\begin{equation}
\label{nonlinear-zero-N-1}
[N(a)]_{m \ell +2} = [N(a)]_{m \ell +3} = \dots = [N(a)]_{m\ell + m} = 0, \quad \mbox{for every \;\;} \ell \in \mathbb{N}.
\end{equation}
\end{lemma}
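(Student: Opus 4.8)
The plan is to adapt the proof of Lemma~\ref{lem-nonlinear}: first translate the vanishing hypothesis into a congruence modulo $m$, and then observe that in each monomial occurring in $[N(a)]_n$ the indices of the surviving factors $a_\bullet$ are so constrained that a nonzero contribution would force $n \equiv 1 \pmod{m}$. First I would record the reformulation. The indices excluded by \eqref{amplitude-zero-N-1} are precisely the integers $n \geq 2$ with $n \not\equiv 1 \pmod{m}$, while $n = 0$ --- which is not among them since $m \geq 2$ --- is excluded by the separate hypothesis $a_0 = 0$; hence the two assumptions together are equivalent to
\[
a_n \neq 0 \quad \Longrightarrow \quad n \equiv 1 \pmod{m}, \qquad n \in \mathbb{N},
\]
and the asserted conclusion (``$[N(a)]_0 = 0$'' together with \eqref{nonlinear-zero-N-1}) is exactly the statement that $[N(a)]_n = 0$ for every $n$ with $n \not\equiv 1 \pmod{m}$.

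Next I would fix such an $n$, i.e.\ $n \not\equiv 1 \pmod{m}$, and inspect the three sums in the formula for $[N(a)]_n$. In the cubic sum, a nonzero term needs $a_j$, $a_k$ and $a_{n+j-k}$ all nonzero, hence $j \equiv k \equiv n+j-k \equiv 1 \pmod{m}$; but $n = (n+j-k) - j + k$, so $n \equiv 1 \pmod{m}$, a contradiction. In the quadratic sum $\sum_{k=0}^{n+1} S_{1nk,n+1-k}\, a_k a_{n+1-k}$, the two surviving factors have indices summing to $n+1$, so requiring both nonzero forces $n+1 \equiv 2 \pmod{m}$, i.e.\ $n \equiv 1 \pmod{m}$, impossible. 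In the quadratic sum $2\sum_{j} S_{1nj,n+j-1,1}\, a_j a_{n+j-1}$, the two surviving factors have indices differing by $n-1$, so requiring both nonzero forces $n-1 \equiv 0 \pmod{m}$, again impossible; the out-of-range index $n+j-1 = -1$, occurring only when $n = j = 0$, contributes nothing under the convention $a_{-1} := 0$. Thus all three sums vanish, so $[N(a)]_n = 0$.

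Finally I would double-check the base index $n = 0$, which is the only place where the hypothesis $a_0 = 0$ does genuine work rather than being implied by \eqref{amplitude-zero-N-1}: the cubic sum and the sum $2\sum_{j} S\, a_j a_{j-1}$ vanish by the congruence argument above (with $a_{-1} := 0$), whereas $\sum_{k=0}^{1} S_{1,0,k,1-k}\, a_k a_{1-k}$ collapses to a multiple of $a_0 a_1$ and hence vanishes because $a_0 = 0$ --- consistent with the reformulation, in which $a_0 = 0$ already. I do not expect any genuine obstacle here: the whole argument is index bookkeeping, and the only points requiring care are the correct reformulation of \eqref{amplitude-zero-N-1} together with $a_0 = 0$ as the single congruence above, and the boundary behaviour at $n = 0$ (the out-of-range index and the $a_0 a_1$ term).
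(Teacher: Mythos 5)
Your proof is correct and follows essentially the same route as the paper, which simply states that the argument of Lemma \ref{lem-nonlinear} is repeated by inspecting every term of $[N(a)]_n$; your reformulation of the hypotheses as the single congruence $a_n \neq 0 \Rightarrow n \equiv 1 \pmod m$ and the term-by-term index check (including the boundary case $n=0$, where $a_0=0$ kills the $a_0a_1$ contribution) is exactly the intended inspection, just written out in full.
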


\begin{proof}
The argument repeats the proof of Lemma \ref{lem-nonlinear}. Under the conditions (\ref{amplitude-zero-N-1}),
every term in $[N(a)]_n$ for $n = m \ell + \imath$ with $\ell \in \mathbb{N}$, and
$\imath \in \{2,3,\dots,m \}$ is inspected and shown to be zero.
\end{proof}

Bifurcations from the second eigenmode are identified by zero eigenvalues
of the diagonal operator $L(\omega)$.

\begin{lemma}
\label{lem-bifurcations-N-1}
There exists a sequence of bifurcations at $\omega \in \{ \omega_m \}_{m \in \mathbb{N}_+}$
with $\omega_1 = 0$, $\omega_2 = 2/3$, and
\begin{equation}
\label{bif-points-N-1}
\omega_m = \frac{m-3}{m^2-1}, \quad m \in \{3,4,\dots\}.
\end{equation}
All bifurcation points are simple except for the three double points
$\omega_1 = \omega_3 = 0$, $\omega_4 = \omega_{11} = 1/15$, and
$\omega_5 = \omega_7 = 1/12$.
\end{lemma}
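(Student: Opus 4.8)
The plan is to mirror the proof of Lemma~\ref{lem-bifurcations}: the bifurcation points of the second eigenmode are the values of $\omega$ at which $L(\omega)$ loses invertibility, and one reads them off from the structure of $L(\omega)$ in \eqref{L-plus-N-1}. The one new feature, relative to the $N=0$ case, is that $L(\omega)$ is not purely diagonal — it has a single $2\times 2$ block on $\operatorname{span}\{e_0,e_2\}$ coming from the off-diagonal entries $[L(\omega)]_{02}=[L(\omega)]_{20}=1$ — so I would first observe that $\ker L(\omega)\ne\{0\}$ exactly when either (a) that $2\times 2$ block is singular, or (b) one of the diagonal entries $[L(\omega)]_{nn}$ with $n=1$ or $n\ge 3$ vanishes.

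For (a) I would compute the determinant of the block,
\[
\det\begin{pmatrix} 1-\omega & 1 \\ 1 & 1+3\omega \end{pmatrix} = \omega(2-3\omega),
\]
which vanishes precisely at $\omega=0$ and $\omega=2/3$; these are the points labelled $\omega_1=0$ and $\omega_2=2/3$, each carrying a one-dimensional kernel inside $\operatorname{span}\{e_0,e_2\}$. For (b), the entry $[L(\omega)]_{11}=4$ is never zero, so mode $1$ never bifurcates, while for $n\ge 3$ the equation $(n^2-1)\omega-(n-3)=0$ gives $\omega=\tfrac{n-3}{n^2-1}$, labelled $\omega_m$ with $m=n\ge 3$ and kernel $\operatorname{span}\{e_n\}$.

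It remains to sort out multiplicities, i.e.\ which of these $\omega$-values coincide. I would argue: $\omega_m=\tfrac{m-3}{m^2-1}=0$ forces $m=3$, so $\omega_1=\omega_3=0$, and since $[L(0)]_{nn}=3-n$ vanishes only at $n=3$ while the $2\times 2$ block has a \emph{simple} zero at $\omega=0$, the kernel at $\omega=0$ is exactly two-dimensional. Next, $\omega_m=2/3$ would require $2m^2-3m+7=0$, which has negative discriminant, so $\omega_2=2/3$ stays simple and no $\omega_m$ with $m\ge 3$ meets it. Finally, for $m,n\ge 3$ with $m\ne n$, clearing denominators in $\tfrac{m-3}{m^2-1}=\tfrac{n-3}{n^2-1}$ and factoring out $n-m\ne 0$ leaves $mn+1-3(m+n)=0$, i.e.\ $(m-3)(n-3)=8$; since $m=3$ is excluded the two factors are positive divisors of $8$, giving the unordered pairs $\{4,11\}$ and $\{5,7\}$, hence $\omega_4=\omega_{11}=\tfrac1{15}$ and $\omega_5=\omega_7=\tfrac1{12}$. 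At both of these $\omega$-values the block is invertible (since $\omega(2-3\omega)\ne0$ there) and the diagonal equation factors as $(n-4)(n-11)=0$ and $(n-5)(n-7)=0$ respectively, so the kernels are two-dimensional and not larger. Collecting the cases yields the claimed list of simple points together with the three double points $\omega_1=\omega_3=0$, $\omega_4=\omega_{11}=1/15$, and $\omega_5=\omega_7=1/12$.

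The only place where care is needed — the sole point of difference from Lemma~\ref{lem-bifurcations} — is the accounting for the $2\times 2$ block: its singular values $\omega\in\{0,2/3\}$ must be handled separately from the diagonal vanishing conditions, and one must verify at $\omega=0$ that the block contributes only a \emph{simple} zero eigenvalue, so that the kernel there is genuinely two- (not three-) dimensional. Everything else is the same elementary Diophantine bookkeeping as before, with $(m-3)(n-3)=8$ playing the role that $mn=m+n+1$ played in the $N=0$ analysis.
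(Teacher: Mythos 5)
Your proposal is correct and follows essentially the same route as the paper: identify the singular values of the $2\times 2$ block on $\operatorname{span}\{e_0,e_2\}$ separately from the vanishing of the diagonal entries for $n\ge 3$, then resolve coincidences by elementary Diophantine analysis. Your factorization $(m-3)(n-3)=8$ is just a repackaging of the paper's observation that $m=3+\tfrac{8}{n-3}$ must be an integer, so the two arguments are interchangeable; your explicit check that the block contributes only a one-dimensional kernel at $\omega=0$ is a welcome (if minor) clarification the paper leaves implicit.
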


\begin{proof}
The diagonal terms $[L(\omega)]_{nn}$ for $n \in \{3,4,\dots\}$
vanish at the sequence (\ref{bif-points-N-1}). In addition, the double block
for $n = 0$ and $n = 2$ has zero eigenvalues if and only if $\omega_1 = 0$
or $\omega_2 = 2/3$. Therefore, $\omega_1 = \omega_3 = 0$ is a double bifurcation point.
To study other double bifurcation points, we consider solutions of
$\omega_m = \omega_2 = 2/3$ for $m \geq 3$ and $\omega_n = \omega_m$ for $n \neq m \geq 4$.
Equation $\omega_m = \omega_2 = 2/3$ is equivalent to $2m^2 - 3m + 7 = 0$, which has no integer
solutions. Equation $\omega_n = \omega_m$ for $n \neq m$
is equivalent to $mn = 3(m+n)-1$, which can be solved for $m$ in terms of $n$
$$
m = M(n) := \frac{3n-1}{n-3} = 3 + \frac{8}{n-3}.
$$
Since the right-hand side is monotonically decreasing in $n$ and $M(12) < 4$,
we can find all integer solutions for $n$ in the range from $4$ to $11$.
There exists only two pairs of integer solutions in this range,
which give two double points $\omega_4=\omega_{11}=1/15$ and  $\omega_5=\omega_7=1/12$.
All the remaining bifurcation points are simple.
\end{proof}

Simple bifurcation points can be investigated similarly to the proof of Theorem \ref{theorem-simple}.
This yields the following theorem.

\begin{theorem}
Fix $m = 2$. There exists a unique branch
of solutions $(\omega,A) \in \mathbb{R} \times \ell^2(\mathbb{N})$
to system (\ref{sys-real}) with $\lambda - \omega = 1$, which can be parameterized by
small $\epsilon$ such that $(\omega,A)$ is smooth in $\epsilon$ and
\begin{equation}
|\omega - \omega_2| + \sup_{n \in \mathbb{N}} | A_n - \delta_{n 1} - \epsilon (3 \delta_{n 0} - \delta_{n 2}) | \lesssim \epsilon^2.
\label{simple-bif-N-1}
\end{equation}
Fix an integer $m \geq 6$ with $m \neq 7$ and $m \neq 11$. There exists a unique branch
of solutions $(\omega,A) \in \mathbb{R} \times \ell^2(\mathbb{N})$
to system (\ref{sys-real}) with $\lambda - \omega = 1$, which can be parameterized by
small $\epsilon$ such that $(\omega,A)$ is smooth in $\epsilon$ and
\begin{equation}
|\omega - \omega_m| + \sup_{n \in \mathbb{N}} | A_n - \delta_{n 1} - \epsilon \delta_{n m} | \lesssim \epsilon^2.
\label{simple-bif-N-2}
\end{equation}
\label{theorem-simple-N-1}
\end{theorem}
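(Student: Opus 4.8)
\emph{The plan is to} carry out the Lyapunov--Schmidt / Crandall--Rabinowitz reduction exactly as in the proof of Theorem~\ref{theorem-simple}; the only new ingredient is the $2\times2$ block of $L(\omega)$ coupling the modes $n=0$ and $n=2$ in~(\ref{L-plus-N-1}). Write $\omega=\omega_m+\Omega$ and $a_n=\epsilon\,\phi_n+b_n$, where for $m\ge 6$ (with $m\neq 7$ and $m\neq 11$) we take $\phi=e_m$, while for $m=2$ we take $\phi=3e_0-e_2$, which spans the kernel of the singular block $\left(\begin{smallmatrix}1/3&1\\1&3\end{smallmatrix}\right)$ of $L(2/3)$ on $\operatorname{span}\{e_0,e_2\}$ (whose other eigenvalue is $10/3$). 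In both cases $\ker L_*=\operatorname{span}\{\phi\}$ is one-dimensional, where $L_*:=L(\omega_m)$: for $m\ge 6$ the block of $L(\omega_m)$ on $\operatorname{span}\{e_0,e_2\}$ is invertible because $\omega_m\in(0,2/3)$ (by Lemma~\ref{lem-bifurcations-N-1}), $[L(\omega_m)]_{11}=4\neq 0$, and $[L(\omega_m)]_{nn}=(n^2-1)\omega_m-n+3$ vanishes among $n\ge 3$ only at $n=m$; in all cases these diagonal entries diverge as $n\to\infty$. I impose the orthogonality condition $\langle b,\phi\rangle=0$, so that $\ell^2(\mathbb{N})$ splits as $\operatorname{span}\{\phi\}\oplus\{\phi\}^\perp$ with $\{\phi\}^\perp$ equal to the range of $L_*$.

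Projecting the perturbed system~(\ref{sys-pert}) orthogonally to $\phi$ produces the \emph{invertible part} $F(\epsilon,\Omega;b)=0$ — which for $m=2$ comprises the equations for $n=1$ and $n\ge 3$ together with the $(e_0+3e_2)$-component of the coupled pair for $n\in\{0,2\}$, and for $m\ge 6$ comprises all equations with $n\neq m$ — while pairing~(\ref{sys-pert}) with $\phi$ gives the scalar \emph{bifurcation equation} $G(\epsilon,\Omega;b)=0$. Here $F$ is smooth, $F(0,\Omega;0)=0$, $\partial_bF(0,0;0)$ is $L_*$ restricted to $\{\phi\}^\perp$ and is boundedly invertible as a map $\ell^{2,2}(\mathbb{N})\to\ell^2(\mathbb{N})$, and $\|F(\epsilon,\Omega;0)\|_{\ell^2}=\|N(\epsilon\phi)\|_{\ell^2}\lesssim\epsilon^2$; the Implicit Function Theorem therefore yields a unique smooth solution $b=b(\epsilon,\Omega)$ with $\|b\|_{\ell^{2,2}}\lesssim\epsilon^2(1+|\Omega|)$ for small $(\epsilon,\Omega)$, just as in Theorem~\ref{theorem-simple}.

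The crux is that one power of $\epsilon$ cancels in $G$. Since $L(\omega)=L_*+\Omega M$ with $M:=\partial_\omega L$ independent of $\omega$, the linear part of the bifurcation equation is $\epsilon\langle L(\omega)\phi,\phi\rangle=\epsilon\,\Omega\,\langle M\phi,\phi\rangle$, and $\langle M\phi,\phi\rangle$ equals $m^2-1$ for $m\ge 6$ and, for $m=2$, equals $(-1)(3)^2+(3)(-1)^2=-6$ because $M$ acts on $\operatorname{span}\{e_0,e_2\}$ as $\operatorname{diag}(-1,3)$; in each case this coefficient $c_m$ is nonzero, while $\langle L(\omega)b,\phi\rangle=\langle b,L(\omega)\phi\rangle=\Omega\langle b,M\phi\rangle=\mathcal{O}(\epsilon^2|\Omega|)$ using $L_*\phi=0$ and $\|b\|\lesssim\epsilon^2$. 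It then remains to check that $\langle N(\epsilon\phi+b),\phi\rangle=\mathcal{O}(\epsilon^3)$. For $m\ge 6$ this is the single-mode computation~(\ref{N-m}): the two quadratic sums in $[N(a)]_m$ pair, respectively, $a_j$ with $a_{m+j-1}$ and $a_k$ with $a_{m+1-k}$, indices that cannot both equal $m$ when $m\ge 2$, so $[N(\epsilon e_m)]_m=\epsilon^3S_{mmmm}$ and the $b$-corrections are $\mathcal{O}(\epsilon\|b\|)+\mathcal{O}(\|b\|^2)=\mathcal{O}(\epsilon^3)$. For $m=2$, $\epsilon\phi$ is supported on the \emph{even} indices $\{0,2\}$, whereas for \emph{even} $n$ the two index sums in the quadratic part of $[N(a)]_n$ are $n+2j-1$ and $n+1$, both odd, so they cannot be written as a sum of two even indices; hence $[N(\epsilon\phi)]_0=[N(\epsilon\phi)]_2=0$, so $\langle N(\epsilon\phi),\phi\rangle=0$, and using $\|b\|\lesssim\epsilon^2$ the $b$-corrections are again $\mathcal{O}(\epsilon^3)$. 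Dividing $G$ by $\epsilon$ then gives an equation $c_m\Omega+\mathcal{O}(\epsilon|\Omega|)+\mathcal{O}(\epsilon^2)=0$; since $c_m\neq 0$ the Implicit Function Theorem produces a unique smooth $\Omega=\mathcal{O}(\epsilon^2)$, which together with $\|b\|_{\ell^{2,2}}\lesssim\epsilon^2$ yields the bounds~(\ref{simple-bif-N-1}) and~(\ref{simple-bif-N-2}).

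\emph{The main obstacle} I expect is the bookkeeping around the $2\times2$ block rather than anything conceptually new. For $m=2$ the kernel vector $3e_0-e_2$ is not a coordinate vector, so one must set up the Lyapunov--Schmidt splitting relative to it — projecting the coupled equations for $n=0$ and $n=2$ onto $\operatorname{span}\{3e_0-e_2\}$ and onto its orthogonal complement $\operatorname{span}\{e_0+3e_2\}$ within that block — and replace the single-mode cancellation~(\ref{N-m}) by the parity argument above. For $m\ge 6$ one only has to confirm, using Lemma~\ref{lem-bifurcations-N-1}, that the $(0,2)$-block stays invertible at $\omega_m$ and that no diagonal entry with $n\ge 3$, $n\neq m$, vanishes, so that $\ker L(\omega_m)$ stays one-dimensional; everything else is the Implicit Function Theorem machinery already deployed for Theorem~\ref{theorem-simple}.
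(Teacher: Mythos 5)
Your proposal is correct and follows essentially the same route as the paper: a Lyapunov--Schmidt reduction repeating Theorem~\ref{theorem-simple} for the simple points $m\ge 6$, and for $m=2$ a block-diagonalization of the singular $2\times 2$ block on $\operatorname{span}\{e_0,e_2\}$ with kernel vector $3e_0-e_2$, the constraint $3b_0-b_2=0$, and the observation that $[N(\epsilon(3e_0-e_2))]_{0,2}=\mathcal{O}(\epsilon^3)$. Your parity argument supplies an explicit justification for that last cancellation, which the paper records only as ``a simple observation.''
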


\begin{proof}
The proof of the second assertion repeats the proof of Theorem \ref{theorem-simple} verbatim.
The proof of the first assertion is based on the block-diagonalization of the singular
matrix for $[L(\omega_2)]_{jk}$ with $j,k \in \{0,2\}$:
$$
\left[ \begin{array}{cc} 1/3 & 1 \\ 1 & 3 \end{array} \right].
$$
The null space is spanned by the vector $(3,-1)^T$ and the vector $b \in \ell^2(\mathbb{N})$
in the decomposition (\ref{decomposition-simple}) must satisfy the constraint $3b_0 - b_2 = 0$.
The rest of the proof repeats the proof of Theorem \ref{theorem-simple} after a simple
observation that $[N(\epsilon (3e_0 - e_2))]_{0,2} = \mathcal{O}(\epsilon^3)$ as $\epsilon \to 0$.
\end{proof}

\begin{rem}
\label{remark-state-minus}
The unique branch bifurcating from $\omega_2 = 2/3$ coincides
with the exact solution (\ref{sol75})  and
(\ref{omega-lambda}) for the lower sign.
Indeed, by normalizing $\lambda - \omega = 1$ and taking the limit $p \to 0$ as in (\ref{limit-minus}),
we obtain $c = p^{-1} + \mathcal{O}(p)$, $\beta = -6p + \mathcal{O}(p^3)$,
and $\gamma = p^{-1} + \mathcal{O}(p)$. The small parameter $\epsilon$ is defined in terms of the small parameter $p$
by $\epsilon := -2p +\mathcal{O}(p^3)$.
\end{rem}

The three double bifurcation points in Lemma \ref{lem-bifurcations-N-1} have to be checked separately.
In order to characterize branches bifurcating from the double point $\omega_5 = \omega_7 = 1/12$,
we note the following symmetry. If $u(t,z)$ is a generating function for the conformal flow (\ref{flow})
given by the power series (\ref{defu}), so is $z u(t,z^2)$. If
$u(t,z)$ is a stationary state in the form (\ref{stat-generating})
with $\{A_n\}_{n \in \mathbb{N}}$ satisfying the system (\ref{sys-real})
with parameters $(\lambda,\omega)$, then the transformed state
\begin{equation}\label{ztoz2}
\tilde{A}_n = \begin{cases} A_{m}, \quad n=2m+1, \\ 0, \quad n=2m,
\end{cases} \quad n \in \mathbb{N}
\end{equation}
also satisfies the system (\ref{sys-real}) with parameters
$$
\tilde{\lambda} = \lambda+\frac{\omega}{2}, \quad
\tilde{\omega} = \frac{\omega}{2}.
$$
By Theorem \ref{theorem-double}, three branches of solutions bifurcate from the
lowest eigenmode at the double bifurcation point $\omega_2=\omega_3 =1/6$. Applying
the transformation (\ref{ztoz2}) yields three branches
bifurcating from the second eigenmode at the double bifurcation point $\omega_5=\omega_7=1/12$.
By computing the normal form similar to the proof of Theorem \ref{theorem-double},
we have checked that no other solutions bifurcate from this double point.
The corresponding computations are omitted here, while the result is
formulated in the following theorem.

\begin{theorem}
Fix $\omega_* := \omega_5 = \omega_7 = 1/12$.
There exist exactly three branches of solutions $(\omega,A) \in \mathbb{R} \times \ell^2(\mathbb{N})$
to system (\ref{sys-real}) with $\lambda - \omega = 1$, which can be parameterized by
small $(\epsilon,\mu)$ such that $(\omega,A)$ is smooth in $(\epsilon,\mu)$ and
\begin{equation}
|\omega - \omega_*| + \sup_{n \in \mathbb{N}} | A_n - \delta_{n 1} - \epsilon \delta_{n 5} - \mu \delta_{n 7} |
\lesssim (\epsilon^2 + \mu^2).
\label{double-bif-N-1}
\end{equation}
The three branches are characterized by the following:
\begin{itemize}
\item[(i)] $\epsilon = 0$, $\mu \neq 0$;
\item[(ii)] $\epsilon \neq 0$, $\mu = 0$;
\item[(iii)] $\epsilon < 0$, $| \mu - 2 |\epsilon|^{3/2} | \lesssim \epsilon^2$,
\end{itemize}
and the branch (iii) is double degenerate up to the reflection $\mu \mapsto -\mu$.
\label{theorem-double-N-1}
\end{theorem}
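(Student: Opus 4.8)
The plan is to combine the explicit symmetry $(\ref{ztoz2})$, which relates the second eigenmode at $\omega_* = 1/12$ to the lowest eigenmode at $1/6$, with a Lyapunov--Schmidt reduction and a normal-form computation modeled on the proof of Theorem \ref{theorem-double}.

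\textbf{Step 1 (existence of the three branches).} I would first transport the three branches of Theorem \ref{theorem-double} through $(\ref{ztoz2})$. That transformation carries a stationary state with amplitudes $\{A_m\}$ and parameters $(\lambda,\omega)$ to the stationary state with $\tilde A_{2m+1} = A_m$, $\tilde A_{2m} = 0$, and $(\tilde\lambda,\tilde\omega) = (\lambda + \frac{\omega}{2}, \frac{\omega}{2})$. Hence the bifurcation value $\omega = \omega_2 = \omega_3 = \frac16$ maps to $\tilde\omega = \frac{1}{12} = \omega_5 = \omega_7$, the eigenmode index $0$ maps to $1$, and the two bifurcating indices $2, 3$ map to $5, 7$; moreover $\lambda = 1$ gives $\tilde\lambda - \tilde\omega = 1$, the normalization used here. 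Since the parameters $(\epsilon,\mu)$ attached to the $e_2, e_3$ directions merely relabel to the $e_5, e_7$ directions, the three branches of Theorem \ref{theorem-double} produce three branches with the structure $(\ref{double-bif-N-1})$, and the classification (i)--(iii) — including the degenerate scaling $|\mu - 2|\epsilon|^{3/2}| \lesssim \epsilon^2$ on branch (iii) and the reflection $\mu \mapsto -\mu$, here realized by a suitable combination of $(\ref{symmshift1})$ and $(\ref{symmshift2})$ — carries over verbatim.

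\textbf{Step 2 (completeness).} To show these are the only branches I would rerun the reduction as in Theorem \ref{theorem-double}: set $\omega = \omega_* + \Omega$, $a_n = \epsilon \delta_{n5} + \mu \delta_{n7} + b_n$ with $b_5 = b_7 = 0$, and split $(\ref{sys-pert})$ into the invertible equations in the slots $n \notin \{5,7\}$ and the two bifurcation equations in the slots $n = 5$ and $n = 7$. The operator $L(\omega_*)$ is invertible on the complement of $\mathrm{span}\{e_5, e_7\}$: the $\{0,2\}$ block has determinant $\frac{11}{12}\cdot\frac54 - 1 = \frac{7}{48} \neq 0$, and $[L(\omega_*)]_{nn} = \frac{1}{12}(n^2 - 1) - n + 3$ is nonzero for every $n \geq 3$ with $n \neq 5, 7$ and grows like $n^2/12$, so the Implicit Function Theorem yields a unique $b(\epsilon,\mu,\Omega)$ with $\|b\|_{\ell^2} \lesssim (\epsilon^2 + \mu^2)(1 + |\Omega|)$. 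Under the apriori bound $|\Omega| \lesssim \epsilon^2 + \mu^2$ I would compute $b_n$ up to cubic order in $(\epsilon,\mu)$ for the finitely many indices that couple to $e_5$ and $e_7$ (the quadratic couplings reach up to $n$ of order $13$), substitute into the two bifurcation equations, and expand to quartic order to obtain a normal form analogous to $(\ref{nf-eq})$. Lemma \ref{lem-nonlinear-N-1} applies with $m = 3$ on the invariant subspace of modes $n \equiv 1 \pmod 3$, which contains $e_7$ but not $e_5$, and with $m = 4$ on the invariant subspace of modes $n \equiv 1 \pmod 4$, which contains $e_5$ but not $e_7$; on each of these subspaces $\omega_*$ is a simple bifurcation point, so — just as in cases (I),(II) of Theorem \ref{theorem-double} — branch (i) has $\epsilon = 0$ and branch (ii) has $\mu = 0$ beyond all orders of the expansion. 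Solving the reduced normal form and checking that the Jacobian is nonsingular at each of the three roots, exactly as in cases (I)--(III) of Theorem \ref{theorem-double}, then shows that precisely three branches bifurcate from $\omega_* = 1/12$, matching those produced in Step 1.

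\textbf{Main obstacle.} The only genuine difficulty is the bookkeeping in Step 2: because the bifurcating modes now sit at $n = 5$ and $n = 7$ rather than at $n = 2$ and $n = 3$, the quadratic and cubic couplings reach many more auxiliary amplitudes $b_n$ and the interaction coefficients $S_{njk,n+j-k}$ must be tracked through all of them, so the normal form is far more tedious to derive by hand. A strong consistency check — and the reason the explicit computation may be omitted once its shape is known — is that, by Step 1, the leading normal form must coincide, after the relabeling $n \mapsto 2n + 1$ and $\omega \mapsto \omega/2$, with the normal form $(\ref{nf-eq})$ of Theorem \ref{theorem-double}; in particular the $\mu^2 + 4\epsilon^3$ structure governing branch (iii) is forced and the three-branch count cannot change.
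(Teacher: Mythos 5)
Your proposal follows essentially the same route as the paper: the paper also obtains the three branches by transporting Theorem \ref{theorem-double} through the symmetry \eqref{ztoz2} (which sends $(\lambda,\omega)=(1,1/6)$ to $(\tilde\lambda,\tilde\omega)=(13/12,1/12)$ and the modes $0,2,3$ to $1,5,7$), and then asserts completeness via a normal-form computation "similar to the proof of Theorem \ref{theorem-double}" whose details are likewise omitted. Your Step 2 outline, including the invertibility of the $\{0,2\}$ block and the use of Lemma \ref{lem-nonlinear-N-1} to fix $\epsilon=0$ on branch (i) and $\mu=0$ on branch (ii) beyond all orders, is a faithful sketch of that omitted computation.
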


Branches bifurcating from the double point $\omega_4 = \omega_{11} = 1/15$ can be investigated
by computing the normal form. The following theorem represents the main result.

\begin{theorem}
Fix $\omega_* := \omega_4 = \omega_{11} = 1/15$.
There exist exactly two branches of solutions $(\omega,A) \in \mathbb{R} \times \ell^2(\mathbb{N})$
to system (\ref{sys-real}) with $\lambda - \omega = 1$, which can be parameterized by
small $(\epsilon,\mu)$ such that $(\omega,A)$ is smooth in $(\epsilon,\mu)$ and
\begin{equation}
|\omega - \omega_*| + \sup_{n \in \mathbb{N}} | A_n - \delta_{n 1} - \epsilon \delta_{n 4} - \mu \delta_{n 11} |
\lesssim (\epsilon^2 + \mu^2).
\label{double-bif-N-2}
\end{equation}
The two branches are characterized by the following:
\begin{itemize}
\item[(i)] $\epsilon = 0$, $\mu \neq 0$;
\item[(ii)] $\epsilon \neq 0$, $\mu = 0$.
\end{itemize}
\label{theorem-double-N-2}
\end{theorem}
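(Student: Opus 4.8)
The plan is to run the Lyapunov--Schmidt / normal form scheme used in the proof of Theorem~\ref{theorem-double}, now at the double point $\omega_* = \omega_4 = \omega_{11} = 1/15$, and to organize the bifurcation equations so that the two reductions of Lemma~\ref{lem-nonlinear-N-1} do most of the work. First I would set $\omega = \omega_* + \Omega$ and $a_n = \epsilon\,\delta_{n4} + \mu\,\delta_{n11} + b_n$ with the orthogonality conditions $\langle b,e_4\rangle = \langle b,e_{11}\rangle = 0$. From \eqref{L-plus-N-1}, for $n\geq 3$ one has $[L(\omega_*)]_{nn} = (n-4)(n-11)/15$, which vanishes only at $n=4,11$, while on $\{0,2\}$ the $2\times2$ block $\bigl(\begin{smallmatrix}14/15 & 1\\ 1 & 6/5\end{smallmatrix}\bigr)$ has nonzero determinant; hence $L(\omega_*)$ restricted to $\{n\neq 4,11\}$ is invertible with entries growing like $n^2$, and the Implicit Function Theorem yields a unique smooth map $(\epsilon,\mu,\Omega)\mapsto b(\epsilon,\mu,\Omega)$ solving the invertible part, with $\|b\|_{\ell^2}\lesssim(\epsilon^2+\mu^2)(1+|\Omega|)$. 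Since $\partial_\omega[L(\omega)]_{44}=15$ and $\partial_\omega[L(\omega)]_{11,11}=120$, the two bifurcation equations are
\begin{equation*}
15\,\Omega\,\epsilon + [N(\epsilon e_4+\mu e_{11}+b)]_4 = 0,\qquad 120\,\Omega\,\mu + [N(\epsilon e_4+\mu e_{11}+b)]_{11} = 0.
\end{equation*}

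The structural input is Lemma~\ref{lem-nonlinear-N-1} applied with $m=3$ and with $m=5$. If $\epsilon=0$, the solution $b(0,\mu,\Omega)$ of the invertible part is, by uniqueness, supported on $\{n : n\equiv 1\pmod 5\}$: on the complementary indices the zero vector solves the equations, because $L$ restricted there is invertible and $[N]_n$ vanishes on that support by the lemma with $m=5$ (and mode $0$ stays zero since the $\{0,2\}$ block is invertible with zero forcing). As $4\not\equiv 1\pmod 5$, it follows that $[N]_4|_{\epsilon=0}\equiv 0$, so $[N]_4=\epsilon\,\tilde N_4(\epsilon,\mu,\Omega)$ with $\tilde N_4$ smooth. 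Symmetrically, using the lemma with $m=3$ and $11\not\equiv 1\pmod 3$, one gets $[N]_{11}=\mu\,\tilde N_{11}(\epsilon,\mu,\Omega)$. The bifurcation system therefore factors as
\begin{equation*}
\epsilon\bigl(15\,\Omega + \tilde N_4(\epsilon,\mu,\Omega)\bigr)=0,\qquad \mu\bigl(120\,\Omega + \tilde N_{11}(\epsilon,\mu,\Omega)\bigr)=0.
\end{equation*}
If $\mu=0$, $\epsilon\neq 0$, the Implicit Function Theorem applied to $15\Omega+\tilde N_4(\epsilon,0,\Omega)=0$ (the $\partial_\Omega$-derivative is $15\neq 0$) produces a unique $\Omega=\mathcal{O}(\epsilon^2)$; by the $m=3$ reduction this branch lives on $\{n\equiv 1\pmod 3\}$, on which $\omega_*=1/15$ is a \emph{simple} zero of $L$ (the other root $11$ of $(n-4)(n-11)=0$ is excluded), so the Crandall--Rabinowitz theorem \cite{CR71} confirms uniqueness and the expansion \eqref{double-bif-N-2}; this is branch (ii). Symmetrically $\epsilon=0$, $\mu\neq 0$ gives branch (i) via the $m=5$ reduction, where $1/15$ is again simple since $4\not\equiv 1\pmod 5$.

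It remains to exclude mixed branches with $\epsilon\neq 0$ and $\mu\neq 0$. There one must have $15\Omega=-\tilde N_4$ and $120\Omega=-\tilde N_{11}$, hence $\tilde N_{11}=8\,\tilde N_4$. Writing $\tilde N_4 = a_1\epsilon^2 + b_1\mu^2 + \mathcal{O}(3)$ and $\tilde N_{11}= a_2\epsilon^2 + b_2\mu^2 + \mathcal{O}(3)$ (the $\Omega$-dependence is of higher order under the self-consistent bound $|\Omega|\lesssim\epsilon^2+\mu^2$), this reduces to $(a_2-8a_1)\epsilon^2 + (b_2-8b_1)\mu^2 + \mathcal{O}(3)=0$. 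The main obstacle is the explicit evaluation of $a_1,a_2,b_1,b_2$: this requires solving the invertible part for $b$ to quadratic (and, at the modes feeding back into $4$ and $11$, cubic) order, a computation of the same nature as in the proof of Theorem~\ref{theorem-double} but heavier, since modes $4$ and $11$ are coupled only through long chains of quadratic interactions with the background mode $e_1$. One then checks that the leading-order relation $(a_2-8a_1)\epsilon^2+(b_2-8b_1)\mu^2 + \mathcal{O}(3)=0$ admits no small nontrivial solution $(\epsilon,\mu)$ — concretely, that $a_2-8a_1$ and $b_2-8b_1$ are nonzero and of the same sign; this is exactly the point where the case $\omega_4=\omega_{11}$ departs from $\omega_2=\omega_3$, in which the $\epsilon^2$-coefficient cancelled and left room for the third, mixed branch (iii). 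Consequently only the two branches (i) and (ii) bifurcate from $\omega_*=1/15$, which is the assertion of Theorem~\ref{theorem-double-N-2}.
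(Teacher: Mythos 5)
Your setup is exactly the paper's: the same Lyapunov--Schmidt decomposition with $b_4=b_{11}=0$, the same bifurcation equations $15\Omega\epsilon+[N]_4=0$, $120\Omega\mu+[N]_{11}=0$, and the same use of Lemma~\ref{lem-nonlinear-N-1} to factor out $\epsilon$ and $\mu$ and to make the pure branches persist to all orders (the paper invokes the lemma with $m=10$ for branch (i) where you use $m=5$; both reductions exclude mode $4$ and render the zero eigenvalue simple on the reduced subspace, so this difference is immaterial). Your criterion for excluding a mixed branch --- that $8\tilde N_4=\tilde N_{11}$ forces $(a_2-8a_1)\epsilon^2+(b_2-8b_1)\mu^2+\mathcal{O}(3)=0$, which has no small nontrivial solution provided the two coefficients are nonzero and of one sign --- is also the right one.

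The genuine gap is that you never compute $a_1,a_2,b_1,b_2$, and the entire content of the theorem lives in those numbers. The statement ``exactly two branches'' is equivalent to the sign condition you defer to ``one then checks''; a priori the coefficients could just as well conspire to produce a third branch, as they do at $\omega_2=\omega_3$ in Theorem~\ref{theorem-double}, where $2(7\epsilon^2+14\mu^2)-(14\epsilon^2+\mu^2)=27\mu^2$ degenerates in the $\epsilon^2$ direction and a mixed branch appears at the next order. The paper carries out the computation and obtains the normal form \eqref{order3N1m411}, i.e.\ $a_1=7$, $b_1=14$, $a_2=14$, $b_2=56/17$, whence $a_2-8a_1=-42$ and $b_2-8b_1=-1848/17$ are both negative and the mixed branch is excluded at quadratic order (so, contrary to your parenthetical worry, no cubic-order corrections are needed here). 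Until that computation is done, your argument establishes only that there are \emph{at least} two branches and reduces ``exactly two'' to an unverified inequality; as written it is a conditional proof, not a proof.
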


\begin{proof}
The proof follows the computations of normal form in Theorem \ref{theorem-double}.
For the double bifurcation point $\omega_*$,
we write the decomposition
$$
\omega = \omega_* + \Omega, \quad a_n = \epsilon \delta_{n 4} + \mu \delta_{n 11} + b_n, \quad n \in \mathbb{N},
$$
where $(\epsilon,\mu)$ are arbitrary and $b_4 = b_{11} = 0$ are set from
the orthogonality condition $\langle b, e_4 \rangle = \langle b, e_{11} \rangle = 0$.
By performing routine computations and expanding the bifurcation equations at $n = 4$ and $n = 11$
up to and including the cubic order, we obtain
\begin{eqnarray}
\label{order3N1m411}
\left\{ \begin{array}{l}
15 \Omega \epsilon = \epsilon \left( 7 \epsilon^2 + 14 \mu^2 + \mathcal{O}(3) \right), \\
120 \Omega \mu  = \mu \left(14 \epsilon^2 + \frac{56}{17} \mu^2 + \mathcal{O}(3) \right).
\end{array} \right.
\end{eqnarray}
We are looking for solutions to the system (\ref{order3N1m411}) with $(\epsilon,\mu) \neq (0,0)$. There exist
two nontrivial solutions to the system (\ref{order3N1m411}):
\begin{itemize}
\item[(I)] $\epsilon =  0$, $\mu \neq 0$, and $\Omega = \frac{7}{255} \mu^2 + \mathcal{O}(3)$;
\item[(II)] $\epsilon \neq 0$, $\mu = 0$, and $\Omega = \frac{7}{15} \epsilon^2 + \mathcal{O}(3)$;
\end{itemize}
whereas no solution exists with both $\epsilon \neq 0$ and $\mu \neq 0$.
The Jacobian matrix of the system (\ref{order3N1m411}) is given by
\begin{equation*}
J(\epsilon,\mu,\Omega) = \left[\begin{matrix} -15 \Omega + 21 \epsilon^2 + 14 \mu^2 &
28 \epsilon \mu \\ 28 \epsilon \mu & -120 \Omega + 14 \epsilon^2 + \frac{168}{17} \mu^2 \end{matrix} \right] + \mathcal{O}(3).
\end{equation*}
For both branches (I) and (II), the Jacobian is invertible for small $(\epsilon,\mu)$, hence
the two solutions are continued uniquely with respect to parameters $(\epsilon,\mu)$ and yield branches (i) and (ii).
By Lemma \ref{lem-nonlinear-N-1} with $m = 10$, branch (i) corresponds to the reduction (\ref{amplitude-zero-N-1})
with $m = 10$, hence $\epsilon = 0$ persists beyond all orders of the expansion.
By Lemma \ref{lem-nonlinear-N-1} with $m = 3$, branch (ii) corresponds to the reduction (\ref{amplitude-zero-N-1}) with $m = 3$,
hence $\mu = 0$ persists beyond all orders of the expansion.
\end{proof}

\begin{rem}
Branch (ii) of Theorem \ref{theorem-double-N-2} can be obtained by
the symmetry transformation \eqref{ztoz2} from the branch bifurcating from lowest eigenstate at $\omega_5 = 2/15$.
\end{rem}

For the remaining double point $\omega_1 = \omega_3 = 0$, bifurcation of stationary state is more complicated. If we compute
the normal form up to and including the cubic order, we obtain a trivial normal form, which is satisfied
identically if $\omega = 0$. This outcome of the normal form computations suggests that
there exists a two-parameter family of solutions $A \in \ell^2(\mathbb{N})$ to the system (\ref{sys-real}) with $\lambda = 1$ and $\omega = 0$.
Indeed, the two eigenvectors for the null space of $L(0)$ are given by $e_0 - e_2$ and $e_3$ so that
we can introduce the decomposition
$$
a_n = \epsilon (\delta_{n0} - \delta_{n2}) + \mu \delta_{n3} + b_n, \quad n \in \mathbb{N},
$$
subject to the orthogonality conditions $b_0 - b_2 = 0$ and $b_3 = 0$. By computing power expansions
for small $(\epsilon,\mu)$ with MAPLE, we can extend it to any polynomial order with the first terms
given by
\begin{eqnarray*}
\left\{ \begin{array}{l}
A_0 =  \epsilon + \epsilon  \mu + \left(\epsilon  \mu ^2-\frac{\epsilon ^3}{2}\right)+ \left(\epsilon ^3 \mu +2 \epsilon
 \mu ^3\right)+ \left(-\frac{5 \epsilon ^5}{4}+\frac{3 \epsilon ^3 \mu ^2}{2}+3 \epsilon  \mu ^4\right) + \mathcal{O}(6), \\
A_1 =  1-\left(\epsilon ^2+\mu ^2\right) + \epsilon ^2 \mu + + \left(8 \epsilon ^4 \mu +3 \epsilon^2 \mu ^3\right)+
\left(-\epsilon ^4-6 \epsilon ^2 \mu ^2-2 \mu ^4\right) + \mathcal{O}(6),\\
A_2 =  - \epsilon + \epsilon  \mu + \left(-\frac{\epsilon ^3}{2}+\epsilon  \mu ^2\right)+ \left(\epsilon ^3 \mu +2 \epsilon
 \mu ^3\right)+ \left(-\frac{5 \epsilon ^5}{4}+\frac{3 \epsilon ^3 \mu ^2}{2}+3 \epsilon  \mu ^4\right) + \mathcal{O}(6),\\
A_3 = \mu, \\
A_4 = -2  \epsilon  \mu + \left(\epsilon ^3+\epsilon  \mu ^2\right)- \left(4 \epsilon ^3 \mu +\epsilon  \mu ^3\right)+
 \left(\frac{5 \epsilon ^5}{2}+\frac{11 \epsilon ^3 \mu ^2}{2}+3 \epsilon  \mu ^4\right) + \mathcal{O}(6),\\
A_5 = \mu ^2+ \epsilon ^2 \mu + \left(3 \epsilon ^4 \mu +\epsilon ^2 \mu ^3\right)+ \left(-\epsilon ^4+\epsilon ^2
\mu ^2+\mu ^4\right) + \mathcal{O}(6), \\
A_6 =  -3  \epsilon  \mu ^2+ \left(2 \epsilon ^3 \mu +\epsilon  \mu ^3\right)- \left(\frac{21}{2} \epsilon ^3 \mu ^2+5 \epsilon
 \mu ^4\right) + \mathcal{O}(6).
 \end{array} \right.
\end{eqnarray*}
The three explicit solutions (\ref{twisted-mode}), (\ref{blaschke-1}), and (\ref{N2-1})
are particular solutions of the two-parameter family of stationary states for small $p$. Indeed,
the twisted state (\ref{twisted-mode}) corresponds to
$\epsilon = -2p + \mathcal{O}(p^3)$ and $\mu = 3 p^2 + \mathcal{O}(p^4)$,
the Blaschke state corresponds to $\epsilon = -p + \mathcal{O}(p^3)$ and $\mu = p^2 + \mathcal{O}(p^4)$,
and the additional state (\ref{N2-1}) corresponds to $\epsilon = 0$ and $\mu = p^2 + \mathcal{O}(p^6)$.

\begin{rem}
We have shown by using the general transformation law derived in \cite{BBE} that
the twisted state (\ref{twisted-mode}) can be obtained from the $N=1$ single-mode state (\ref{1mode})
by applying $e^{sD}$ with $s \in \mathbb{R}$ in the symmetry transformation (\ref{symmetry})
with $p = \tanh s$. In the present time, we do not know how to obtain the two-parameter branch 
of the stationary states above from the $N = 1$ single-mode state (\ref{1mode}).
\end{rem}

\section{Variational characterization of the bifurcating states}

We shall now give variational characterization of the bifurcating states from the second eigenmode.
The second variation of the action functional $K(\alpha)$ in (\ref{K}) is given by the quadratic forms
in (\ref{expansions-K}) with the self-adjoint operators
$L_{\pm} : D(L_{\pm}) \to \ell^2(\mathbb{N})$ given by (\ref{Hessian}).
The following lemma gives variational characterization of the second eigenmode
at the bifurcation points $\{ \omega_m \}_{m \in \mathbb{N}_+}$ in Lemma \ref{lem-bifurcations-N-1}.

\begin{lemma}
\label{lemma1-N-1}
The following is true:
\begin{itemize}
\item For $\omega_1 = \omega_3 = 0$, the $N = 1$ single-mode state (\ref{1mode}) is a degenerate saddle point of $K$ with
three positive eigenvalues, zero eigenvalue of multiplicity three, and infinitely many negative eigenvalues bounded away from zero.

\item For $\omega_2 = 2/3$ and $\omega_6 = 3/35$, the $N = 1$ single-mode state (\ref{1mode}) is a degenerate minimizer of $K$ with
zero eigenvalue of multiplicity three and infinitely many positive eigenvalues bounded away from zero.

\item For $\omega_m$ with $m \geq 4$ and $m \neq 6$, the $N = 1$ single-mode state (\ref{1mode}) is
a degenerate saddle point of $K$ with an even number of negative eigenvalues, zero eigenvalue of odd multiplicity,
and infinitely many positive eigenvalues bounded away from zero.
\end{itemize}
\end{lemma}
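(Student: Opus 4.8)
The plan is to mirror the proof of Lemma~\ref{lemma1}. By the scaling symmetry~(\ref{symmscale}) I normalize the $N=1$ single-mode state to $A_n=\delta_{n1}$ with $\lambda=1+\omega$, substitute into the Hessian formula~(\ref{Hessian}), and compute $L_\pm$ in closed form by tracking the finitely many nonzero contractions of the coefficients $S_{njk,n+j-k}$ against $A=e_1$. The outcome is that $L_+$ coincides with the operator $L(\omega)$ of~(\ref{L-plus-N-1}): block-diagonal, with a single nontrivial $2\times2$ block on the indices $\{0,2\}$ (diagonal entries $1-\omega$ and $1+3\omega$, off-diagonal entry $1$), the constant entry $4$ at $n=1$, and the entries $(n^2-1)\omega-n+3$ for $n\ge3$; while $L_-$ is the same operator with its $(1,1)$ entry changed from $4$ to $0$ and the $\{0,2\}$ off-diagonal entry negated. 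The sign change is spectrally invisible, since the $L_-$ block is conjugate to the $L_+$ block by ${\rm diag}(1,-1)$, so $\sigma(L_-)$ is $\sigma(L_+)$ with one eigenvalue $4$ replaced by $0$; since the second variation~(\ref{expansions-K}) is ${\rm diag}(L_+,L_-)$, the spectrum of $K$ at the state is $\sigma(L_+)\cup\sigma(L_-)$, the extra null vector of $L_-$ being the single gauge direction $e_1$ common to the two phase rotations~(\ref{symmshift1})--(\ref{symmshift2}).

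Next I would record the $\omega$-dependence of the two non-constant ingredients of $\sigma(L_+)$. The $2\times2$ block has trace $2(1+\omega)$ and determinant $\omega(2-3\omega)$, hence eigenvalues $1+\omega\pm\sqrt{1+4\omega^2}$: both positive for $\omega\in(0,2/3)$, one zero plus one positive value at $\omega\in\{0,2/3\}$, and of opposite signs for $\omega<0$ or $\omega>2/3$. For the tail, writing $[L(\omega)]_{nn}=(n^2-1)(\omega-\omega_n)$ with $\omega_n=(n-3)/(n^2-1)$ for $n\ge3$ (these $\omega_n$ are the bifurcation values of Lemma~\ref{lem-bifurcations-N-1}), the sign of entry $n$ is the sign of $\omega-\omega_n$; a short monotonicity check shows $(\omega_n)_{n\ge3}$ rises from $\omega_3=0$ to the strict interior maximum $\omega_6=3/35$ and then decreases back to $0$. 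Consequently at any fixed $\omega$ only finitely many tail entries are nonpositive while the remainder diverge to $+\infty$, which together with the block gap and the constant entry $4$ at $n=1$ gives all the ``bounded away from zero'' assertions.

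Each bullet is then a finite count on $\sigma(L_+)\cup\sigma(L_-)$. For $\omega_2=2/3$ the block contributes one zero plus one positive value, and $(n^2-1)(2/3-\omega_n)=(2n^2-3n+7)/3>0$ for every $n\ge3$ (negative discriminant); for $\omega_6=3/35$ the block is positive definite and $(n^2-1)(\omega_6-\omega_n)\ge0$ with equality only at $n=6$. In each case $L_+$ has exactly one zero and $L_-$ exactly two, all remaining eigenvalues being positive and bounded away from zero, so the union has zero of multiplicity three and is otherwise positive --- a degenerate minimizer. For $\omega_1=0$ (which is simultaneously $\omega_3$) the block gives one zero plus the positive value $2$, the $n=1$ entry is $4$ for $L_+$ and $0$ for $L_-$, and the tail entries equal $3-n\le-1$ for all $n\ge4$ with an additional zero at $n=3$; the union thus has three positive eigenvalues, a zero subspace, and infinitely many negative eigenvalues bounded below by $-1$ --- a degenerate saddle. (Here the direct count of $z(L_+)+z(L_-)$ gives $2+3=5$, the extra pair coming from the null $n=3$ mode at $\omega=0$; in particular $z(L_+)=2$, consistent with $\omega_1=\omega_3$ being a double bifurcation point.) Finally, for $\omega_m$ with $m\ge4$ and $m\neq6$ one has $0<\omega_m<2/3$, so the block is positive definite; the tail vanishes exactly at $n=m$ (and at $n=11$ if $m=4$, or $n=7$ if $m=5$, by the coincidences $\omega_4=\omega_{11}$, $\omega_5=\omega_7$), so the combined zero multiplicity is odd, while the negative tail entries are those $n\ge3$ with $\omega_n>\omega_m$, which contribute the same finite positive number to $L_+$ and to $L_-$, hence an even total, at least two since $\omega_m<\omega_6$ --- the asserted degenerate saddle.

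There is no conceptual obstruction; the two points requiring care are the combinatorial reduction of~(\ref{Hessian}) to the explicit block form of $L_\pm$, and the bookkeeping at the degenerate values $\omega_1=\omega_3$, $\omega_4=\omega_{11}$, $\omega_5=\omega_7$ where two tail modes become singular at once. The single structural fact that organizes the classification is that $\omega_n$ attains its unique interior maximum at $n=6$, which is precisely why $\omega_6$, alone among the $\omega_m$ with $m\ge4$, yields a minimizer of $K$ rather than a saddle.
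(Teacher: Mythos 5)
Your proposal is correct and follows essentially the same route as the paper: normalize to $A=e_1$ with $\lambda-\omega=1$, reduce the Hessian \eqref{Hessian} to the block form \eqref{L-plus-N-1} with the $2\times2$ block on $\{0,2\}$, and count signs of the block eigenvalues and of the tail entries $(n^2-1)(\omega-\omega_n)$. Your additional observations (the $\operatorname{diag}(1,-1)$ conjugacy showing the sign flip in the $L_-$ block is spectrally invisible, and the unimodality of $n\mapsto\omega_n$ with its unique interior maximum at $n=6$) are correct and organize the case analysis more transparently than the paper's enumeration. One substantive point: your parenthetical count $z(L_+)+z(L_-)=2+3=5$ at $\omega_1=\omega_3=0$ is the right one — the $\{0,2\}$ block degenerates there in both $L_+$ and $L_-$, in addition to the two $n=3$ zeros and the $n=1$ zero of $L_-$ — and it is what the paper's own proof ("simple zero eigenvalue" of the block at $\omega=0$ plus the zero tail entry at $n=3$, doubled, plus the extra $L_-$ zero at $n=1$) actually yields; the "multiplicity three" in the first bullet of the lemma statement appears to be an error (likely carried over from the analogous $N=0$ statement in Lemma~\ref{lemma1}), and five is consistent with $\omega_1=\omega_3$ being a double bifurcation point.
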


\begin{proof}
By the scaling transformation (\ref{symmscale}), we take $\lambda - \omega = 1$ and $A_n = \delta_{n 1}$,
for which the explicit form (\ref{Hessian}) yields
\beq
\label{Hessian-one-mode-N-1}
(L_{\pm} a)_n = [1-n + 2 \min(n,1)] a_n \pm [ 1 + \min(n,1,2-n) a_{2-n} + (n^2 - 1) \omega a_n, \quad n \in \mathbb{N}.
\eeq
We note that $L_+ = L(\omega)$ given by (\ref{L-plus-N-1}) and
$L_-$ is only different from $L_+$ at the diagonal entry at $n = 1$
(which is $0$ instead of $4$) and for the off-diagonal entries at $n = 0$ and $n = 2$
(which are $-1$ instead of $+1$).

For $L(\omega)$, the $2 \times 2$ block  at $n = 0$ and $n = 2$,
$$
\left[ \begin{array}{cc} 1 - \omega & 1 \\ 1 & 1 + 3 \omega \end{array} \right],
$$
is positive definite for all $\{ \omega_m \}_{m \in \mathbb{N}_+}$ with a simple zero eigenvalue
only arising for $\omega_1 = \omega_3 = 0$ and $\omega_2 = 2/3$. The diagonal entries of $L(\omega)$
for $n \geq 3$ are given by
$$
(n^2-1) \omega + 3 - n, \quad n \geq 3.
$$
These entries are negative if $\omega_1 = \omega_3 = 0$ with a simple zero eigenvalue and
strictly positive if $\omega_2 = 2/3$. For $\omega_m = (m-3)/(m^2-1)$ with $m \geq 4$,
the diagonal entries are simplified in the form:
$$
(n^2-1) \omega_m + 3 - n = \frac{(n-m) (nm - 3n - 3m + 1)}{m^2 - 1}, \quad n \geq 3.
$$
For $m = 4$ and $m = 11$, two eigenvalues are zero, six are negative, and all others are positive.
For $m = 5$ and $m = 7$, two eigenvalues are zero, one is negative, and all others are positive.
For $m = 6$, one eigenvalue is zero and all others are positive. For $m = 8, 9, 10$ and $m \geq 12$,
one eigenvalue is zero, finitely many are negative, and all others are positive.

Multiplying these counts for $L_+ = L(\omega)$ by a factor of $2$ due to the matrix operator $L_-$
and adding an additional zero entry of $L_-$ at $n = 1$ yields the assertion of the lemma.
\end{proof}

Among all stationary states bifurcating from the second eigenmode, we shall only consider the potential minimizers
of energy $H$ subject to fixed $Q$ and $E$. By Lemma \ref{lemma1-N-1}, this includes only two
branches bifurcating from $\omega_2$ and $\omega_6$. In both cases, we are able to
compute the number of negative eigenvalues of the operators $L_{\pm}$ denoted by $n(L_{\pm})$.

\begin{lemma}
Consider the two bifurcating branches in Theorem \ref{theorem-simple-N-1} for
$\omega_2$ and $\omega_6$. For every small $\epsilon \neq 0$,
we have $n(L_+) = 1$ and $n(L_-) = 0$. For each branch, $L_-$ has a double zero eigenvalue,
$L_+$ has no zero eigenvalue, and the rest of the spectrum of $L_+$ and $L_-$ is strictly positive
and is bounded away from zero.
\label{lemma2-N-1}
\end{lemma}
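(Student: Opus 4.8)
The strategy is to repeat, in the setting of the second eigenmode, the perturbative spectral analysis carried out in the proof of Lemma~\ref{lemma2}. I would substitute $\lambda = 1+\omega$, $\omega = \omega_m + \Omega$ and the bifurcating branch of Theorem~\ref{theorem-simple-N-1} into the Hessian formula (\ref{Hessian}) --- that is $A_n = \delta_{n1} + \epsilon(3\delta_{n0} - \delta_{n2}) + b_n$ for $\omega_2 = 2/3$ and $A_n = \delta_{n1} + \epsilon\delta_{n6} + b_n$ for $\omega_6 = 3/35$, with $(\Omega,b)$ the uniquely determined corrections of order $\epsilon^2$ --- and expand $L_\pm = L_\pm^{(0)} + \epsilon L_\pm^{(1)} + \epsilon^2 L_\pm^{(2)} + \mathcal{O}(\epsilon^3)$. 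The leading operator $L_+^{(0)}$ is the operator $L(\omega_m)$ analysed in the proof of Lemma~\ref{lemma1-N-1}: for $\omega_2$ it has a \emph{simple} zero eigenvalue with eigenvector $v = 3e_0 - e_2$ (the null vector of the resonant $2\times 2$ block at $n\in\{0,2\}$), and for $\omega_6$ it has a simple zero eigenvalue with eigenvector $v = e_6$; in both cases --- and here the value $m = 6$ is essential, since it is the only simple bifurcation point with $m\ge 4$ for which $L(\omega_m)$ has no negative diagonal entry for $n\ge 3$ --- the rest of the spectrum of $L_+^{(0)}$ is strictly positive and bounded away from zero. Likewise $L_-^{(0)}$ has a double zero eigenvalue, spanned by $\{e_1,\,3e_0+e_2\}$ for $\omega_2$ and by $\{e_1,\,e_6\}$ for $\omega_6$, with the remaining spectrum strictly positive and isolated from zero.

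For $L_-$ the claim then follows without any normal-form computation. For every small $\epsilon\neq 0$ the state $A$ is no longer a single-mode state, so $A$ and $MA$ are linearly independent, while the gauge symmetries (\ref{symmshift1}) and (\ref{symmshift2}) force $L_- A = L_- MA = 0$; hence $\dim\ker L_- \ge 2$. Since the nonzero part of the spectrum of $L_-^{(0)}$ is isolated from $0$, ordinary perturbation theory gives that for small $\epsilon$ exactly two eigenvalues of $L_-$ lie near $0$, and by the preceding observation both equal $0$; all other eigenvalues of $L_-$ remain strictly positive and bounded away from zero, so $n(L_-) = 0$ and $z(L_-) = 2$. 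The same isolation argument for $L_+$ shows that for small $\epsilon$ exactly one eigenvalue of $L_+$ lies near $0$ and the rest is strictly positive and bounded away from zero, so the whole statement reduces to locating that single small eigenvalue of $L_+$.

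To locate it I would truncate $L_+$ to the finite block on the modes coupling to $v$ through orders $\epsilon$ and $\epsilon^2$ (as in the finite truncations used in the proof of Lemma~\ref{lemma2}) and run second-order perturbation theory at the \emph{simple} zero eigenvalue of $L_+^{(0)}$. A short inspection of (\ref{Hessian}) shows that the $v$-diagonal entry of $L_+^{(1)}$ vanishes (there is no admissible pairing of summation indices producing a linear-in-$\epsilon$ diagonal contribution), so the small eigenvalue has the form $\lambda = \kappa\,\epsilon^2 + \mathcal{O}(\epsilon^3)$ with
\begin{equation*}
\kappa\,\langle v, v\rangle = \langle L_+^{(2)} v, v\rangle - \langle L_+^{(1)} v,\, R\,L_+^{(1)} v\rangle ,
\end{equation*}
where $R$ denotes the inverse of $L_+^{(0)}$ acting on the orthogonal complement of $v$. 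Computing $\kappa$ requires the $\mathcal{O}(\epsilon^2)$ corrections $\Omega$ and $(b_n)$ --- obtained from the invertible part of (\ref{sys-pert}) exactly as in the proofs of Theorems~\ref{theorem-simple} and~\ref{theorem-simple-N-1} --- together with the finitely many interaction coefficients entering $L_+^{(1)}$ and $L_+^{(2)}$. I expect assembling $\kappa$ and verifying $\kappa < 0$ to be the main obstacle: it is an explicit but lengthy algebraic computation. For the branch from $\omega_2$ it can be cross-checked against the explicit family (\ref{sol75})--(\ref{omega-lambda}) with the lower sign, identified in Remark~\ref{remark-state-minus}; for $\omega_6$ it has to be done directly. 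Once $\kappa < 0$ is confirmed, $L_+$ has exactly one simple small negative eigenvalue and no zero eigenvalue, i.e. $n(L_+) = 1$ and $z(L_+) = 0$, which together with the result for $L_-$ completes the proof.
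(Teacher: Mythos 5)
Your framework is exactly the paper's: expand the branch of Theorem~\ref{theorem-simple-N-1} to order $\epsilon^2$, identify the kernels of $L_\pm^{(0)}$ from Lemma~\ref{lemma1-N-1} (simple zero of $L_+^{(0)}$ at $3e_0-e_2$ resp.\ $e_6$, double zero of $L_-^{(0)}$, rest of the spectrum positive and isolated), dispose of $L_-$ by the gauge symmetries (\ref{symmshift1})--(\ref{symmshift2}) plus perturbation theory, and reduce everything to the sign of the single small eigenvalue of $L_+$. All of that is correct and matches the paper's proof, and your second-order formula is the right one: the first-order diagonal term $\langle L_+^{(1)}v,v\rangle$ does vanish in both cases, and if you evaluate $\kappa$ from the truncated blocks (the $4\times4$ block on $\{e_0,e_1,e_2,e_3\}$ for $\omega_2$ and the $3\times3$ block on $\{e_1,e_6,e_{11}\}$ for $\omega_6$, including the $\mathcal{O}(\epsilon^2)$ corrections $\Omega$ and $b$ to the branch) you get $\kappa=-\tfrac{14}{5}$ and $\kappa=-9$, which are precisely the paper's expansions (\ref{eigen-10}) and (\ref{eigen-11}).

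The one genuine shortfall is that you stop at ``I expect verifying $\kappa<0$ to be the main obstacle'' without doing it. That computation is not a formality: it is the entire content of the claim $n(L_+)=1$ for these branches, and its sign is exactly what separates a constrained minimizer from the situation where $n_c(L_+)$ would come out negative in (\ref{neg-index}) --- i.e.\ an inconsistency signalling an error elsewhere. A priori the zero eigenvalue of $L_+$ could shift either way (and indeed for the branch bifurcating from $\omega_5=2/15$ of the lowest eigenmode the analogous shift is positive at leading nonvanishing order, cf.\ (\ref{eigen-5})). So the proof is incomplete as written; to close it you must assemble the finitely many entries of $L_+^{(1)}$ and $L_+^{(2)}$ from (\ref{Hessian}) and the Lyapunov--Schmidt corrections and exhibit the negative values above. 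The cross-check you suggest against the explicit family (\ref{sol75})--(\ref{omega-lambda}) with the lower sign (Remark~\ref{remark-state-minus}) is available only for the $\omega_2$ branch; for $\omega_6$ there is no explicit solution and the direct computation is unavoidable.
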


\begin{proof}
By the second item of Lemma \ref{lemma1-N-1}, the corresponding operators $L_+$ and $L_-$ at the bifurcation point
$\epsilon = 0$ have respectively the simple and double zero eigenvalue,
whereas the rest of their spectra is strictly positive and is bounded away from zero.
By the two symmetries (\ref{symmshift1}) and (\ref{symmshift2}), the double zero eigenvalue
of $L_-$ is preserved for $\epsilon \neq 0$
and the assertion of the lemma for $L_-$ follows by the perturbation theory.

On the other hand, the simple zero eigenvalue of $L_+$ is not preserved for $\epsilon \neq 0$
and will generally shift to either negative or positive values. We will show that it shifts
to the negative values for $\epsilon \neq 0$, hence $n(L_+) = 1$ in both cases and the assertion of the lemma
for $L_+$ follows by the perturbation theory.

For $\omega_2$, the Lyapunov--Schmidt decomposition of Theorem \ref{theorem-simple-N-1} yields
power expansion
$$
\omega = \frac{2}{3} - \frac{7}{3} \epsilon^2 + \mathcal{O}(\epsilon^4)
$$
and $A_0 = 3 \epsilon + \mathcal{O}(\epsilon^3)$, $A_1 = 1 - 4 \epsilon^2 + \mathcal{O}(\epsilon^4)$,
$A_2 = -\epsilon + \mathcal{O}(\epsilon^3)$, $A_3  = \frac{3}{4} \epsilon^2 + \mathcal{O}(4)$,
and $A_n = \mathcal{O}(\epsilon^3)$ for $n \geq 4$. Similarly to the proof of Lemma \ref{lemma2},
we compute the $4$-by-$4$ block of the operator $L_+$ at $n \in \{0,1,2,3\}$ and truncate it up
to and including $\mathcal{O}(\epsilon^2)$ terms. The corresponding
matrix block denoted by by $\hat{L}_+$ is given by
$$
\hat{L}_+ = \left[ \begin{array}{cccc}
\frac{1}{3}+\frac{46}{3}\epsilon^2 & 10\epsilon & 1-\frac{37}{2}\epsilon^2 & -2\epsilon \\
10\epsilon & 4-32\epsilon^2 & -2\epsilon & 2\epsilon^2 \\
1-\frac{37}{2}\epsilon^2 & -2\epsilon & 3-9\epsilon^2 & 2\epsilon \\
-2\epsilon & 2\epsilon^2 & 2\epsilon & \frac{16}{3}-\frac{80}{3}\epsilon^2
\end{array} \right]
$$
We are looking for a small eigenvalue of $\hat{L} x = \lambda x$, where $x = (x_0,x_1,x_2,x_{3})^T$.
Assuming $\lambda = \mathcal{O}(\epsilon^2)$ and expressing $\{x_1,x_3\}$ in terms of $\{x_0,x_2\}$ yield
$$
\left\{ \begin{array}{l}
x_1 = -\frac{\epsilon}{2} (5 x_0 - x_2) + \mathcal{O}(\epsilon^3),\\
x_3 = \frac{3 \epsilon}{8} (x_0 - x_2) + \mathcal{O}(\epsilon^3). \end{array} \right.
$$
Substituting these expressions into the eigenvalue problem $\hat{L} x = \lambda x$
and truncating it up to and including the order of $\mathcal{O}(\epsilon^2)$, we obtain
$$
\left\{ \begin{array}{l}
\left( \frac{1}{3} - \frac{125}{12} \epsilon^2 - \lambda \right) x_0 + \left( 1 - \frac{51}{4} \epsilon^2 \right) x_2 = 0, \\
\left( 1 - \frac{51}{4} \epsilon^2 \right) x_0 + \left( 3 -\frac{43}{4} \epsilon^2 - \lambda \right) x_2 = 0. \end{array} \right.
$$
The small eigenvalue of this reduced problem is given by the expansion
\begin{equation}
\label{eigen-10}
\lambda = -\frac{14}{5}\epsilon^2 + \mathcal{O}(\epsilon^4).
\end{equation}
Hence $n(L_+) = 1$ due to the shift of the zero eigenvalue of $L_+$ at $\epsilon$ to $\lambda < 0$ for $\epsilon \neq 0$.

For $\omega_6$, the Lyapunov--Schmidt decomposition of Theorem \ref{theorem-simple-N-1} yields
power expansion
$$
\omega = \frac{3}{35} + \frac{9}{70} \epsilon^2 + \mathcal{O}(\epsilon^4),
$$
with $A_1 = 1 - \epsilon^2 + \mathcal{O}(\epsilon^4)$, $A_6 = \epsilon$,
$A_{11} = -\frac{7}{8} \epsilon^2 + \mathcal{O}(\epsilon^4)$,
$A_n = \mathcal{O}(\epsilon^3)$ for $n \geq 16$, where $A_n = 0$ for every $n \neq 5 \ell + 1$, $\ell \in \mathbb{N}$
by Lemma \ref{lem-nonlinear-N-1}. Similarly to the proof of Lemma \ref{lemma2},
we compute the $3$-by-$3$ block of the operator $L_+$ at $n = 1$, $n = 6$, and $n = 11$
and truncate it up to and including $\mathcal{O}(\epsilon^2)$ terms. The corresponding
matrix block denoted by $\hat{L}_+$ is given by
$$
\hat{L}_+ = \left[ \begin{array}{ccc} 4 - 8 \epsilon^2 & 8 \epsilon & -7 \epsilon^2 \\
8 \epsilon & 14 \epsilon^2 & 4 \epsilon \\
-5 \epsilon^2 & 4 \epsilon & \frac{16}{7} + \frac{150}{7} \epsilon^2 \end{array} \right]
$$
Looking at the small eigenvalue of $\hat{L} x = \lambda x$, where $x = (x_1,x_6,x_{11})^T$,
we can normalize $x_6 = 1$ and obtain
$$
x_1 = -2\epsilon + \mathcal{O}(\epsilon^3), \quad
x_{11} = -\frac{7}{4} \epsilon + \mathcal{O}(\epsilon^3),
$$
and
\begin{equation}
\label{eigen-11}
\lambda = -9\epsilon^2 + \mathcal{O}(\epsilon^4).
\end{equation}
Hence $n(L_+) = 1$ due to the shift of the zero eigenvalue of $L_+$ at $\epsilon$ to $\lambda < 0$ for $\epsilon \neq 0$.
\end{proof}

Finally, we consider the two constraints related to the fixed values of $Q$ and $E$ by
using the same computational formulas (\ref{neg-index}) and (\ref{D-matrix}).
Let $\tilde{A}$ denote the stationary state of the stationary equation (\ref{sys}) continued with respect to
two parameters $(\lambda,\omega)$. By the scaling transformation (\ref{symmscale}),
if the stationary state is given by (\ref{stac}) with real $A$,
then the stationary state is continued with respect to parameter $c > 0$ as
$$
\tilde{\alpha}_n(t) = c A_n e^{-i c^2 \lambda t + i n c^2 \omega t},
$$
hence $\tilde{A} = c A$, $\tilde{\lambda} = c^2 \lambda$, and $\tilde{\omega} = c^2 \omega$.
Substituting these relations into $\mathcal{\tilde{Q}}(\tilde{\lambda},\tilde{\omega}) = Q(\tilde{A})$ and
$\mathcal{\tilde{E}}(\tilde{\lambda},\tilde{\omega}) = E(\tilde{A})$ for $\lambda - \omega = 1$ yields
$$
\mathcal{\tilde{Q}}(\tilde{\lambda},\tilde{\omega}) = c^2 \mathcal{Q}_0(\omega) =
(\tilde{\lambda} - \tilde{\omega}) \mathcal{Q}_0(\tilde{\omega} (\tilde{\lambda} - \tilde{\omega})^{-1})
$$
and
$$
\mathcal{\tilde{E}}(\tilde{\lambda},\tilde{\omega}) = c^2 \mathcal{E}_0(\omega) = (\tilde{\lambda} - \tilde{\omega})
\mathcal{E}_0(\tilde{\omega} (\tilde{\lambda} - \tilde{\omega})^{-1}),
$$
where $\mathcal{Q}_0(\omega) = \mathcal{Q}(1+\omega,\omega)$ and $\mathcal{E}_0(\omega) = \mathcal{E}(1+\omega,\omega)$.
Substituting these representations into (\ref{D-matrix}), evaluating derivatives, and setting $c = 1$ yield
the computational formula
\begin{equation*}
D = \left[ \begin{array}{cc} \mathcal{Q}_0(\omega) - \omega \mathcal{Q}'_0(\omega) & - \mathcal{Q}_0(\omega) + (1+\omega) \mathcal{Q}'_0(\omega) \\
\mathcal{Q}_0(\omega) - \mathcal{E}_0(\omega) - \omega \left[ \mathcal{Q}'_0(\omega) - \mathcal{E}'_0(\omega) \right] &
-\mathcal{Q}_0(\omega) + \mathcal{E}_0(\omega) + (1+\omega) \left[ \mathcal{Q}'_0(\omega) - \mathcal{E}'_0(\omega)\right] \end{array} \right],
\end{equation*}
which can be used to compute $D$ for the normalized stationary state $A$ with $\lambda - \omega = 1$.
The following lemma confirms that
the two branches in Lemma \ref{lemma2-N-1} are indeed local minimizers of $H$ subject to fixed $Q$ and $E$.

\begin{lemma}
Consider the two bifurcating branches in Theorem \ref{theorem-simple-N-1} for
$\omega_2$ and $\omega_6$. For every small $\epsilon \neq 0$, the two branches
are local minimizers of $H$ subject to fixed $Q$ and $E$
with $n_c(L_+) = 0$ and $n(L_-) = 0$.
\label{lemma3-N-1}
\end{lemma}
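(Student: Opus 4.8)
The plan is to argue exactly as in the proof of Lemma \ref{lemma3}, combining the spectral data of Lemma \ref{lemma2-N-1} with the constrained-index formula (\ref{neg-index}). By Lemma \ref{lemma2-N-1}, along each of the two branches one has $n(L_+) = 1$, $z(L_+) = 0$, $n(L_-) = 0$, $z(L_-) = 2$, with the remaining spectrum of $L_\pm$ bounded away from zero. Since $z(L_+) = 0$, Theorem 4.1 of \cite{pel-book} applies, and (\ref{neg-index}) reduces the whole assertion to showing that the $2 \times 2$ matrix $D$ of (\ref{D-matrix}) has exactly one positive eigenvalue and no zero eigenvalue: then $n_c(L_+) = n(L_+) - p(D) - z(D) = 0$ and $z_c(L_+) = z(D) = 0$, and together with $n(L_-) = 0$ and the spectral gap this makes the constrained Hessian $L_+|_{[X_c]^\perp}$ nonnegative with kernel only in the two phase-rotation directions, i.e.\ the branch is a local minimizer of $H$ subject to fixed $Q$ and $E$.

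The second step is to evaluate $D$ along each branch from the reduced formula displayed immediately before the statement, which expresses the entries of $D$ through $\mathcal{Q}_0(\omega) = \mathcal{Q}(1+\omega,\omega)$, $\mathcal{E}_0(\omega) = \mathcal{E}(1+\omega,\omega)$ and their first $\omega$-derivatives. For this I would substitute the Lyapunov--Schmidt expansions recorded in the proof of Lemma \ref{lemma2-N-1}: for $\omega_2$, the data $A_0 = 3\epsilon + \mathcal{O}(\epsilon^3)$, $A_1 = 1 - 4\epsilon^2 + \mathcal{O}(\epsilon^4)$, $A_2 = -\epsilon + \mathcal{O}(\epsilon^3)$, $A_3 = \tfrac{3}{4}\epsilon^2 + \mathcal{O}(4)$ with $\omega = \tfrac{2}{3} - \tfrac{7}{3}\epsilon^2 + \mathcal{O}(\epsilon^4)$; for $\omega_6$, the data $A_1 = 1 - \epsilon^2 + \mathcal{O}(\epsilon^4)$, $A_6 = \epsilon$, $A_{11} = -\tfrac{7}{8}\epsilon^2 + \mathcal{O}(\epsilon^4)$ (with $A_n = 0$ unless $n \equiv 1 \pmod{5}$, by Lemma \ref{lem-nonlinear-N-1}) and $\omega = \tfrac{3}{35} + \tfrac{9}{70}\epsilon^2 + \mathcal{O}(\epsilon^4)$. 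Expanding $\mathcal{Q}_0 = \sum_n (n+1) A_n^2$ and $\mathcal{E}_0 = \sum_n (n+1)^2 A_n^2$ to order $\epsilon^2$, inverting $\omega(\epsilon)$ to evaluate the derivatives $\mathcal{Q}_0'$ and $\mathcal{E}_0'$, and inserting the result into the formula for $D$ yields, at $\epsilon = 0$, precisely the matrix (\ref{D-explicit}) for the $\omega_2$-branch (consistent with Remark \ref{remark-state-minus}, which identifies this branch with the exact family (\ref{sol75})--(\ref{omega-lambda}) for the lower sign), and a matrix with vanishing $(1,1)$-entry but strictly negative off-diagonal product for the $\omega_6$-branch. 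In both cases $\det D < 0$, hence $p(D) = 1$ and $z(D) = 0$; by continuity in $\epsilon$ the inertia of $D$ is unchanged for all small $\epsilon \neq 0$, which closes the argument.

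I do not anticipate a genuine obstacle, since the argument is structurally identical to that of Lemma \ref{lemma3}; the only mildly delicate point is carefully tracking the $\mathcal{O}(\epsilon^2)$ corrections to $\mathcal{Q}_0$ and $\mathcal{E}_0$ so as to fix the signs of the entries of $D$ --- in particular to confirm $\det D < 0$ rather than $\det D = 0$ on the $\omega_6$-branch, where the leading $(1,1)$-entry cancels. Finally one should record, as in the remark after Lemma \ref{lemma3}, that the extra conserved quantity $Z(\alpha)$ of (\ref{Z}) contributes no further constraint, because $Z$ vanishes on every stationary state (\ref{stac}) with $\omega \neq 0$ and both $\omega_2 = 2/3$ and $\omega_6 = 3/35$ are nonzero; so fixing $Q$ and $E$ accounts for all the relevant constraints.
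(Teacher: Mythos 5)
Your proposal follows essentially the same route as the paper: combine the spectral counts $n(L_+)=1$, $z(L_+)=0$, $n(L_-)=0$ from Lemma \ref{lemma2-N-1} with the index formula (\ref{neg-index}), and verify via the Lyapunov--Schmidt expansions of $\mathcal{Q}_0,\mathcal{E}_0$ that $D$ has one positive and one negative eigenvalue on each branch (the paper indeed finds $D_{11}=\mathcal{O}(\epsilon^2)$ with off-diagonal entries $70/3+\mathcal{O}(\epsilon^2)$ for the $\omega_6$-branch). One small wording slip: on that branch the off-diagonal product $D_{12}D_{21}=(70/3)^2+\mathcal{O}(\epsilon^2)$ is strictly \emph{positive}, and it is $\det D=D_{11}D_{22}-D_{12}D_{21}<0$ that follows; your stated conclusion is correct but the phrase ``strictly negative off-diagonal product'' should be fixed.
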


\begin{proof}
For $\omega_2$, we use power expansions in Lemma \ref{lemma2-N-1} and compute
\begin{eqnarray*}
\left\{ \begin{array}{l}
\mathcal{Q}_0(\omega) = A_0^2 + 2 A_1^2 + 3 A_2^2 + \dots = 2 - 4 \epsilon^2 + \mathcal{O}(\epsilon^4), \\
\mathcal{E}_0(\omega) = A_0^2 + 4 A_1^2 + 9 A_2^2 + \dots = 4 - 14 \epsilon^2 + \mathcal{O}(\epsilon^4),
\end{array} \right.
\end{eqnarray*}
so that
\begin{eqnarray*}
D = \left[ \begin{array}{cc} 6/7 + \mathcal{O}(\epsilon^2) & 6/7 + \mathcal{O}(\epsilon^2) \\
6/7 + \mathcal{O}(\epsilon^2) & -36/7 + \mathcal{O}(\epsilon^2) \end{array} \right].
\end{eqnarray*}
Note that the expression for $D$ agrees with the exact computations in Remark \ref{remark-pair}.
Therefore, $D$ has one positive and one negative eigenvalue, so that $n_c(L_+) = n(L_+) - 1 = 0$
by (\ref{neg-index}).

For $\omega_6$, we use power expansions in Lemma \ref{lemma2-N-1} and compute
\begin{eqnarray*}
\left\{ \begin{array}{l}
\mathcal{Q}_0(\omega) = 2 A_1^2 + 7 A_6^2 + \dots = 2 + 3 \epsilon^2 + \mathcal{O}(\epsilon^4), \\
\mathcal{E}_0(\omega) = 4 A_1^2 + 49 A_6^2 + \dots = 4 + 41 \epsilon^2 + \mathcal{O}(\epsilon^4),
\end{array} \right.
\end{eqnarray*}
so that
\begin{eqnarray*}
D = \left[ \begin{array}{cc} \mathcal{O}(\epsilon^2) & 70/3 + \mathcal{O}(\epsilon^2) \\
70/3 + \mathcal{O}(\epsilon^2) & -2726/81 + \mathcal{O}(\epsilon^2) \end{array} \right]
\end{eqnarray*}
has again one positive and one negative eigenvalue, hence $n_c(L_+) = n(L_+) - 1 = 0$
by (\ref{neg-index}).
\end{proof}

\section{Numerical approximations}

We confirm numerically that the stationary states (\ref{min-intro}) with (\ref{omega-lambda-intro})
have the same variational characterization in the entire existence interval for $p$ in $(0,2-\sqrt{3})$.
Therefore, they remain constrained minimizers of $H$ for fixed $Q$ and $E$.

Figure \ref{fig-1} shows the smallest eigenvalues of $L_{\pm}$ computed at the upper branch of solution
(\ref{min-intro}). In agreement with item (iii) in Lemma \ref{lemma2} for small $p$,
we have $n(L_+) = 1$ and $n(L_-) = 0$ for every $p \in (0,2-\sqrt{3})$.
The small parameter $p$ is related to the small parameter $\delta$ in Lemma \ref{lemma2}
by $p = \delta + \mathcal{O}(\delta^5)$, see Remark \ref{remark-state}. The dashed line shows
the asymptotic dependencies (\ref{eigen-7}), (\ref{eigen-8}), and (\ref{eigen-9})
for the small eigenvalues of $L_+$ and $L_-$.

\begin{figure}[h]
\centering
\includegraphics[width=0.45\textwidth]{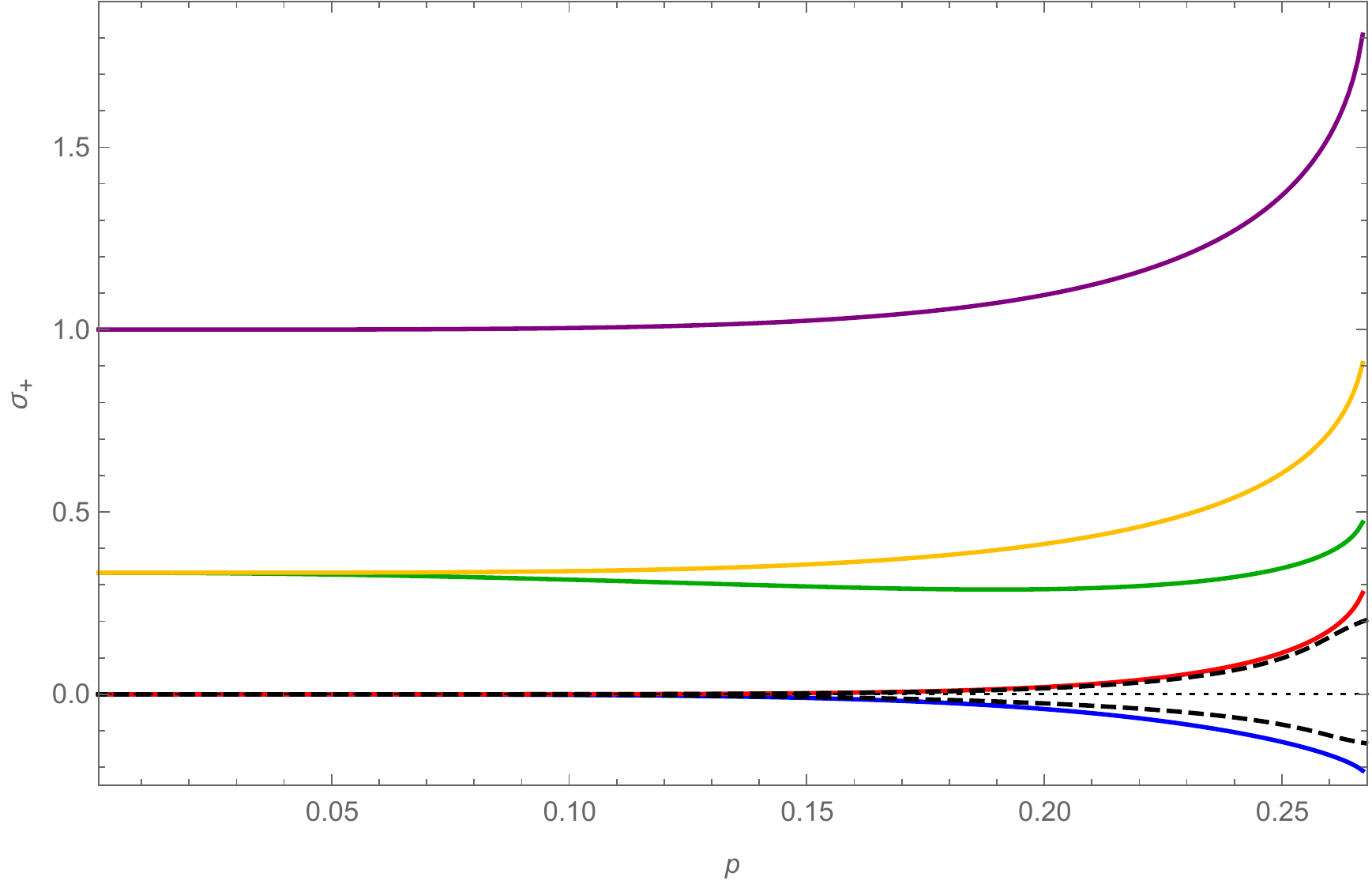}
\includegraphics[width=0.45\textwidth]{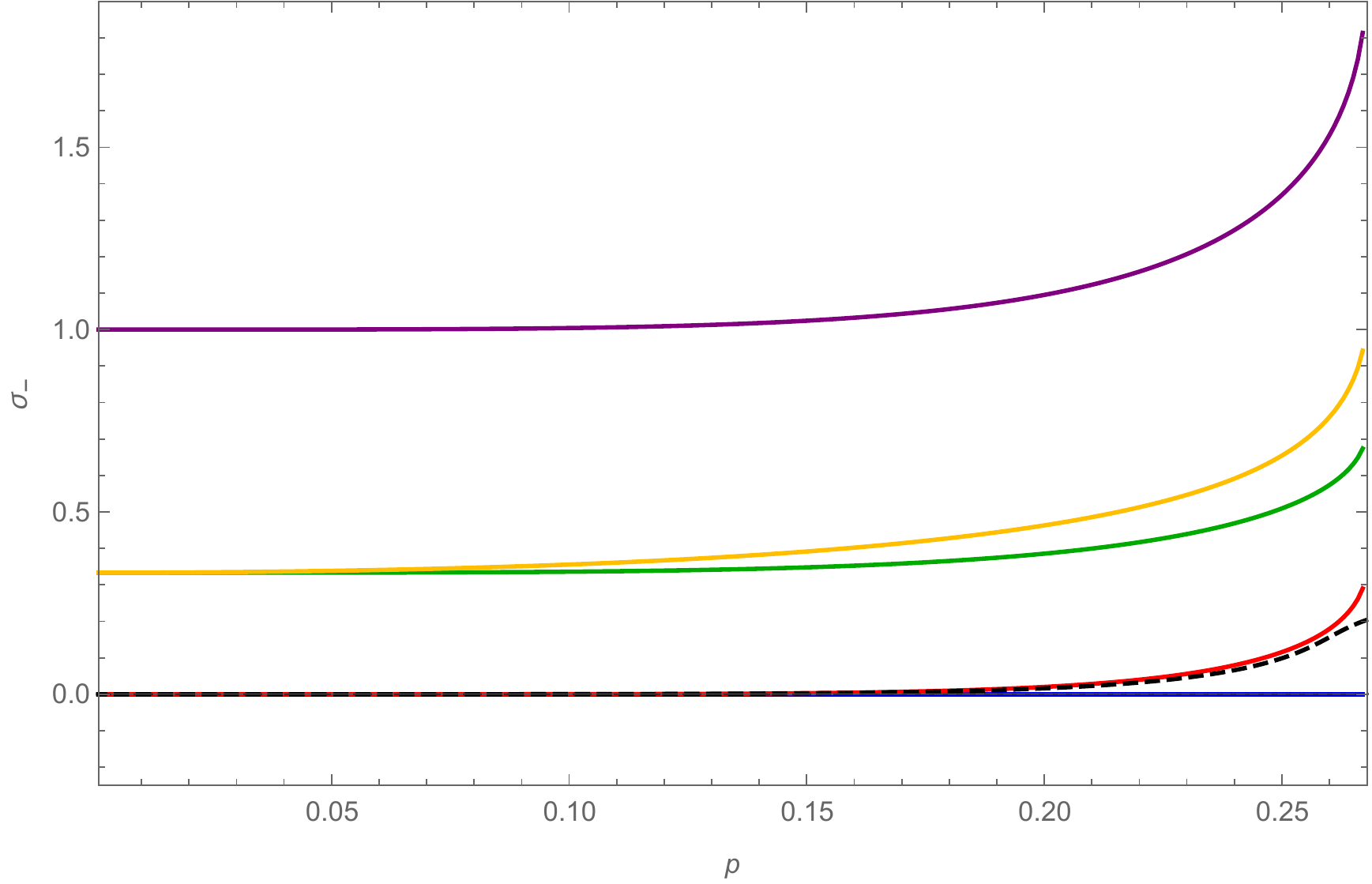}
\caption{The smallest eigenvalues of $L_+$ (left) and $L_-$ (right)
for the upper branch of the stationary state (\ref{min-intro}) with normalization $\lambda=1$. }
\label{fig-1}
\end{figure}

Figure \ref{fig-2} shows the smallest eigenvalues of $L_{\pm}$ computed at the lower branch of solution
(\ref{min-intro}). In agreement with Lemma \ref{lemma2-N-1} for small $p$,
we have $n(L_+) = 1$ and $n(L_-) = 0$ in the entire region of existence of the stationary state.
The small parameter $p$ is related to the small parameter $\epsilon$ in Lemma \ref{lemma2-N-1}
by $p = -\epsilon/2 + \mathcal{O}(\epsilon^3)$, see Remark \ref{remark-state-minus}.
The dashed line shows the asymptotic dependence (\ref{eigen-10}) for the small eigenvalue of $L_+$.

\begin{figure}[h]
\centering
\includegraphics[width=0.45\textwidth]{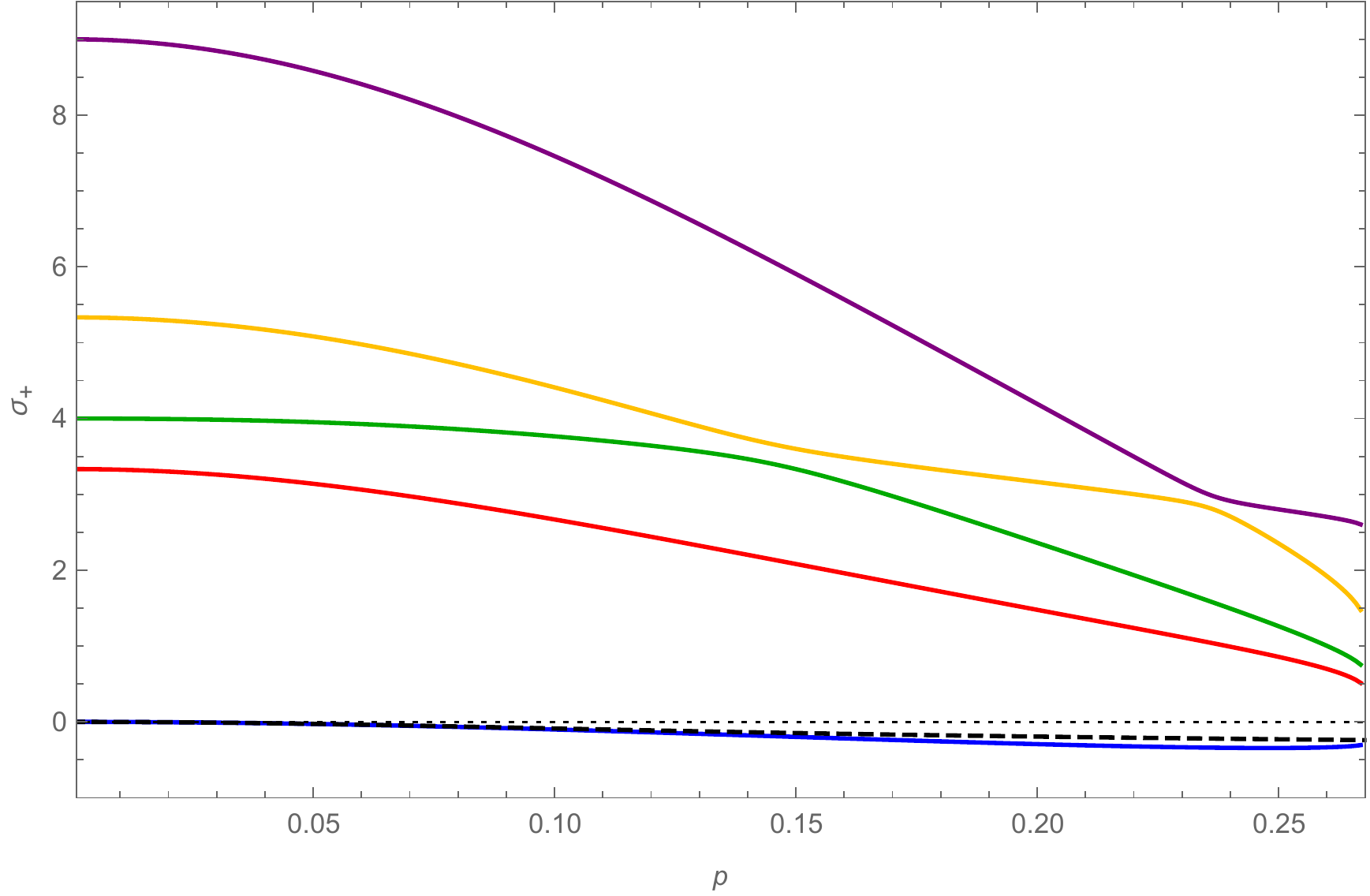}
\includegraphics[width=0.45\textwidth]{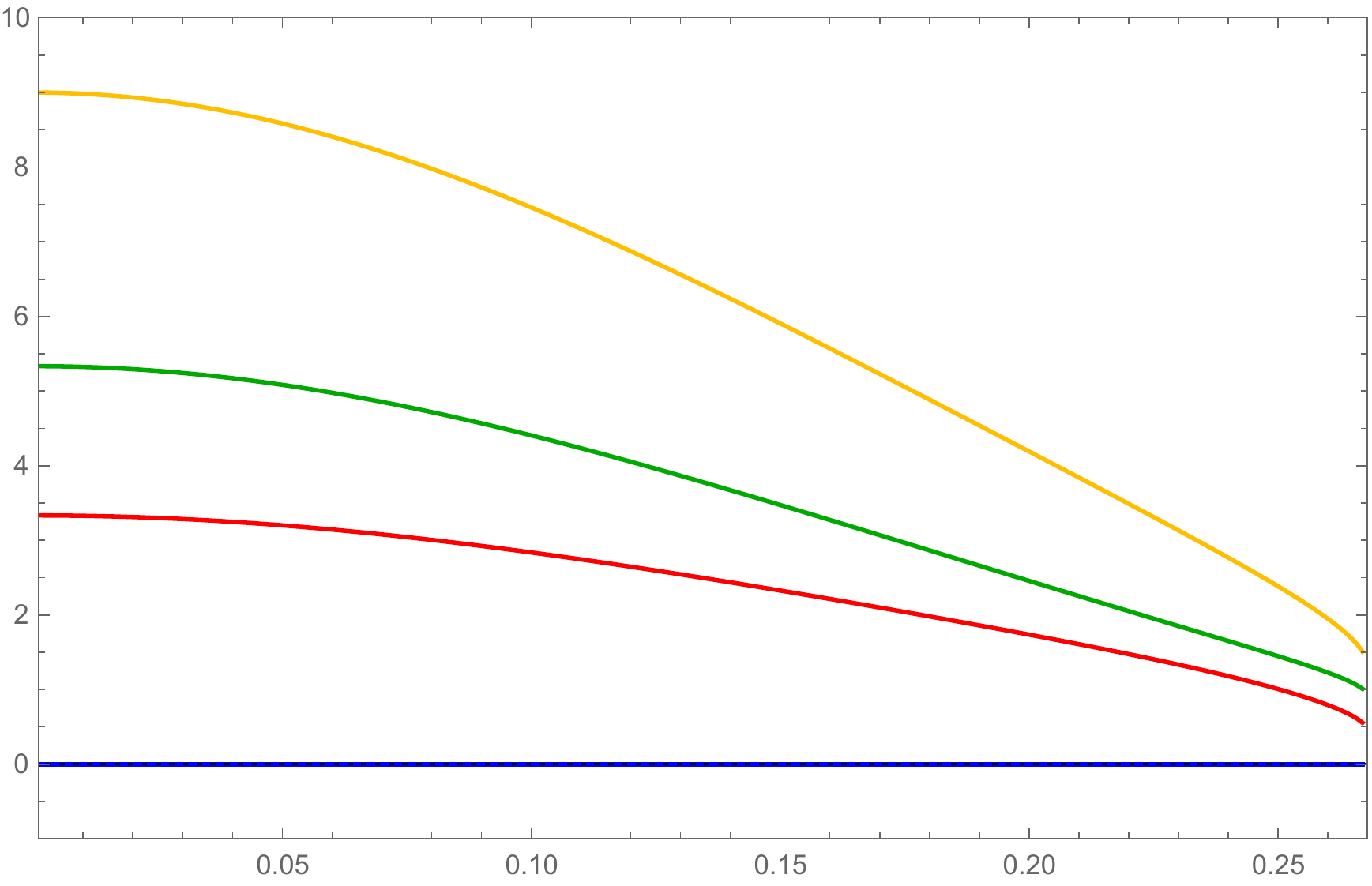}
\caption{The smallest eigenvalues of $L_+$ (left) and $L_-$ (right)
for the lower branch of the stationary state (\ref{min-intro}) with normalization $\lambda-\omega=1$.}
\label{fig-2}
\end{figure}

Figure \ref{fig-3} shows the smallest eigenvalues of $L_{\pm}$ computed at the stationary state
bifurcating from the second eigenmode at $\omega_6 = 3/35$. We use here parameter $\epsilon$
for continuation of the stationary state as in Lemma \ref{lemma2-N-1}.
In agreement with Lemma \ref{lemma2-N-1}, we have $n(L_+) = 1$ and $n(L_-) = 0$ for small $\epsilon$.
However, this result does not hold for larger values of $\epsilon$ far from the bifurcation point
because additional eigenvalues of $L_+$ and $L_-$ become negative eigenvalue for $\epsilon \approx 0.04$.
Therefore, the stationary state becomes a saddle point of $H$ for fixed $Q$ and $E$ when $\epsilon \gtrsim 0.04$.
The dashed line shows the asymptotic dependence (\ref{eigen-11}) for the small eigenvalue of $L_+$.

\begin{figure}[h]
\centering
\includegraphics[width=0.45\textwidth]{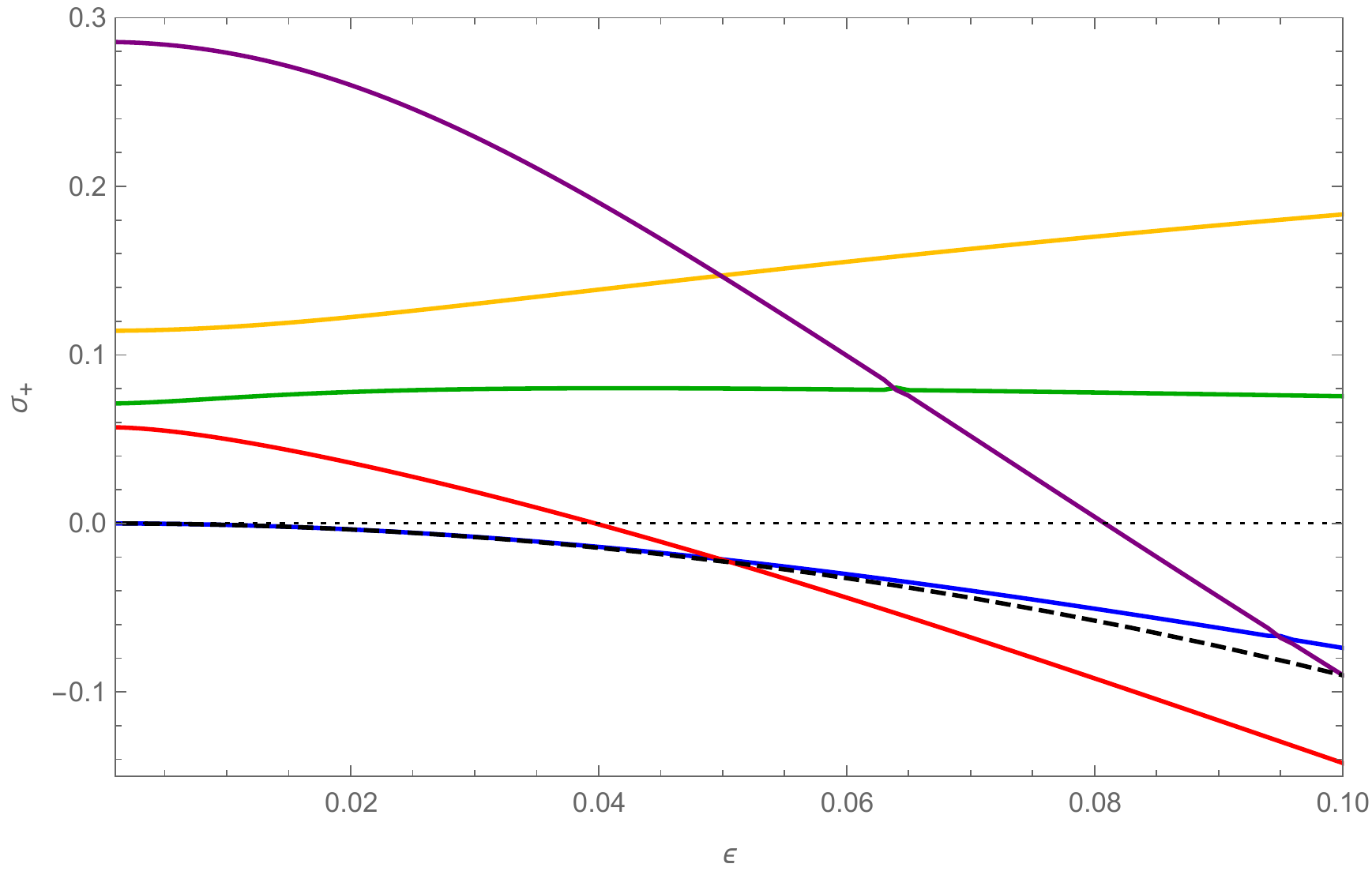}
\includegraphics[width=0.45\textwidth]{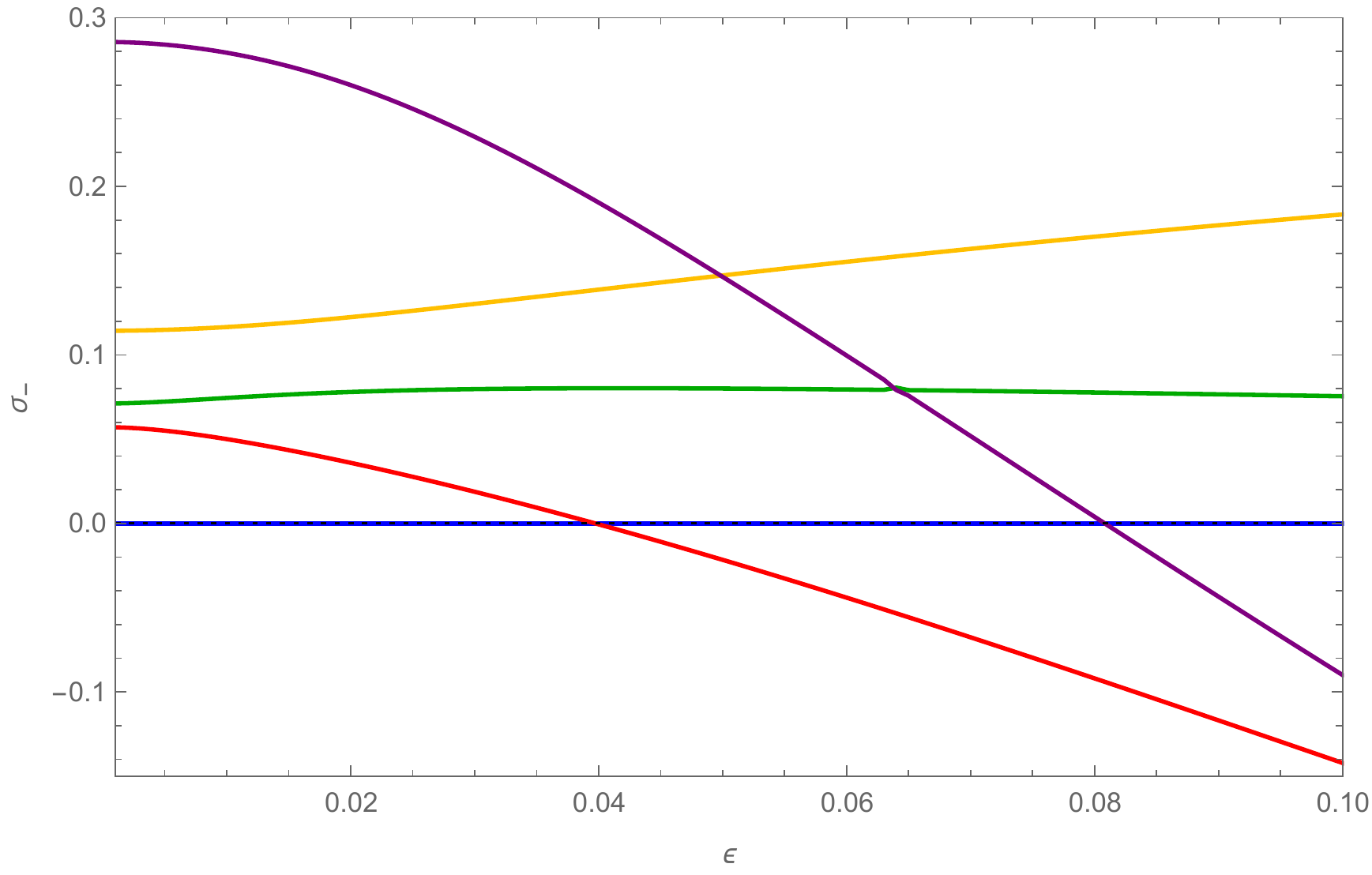}
\caption{The smallest eigenvalues of $L_+$ (left) and $L_-$ (right)
for the branch bifurcating from the second eigenmode at $\omega_6 = 3/35$ with normalization $\lambda-\omega=1$.}
\label{fig-3}
\end{figure}

\begin{rem}
The presence of zero eigenvalue in the spectrum of $L_+$ at $\epsilon \approx 0.04$ on Figure \ref{fig-3} singles
out new bifurcation of the stationary state along the branch. We have checked that the numerical results are stable
with respect to truncation. In the present time, it is not clear how to identify new solution branches which may
branch off at one or both sides of the bifurcation point.
\end{rem}

\end{document}